\documentclass[11pt]{article}
\usepackage{appendix}

\usepackage{amssymb,amsmath}

\usepackage[figuresright]{rotating}


\usepackage[applemac]{inputenc}

\usepackage{tikz}
\usepackage{pgfplots}

\usepgfplotslibrary{groupplots} 
\usetikzlibrary{pgfplots.groupplots}

\usepackage{graphicx}
\usepackage{dcolumn}
\usepackage{bm}
\usepackage{amscd,amsthm}
\usepackage{ifthen}
\usepackage{booktabs}
\usepackage{xfrac}
\usepackage{bm}
\usepackage{bbm}

\usepackage{hyperref}
\hypersetup{
    bookmarks=true,         
    unicode=false,          
    pdftoolbar=true,        
    pdfmenubar=true,        
    pdffitwindow=false,     
    pdfstartview={FitH},    
    pdfauthor={Reinhard Heckel},     
    pdfsubject={Subject},   
    pdfcreator={Reinhard Heckel},   
    pdfproducer={Producer}, 
    pdfnewwindow=true,      
    colorlinks=true,       
    linkcolor=red,          
    citecolor=green,        
    filecolor=magenta,      
    urlcolor=cyan           
}

\usepackage{enumerate}

\newtheorem{lemma}{Lemma}

\newtheorem{theorem}{Theorem}

\newtheorem{proposition}{Proposition}


\setlength{\oddsidemargin}{0pt}
\setlength{\evensidemargin}{0pt}
\setlength{\textwidth}{6.5in}
\setlength{\topmargin}{0in}
\setlength{\textheight}{8.5in}


\definecolor{DarkBlue}{rgb}{0,0,0.7}

\newcommand{\ab}{\va}
\newcommand{\abs}[1]{ \left| #1 \right| }
\newcommand{\twonorm}[1]{\left\|#1\right\|_{2}}
\newcommand{\onenorm}[1]{\left\|#1\right\|_{1}}
\newcommand{\infnorm}[1]{\left\|#1\right\|_{\infty}}

\newcommand{\opnorm}[1]{\left\|#1\right\|}

\newcommand{\val}{ {\bm \alpha} }

\newcommand{\T}{\mathcal S} 

\newcommand{\K}{{K}}

\newcommand{\derd}{\partial}

\usepackage{mathtools}

\usepackage{framed}

\newcommand{\Tnorm}{}
\newcommand\norm[2][\Tnorm]{\ensuremath{{\left\|#2\right\|}_{#1}}}

\newcommand\Tinnerprod{}
\newcommand{\innerprod}[3][\Tinnerprod]{\ifthenelse{\equal{#1}{}}{\ensuremath{\left<#2,#3\right>}}{\ensuremath{\left<#2,#3\right>_{#1}}}}

\newcommand\Tex{}
\newcommand\PR[2][\Tex]{
\ifthenelse{\equal{#1}{}}{{\mathbb{P}}\left[#2\right]}{\ensuremath{{\mathbb{P}}_{#1}\left[ #2\right]}}}
\newcommand\EX[2][\Tex]{
\ifthenelse{\equal{#1}{}}{{\mathbb E}\left[#2\right]}{\ensuremath{{\mathbb E}_{#1}\left[ #2\right]}}}

\newcommand\Var[2][\Tex]{
\ifthenelse{\equal{#1}{}}{{\mathrm Var}\left[#2\right]}{\ensuremath{{\mathrm Var}_{#1}\left[ #2\right]}}}

\newcommand{\inv}[1]{  {#1}^{ -1 } } 
\newcommand{\herm}[1]{{#1}^H} 
\newcommand{\transp}[1]{{#1}^T} 


\newcommand\defeq{\coloneqq}


\newcommand\vect[1]{\mathbf #1}

\newcommand{\va}{\vect{a}}

\newcommand{\ve}{\vect{e}}
\newcommand{\vf}{\vect{f}}  
\newcommand{\vg}{\vect{g}}  
\newcommand{\vh}{\vect{h}}

\newcommand{\vm}{\vect{m}}  
\newcommand{\vn}{\vect{n}}
  
\newcommand{\vp}{\vect{p}}  
\newcommand{\vq}{\vect{q}}
\newcommand{\vr}{\vect{r}}  
\newcommand{\vs}{\vect{s}}  

\newcommand{\vu}{\vect{u}}  
\newcommand{\vv}{\vect{v}}  
\newcommand{\vw}{\vect{w}}
\newcommand{\vx}{\vect{x}}  
\newcommand{\vy}{\vect{y}}  
\newcommand{\vz}{\vect{z}}

\newcommand{\mA}{\vect{A}}  
\newcommand{\mB}{\vect{B}} 
 
\newcommand{\mD}{\vect{D}}

\newcommand{\mF}{\vect{F}}
\newcommand{\mG}{\vect{G}}
\newcommand{\mH}{\vect{H}}
\newcommand{\mI}{\vect{I}}

\newcommand{\mL}{\vect{L}}

\newcommand{\mR}{\vect{R}}

\newcommand{\mU}{\vect{U}}
\newcommand{\mV}{\vect{V}}

\renewcommand{\S}{\mathcal S}

 \newcommand{\complexset}{\mathbb C}
 \newcommand{\reals}{\mathbb R}

   \newcommand{\mc}{\mathcal}

   \newcommand\comp[1]{ {#1}^c}

\newcommand{\setA}{\mathcal A}

\renewcommand{\L}{L}
\newcommand{\M}{R}
\newcommand{\N}{N}
\newcommand{\R}{M}

\renewcommand{\S}{S}

\newcommand{\FK}{F} 

\newcommand{\sign}{\mathrm{sign}}


\newcommand{\infdist}[1]{\left| #1\right|}


\renewcommand\Re[1]{\mathrm{Re}(#1)}
\renewcommand\Im[1]{\mathrm{Im}(#1)}

\renewcommand{\d}{d}

\newcommand{\minlet}[1]{\tilde #1} 

\newcommand\SRF{\mathrm{SRF}}

\newcommand\ind[1]{\mathbbm{1}{\{#1\}}}


\newcommand{\vct}[1]{\bm{#1}}
\newcommand{\mtx}[1]{\bm{#1}}
\newcommand{\C}{\mathbb{C}}

\newcommand\dist{F}
\newcommand\vectbr[1]{[#1]} 

\long\def\comment#1{}


\title{
Generalized Line Spectral Estimation via Convex Optimization
}

\author{
Reinhard Heckel\thanks{Department of Electrical Engineering and Computer Sciences, University of California, Berkeley, CA}  
\, and 
 Mahdi Soltanolkotabi\thanks{Ming Hsieh Department of Electrical Engineering, University of Southern California, Los Angeles, CA 
 }  
}

\usepackage[
url=false,isbn=false,doi=false,
backend=bibtex,
url=false,isbn=false,doi=false,
hyperref=auto,
style=alphabetic,
natbib=true,
sorting=nty,
firstinits=true,
maxbibnames=10,
]{biblatex}

\newbibmacro*{journal}{%
  \iffieldundef{journaltitle}
    {}
    {\printtext[journaltitle]{%
       \printfield[myplain]{journaltitle}%
       \setunit{\subtitlepunct}%
       \printfield[myplain]{journalsubtitle}}}}
\DeclareFieldFormat[article,inproceedings,incollection]{titlecase}{\MakeSentenceCase{#1}}

\bibliography{bibliography.bib} 

\newcommand{\cfar}{0.18} 

\newcommand{\mAA}{\mA}

\newcommand\indstart{1}
\newcommand\indsecond{2}
\newcommand\indend[1]{{#1}}

\newcommand\rindstart{0}
\newcommand\rindsecond{1}
\newcommand\rindend[1]{{#1-1}}

\begin{document}

\maketitle

\begin{abstract}
\emph{Line spectral estimation} is the problem of recovering the frequencies and amplitudes
of a mixture of a few sinusoids from equispaced samples. However, in a variety of signal processing problems arising in imaging, radar, and localization we do not have access directly to such equispaced samples. 
Rather we only observe a severely undersampled version of these observations through linear measurements. 
This paper is about such \emph{generalized line spectral estimation} problems. We reformulate these problems as sparse signal recovery problems over a \emph{continuously indexed} dictionary which can be solved via a convex program. We prove that the frequencies and amplitudes of the components of the mixture can be recovered perfectly from a near-minimal number of observations via this convex program. This result holds provided the frequencies are sufficiently separated, and the linear measurements obey natural conditions that are satisfied in a variety of applications.
\end{abstract}



\section{Introduction}
Many signals of interest in practical scenarios 
can be decomposed into short sums of simple objects. 
Signals arising in imaging applications, radar, and direction of arrival estimation  
are a few examples. 
Often, these signals can be represented as a sparse linear combination of elements in a \emph{continuously indexed} dictionary, with the index corresponding for example to the location of an object. 
Such sparse representations typically allow these signals to be recovered from very few observations. 

Traditional approaches to recover signals that are sparse in a continuously dictionary are spectral estimation techniques like Prony's method, MUSIC, and ESPRIT \cite{stoica_spectral_2005}. However, these techniques are only applicable to a few specific continuously indexed dictionaries. 

A different line of research in the area of compressive sensing has focused on recovery and denoising of signals that are sparse in a discretely indexed dictionary via greedy methods \cite{tropp_greed_2004} and convex programs~\cite{candes_robust_2006}. In principle, these approaches can also be applied to the recovery of signals that are sparse in continuously indexed dictionaries by discretizing the continuous parameter space. However, this discretization step induces a gridding error. While in practice---provided the grid is chosen sufficiently fine---the gridding error is negligible, fine discretization leads to dictionaries with extremely correlated, i.e.,  coherent, columns. 
Most of the theory of compressive sensing relies on the dictionary to be incoherent and therefore does not apply to fine grids.

In recent advances convex programming techniques have been shown to succeed for signals that have a sparse representation in continuously indexed dictionaries. Specifically, in~\cite{candes_towards_2014,candes_super-resolution_2013,fernandez-granda_super-resolution_2015} it is shown that the frequencies of a weighted superposition of complex sinusoids can be recovered perfectly from evenly spaced samples by solving a convex total-variation norm minimization program. This result holds provided that the frequencies of the sinusoids in the mixture are sufficiently separated. A similar statement continues to hold when only a random subset of the uniformly spaced samples are available~\cite{tang_compressed_2013}. 
Related convex programs have been studied for denoising~\cite{bhaskar_atomic_2012}, 
signal recovery from short-time Fourier measurements~\cite{aubel_theory_2015}, deducing time-frequency components of a signal with applications in super-resolution radar~\cite{heckel_super-resolution_2015}, and line spectral estimation from multiple measurements~\cite{li_off--grid_2016}. 

The goal of this paper is to characterize an even larger class of continuously indexed dictionaries, for which such convex programs provably succeed. This class includes several practically relevant dictionaries, such as 
those arising in super-resolution radar and multiple-input multiple-output (MIMO) radar. While our analytical results are for the noiseless case, we provide simulations showing that our approach is robust to noise. 
We want to highlight that a major advantage of the convex optimization based approach we use in this paper is its wide and straightforward applicability. Indeed, for a number of applications such as super-resolution MIMO radar, we are not aware of any other algorithm that provably succeeds under mild conditions.


\subsection{The generalized line spectral estimation problem}
In this section we formally introduce the problem considered in this paper. To this aim, consider a discrete vector $\vz \in \complexset^{2\N+1}$ consisting of equispaced sampled of a mixture of complex sinusoids
\begin{align}
\label{eq:defz}
\vz = \sum_{k=1}^{\S} b_k \vf(\nu_k),
\quad\text{with}\quad\vf(\nu)
\defeq \transp{[e^{-i2\pi \N\nu},e^{-i2\pi (\N-1)\nu},\ldots,  e^{i2\pi \N\nu}]}.
\end{align}
Here, $\{b_k\}$ are complex-valued  coefficients, $\{\nu_k\}$ are frequency parameters that can take any value in $[0,1]$, and $\S$ is the (unknown) number of components in the mixture. 
In this paper, the goal is to deduce these coefficients and frequency parameters from linear observations of this mixture. 
More specifically, we consider the problem of recovering $\{b_k\}$ and $\{\nu_k\}$ from measurements of the form
\begin{align}
\label{eq:syseqinw}
\vy = \mAA \vz \in \complexset^\R, 
\end{align}
where $\mAA \in \complexset^{\R \times \L}$ is a sensing matrix and $\L \defeq 2\N +1$. 
The focus of this paper is on the regime where $\R < \L$; thus $\mAA$ is not invertible. If $\mAA$ were invertible, recovery of the unknowns from $\vy$ reduces to recovery from $\vz$. 
The latter is known as a \emph{line spectral estimation} problem. 
We thus refer to the more general problem of recovering the parameters $\{(b_k,\nu_k)\}$ from $\vy = \mA \vz$ as the \emph{generalized line spectral estimation} problem. 

Line spectral estimation is a classical signal processing problem that can be efficiently solved with standard techniques such as Prony's method, MUSIC, and ESPRIT \cite{stoica_spectral_2005}. 
However, these standard spectral estimation techniques are in general not applicable to the generalized line spectral estimation problem, in particular not when $\R < \L$. Recently, there has been a surge of interest in using ideas based on convex optimization for solving the classical line spectral estimation problem and variants thereof. Specifically,~\cite{candes_towards_2014} show that a convex optimization scheme provably recovers the mixture coefficients and frequencies from $\vz$ provided the frequencies are sufficiently separated. As mentioned before, related convex programs have also been studied for diverse problems arising in applications from denoising to radar \cite{tang_compressed_2013, bhaskar_atomic_2012, aubel_theory_2015, heckel_super-resolution_2015,cai_robust_2016}. All these problems are special instances of the generalized line spectral estimation problem 
or the higher dimensional versions thereoff, 
each corresponding to a specific matrix $\mA$ in \eqref{eq:syseqinw}.  
We wish to emphasize that in particular many relevant problems in radar and localization correspond to the high dimensional version (see Equation~\eqref{measA} in Section~\ref{sec:higherdim})
of the one-dimensional relation \eqref{eq:syseqinw}. 

Note that the generalized line spectral estimation problem is a sparse signal recovery problem: 
due to $\vy = \sum_{k=1}^{\S} b_k \mA \vf(\nu_k)$, recovery of the parameters $\{(b_k,\nu_k)\}$ corresponds to the recovery of a signal that is sparse in the continuously indexed dictionary $\{\mA \vf(\nu),  \nu \in [0,1]\}$. 
As it turns out, many practically relevant \emph{continuously} indexed dictionaries that occur in applications such as localization and radar are of the form $\{\mA \vf(\nu),  \nu \in [0,1]\}$. 

The main goal of this paper is to show that for a large class of measurement matrices $\mA$, 
and therefore for a large class of continuously indexed dictionaries, 
a convex program provably recovers the coefficients $\{b_k\}$ and frequencies $\{\nu_k\}$ from the measurements $\vy$. Our results encompass those in \cite{tang_compressed_2013,aubel_theory_2015,heckel_super-resolution_2015}, but allow for many new measurement models which are of interest in diverse applications such as MIMO radar.


\subsection{What mixtures?}

Throughout we assume that the frequencies of the components of the mixture are sufficiently separated.
Specifically, we require that
\begin{align}
\label{eq:minsepintro1D}
\abs{\nu_k - \nu_{k'} } \geq \frac{2}{\N}, 
\quad
\text{ for all } k, k' \in \{\indstart,\ldots,\indend{\S}\},  k\neq k'.
\end{align}
Here, and throughout, $|\nu_k - \nu_{k'}|$ is the wrap-around distance on the unit circle. For example, $|3/4-1/2|=1/4$ but $|5/6-1/6|=1/3\neq 2/3$. 
We note that some form of separation between the frequencies is necessary for \emph{stable} recovery.
This follows from identification of the mixture components being extremely ill conditioned when the frequencies are clustered closely together, even when $\mA = \mI$. 
Specifically, if there are $\S'$ components with frequencies $\nu_k$ in an interval of length smaller than $\frac{2S'}{L}$, and $\S'$ is large, then in the presence of even a tiny amount of noise, stable recovery from $\vz$ is not possible (e.g.~see \cite[Thm.~1.1]{donoho1992superresolution} and \cite[Sec.~1.7]{candes_towards_2014}). Condition~\eqref{eq:minsepintro1D} 
allows us to have  $0.5\,\S'$ time-frequency shifts in an interval of length $\frac{2\S'}{\L}$, which is optimal up to the constant $0.5$.

\subsection{What measurements?} 
\label{whatmeas}
From \eqref{eq:defz} we know that the signal $\vz$ can be written as a sparse linear combination of at most $S$ atoms $\vf(\nu)$ in the set \mbox{$\setA \defeq \{ \vf(\nu), \nu \in [0,1]\}$}. Clearly, one can not hope to solve the generalized line spectral estimation for all measurement matrices even when there are many measurements. For instance, if all the measurement vectors (rows of $\mA$) are orthogonal to the elements of $\setA$ then these measurements contain no information about the signal and one can not possibly hope to do spectral estimation. 

We next describe a large class of measurement matrices $\mA$ that allow to recover the parameters $\{(b_k,\nu_k)\}$ from $\vy$. 
For $r=\rindstart,\rindsecond,\ldots,\rindend{\R}$, let $\va_r\in\C^\L$ denote the rows of $\mA$ (a.k.a.~the measurement vectors). We assume that the measurement vectors are drawn i.i.d.~according to a distribution $\dist$ obeying two natural conditions, specifically the isotropy and incoherence properties discussed below. 
We note that we do not actually require the rows of the measurement matrix $\mA$ to be i.i.d. 
Indeed, our results continue to hold if the empirical distribution of the rows of $\mA$ obeys the isotropy property and incoherence properties stated below. 
It is therefore sufficient if $\dist$ corresponds to a distribution that picks a vector $\va\in\complexset^\L$ uniformly at random from the rows of $\mA$.

\paragraph{Isotropy property:} The distribution $\dist$ obeys the isotropy property if the measurement vectors $\va\sim F$ obey
\begin{align}
\mathbb{E}_{\va\sim F}\big[\va\herm{\va}\big] = \frac{1}{R} \mI.
\label{eq:isotropy}
\end{align}
This condition ensures diversity in the measurements. Indeed, if all the measurement vectors are skewed towards certain directions, i.e.,~$\va_r\approx \va$ for some vector $\va$ then an increase in the number of measurements does not reveal additional information about the signal. In this paper we assume that the isotropy condition holds. We note however that our results hold more generally with the identity matrix replaced with any invertible covariance matrix.  


\paragraph{Incoherence property:} The distribution $\dist$ obeys the incoherence property with parameter $\mu$ if for all $\va\sim F$ we have
\begin{align}
\sup_{\vf \in \complexset^\L\colon \infnorm{\vf} \leq 1} \left|\innerprod{\vf}{\va}\right|^2
=\|\va\|_{\ell_1}^2\leq \frac{\L}{\R} \mu.
\label{eq:incoherenceintro1}
\end{align}
Small values of $\mu$ imply that the sensing vector $\va$ is incoherent with all atoms in the set 
\[
\setA = \{ \vf(\nu), \nu \in [0,1]\} \subset \{ \vf \in \complexset^\L\colon \infnorm{\vf} \leq 1 \}. 
\] 
We note that the incoherence condition can be relaxed in the sense that \eqref{eq:incoherenceintro1} does not need to hold deterministically and it is sufficient that it holds with high probability. In Section~\ref{sec:JLintro}, we discuss sensing matrices that are incoherent in a probabilistic sense. We note that $\mu\ge 1$ as by the isotropy condition we have
\[
\EX{
\sup_{\vf \in \complexset^\L\colon \infnorm{\vf} \leq 1} \left|\innerprod{\vf}{\va}\right|^2
}
=
\EX{\onenorm{\va}^2} 
\geq 
\EX{\twonorm{\va}^2}
=\text{trace}\left(\EX{\va\herm{\va}}\right) =\frac{\text{trace}(\mtx{I})}{\R} =\frac{\L}{\R},
\]
 which in turn implies $\mu \geq 1$.

Our first result (Theorem \ref{thm:incoherence} stated in Section \ref{sec:mainres}) guarantees that with high probability, the vector $\vz$ (and in fact the coefficient/frequency pairs $\{(b_k,\nu_k)\}$ of \eqref{eq:defz}) can be recovered from $\R$ linear measurements of the form $\vy=\mA\vz\in\mathbb{C}^\R$ by solving a convex program. This result holds provided that the minimum separation condition \eqref{eq:minsepintro1D} holds, and provided that the empirical distribution $\dist$ of the rows of the measurement matrix $\mA$ obeys the isotropy and incoherence property with number of measurements satisfying
\begin{align*}
\R \geq c \mu  \S\log^2(\L),
\end{align*}
with $c$ a fixed numerical constant.

Our result states that the more incoherent the sensing vectors are to the atoms in $\mc A$, the fewer samples we need for signal recovery. 
This results is optimal up to logarithmic factors, since the minimum number of measurements required for any method to succeed is at least on the order of $\S$. 
Our result parallels well known results pertaining to $\ell_1$-minimization for signal recovery for the case where $\mc A$ is a basis for $\complexset^{\L}$, see \cite{candes_robust_2006, candes_probabilistic_2011}, and \cite[Ch.~12]{foucart_mathematical_2013}. 
Remarkable about our result is that the intuition developed in   \cite{candes_robust_2006,candes_probabilistic_2011} carries over, even though $\setA$ is not a basis and contains infinitely many points. 
The only additional condition that we need compared to  \cite{candes_robust_2006,candes_probabilistic_2011}, \cite[Ch.~12]{foucart_mathematical_2013} 
is the minimum separation condition. We next discuss three particular classes of measurement modalities which obey the isotropy and incoherence properties.

\subsubsection{\label{sec:subsamorthog} Subsampled orthogonal matrices}
Let $\mU$ be an orthogonal basis obeying $\herm{\mU} \mU = \frac{\L}{\R} \, \mI$ and let $\dist$ be the distribution that drawn the sensing vectors independently and uniformly at random from the rows of $\mU$. In this case the distribution $\dist$ is isotropic and incoherent with incoherence parameter $\mu = \max_k  \frac{\R}{\L} \onenorm{\vu_k}^2$, where $\vu_k$ is the $k$-th row of $\mU$.  As a more specific example, assume $\mU = \sqrt{\frac{\L}{\R}} \mI$, where $\mI$ is the identity matrix. 
In this case, $\mu = 1$. 
This corresponds to the ``compressive sensing off the grid'' \cite{tang_compressed_2013} measurement model, which is the continuous version of the classic compressive sensing problem studied in the seminal work of Cand\`es, Romberg and Tao \cite{candes_robust_2006}. 

\subsubsection{Random sampling of sparse trigonometric polynomials}

Let us consider another example, not covered by previous results in the literature. 
Assume we have a continuous signal $z(t) = \sum_{k=\indstart}^{\indend{\S}} b_k e^{i2\pi \nu_k t }$, and suppose we take $\R$ independent measurement of $z$ by sampling $z$ at locations $t_r$ chosen uniformly at random in the interval $[0,\L-1]$. 
This is equivalent to sampling from a distribution $\dist$ where the $r$-th sensing vector is equal to
\[
\va_r = \frac{1}{\sqrt{\R}}
\herm{\mF} \vf(t_r/\L),
\]
with $t_r$ drawn independently and uniformly from $[0,\L-1]$. 
Here, $\mF\in \complexset^{\L\times \L}$ is the Discrete Fourier Transform (DFT) matrix with entries $[\mF]_{p,\ell} = \frac{1}{\sqrt{\L}} e^{-i2\pi p\ell/\L}$, and 
the entries of $\vf(t_r/\L)$ are given by $[\vf(t_r/\L)]_{p} = e^{i2\pi p \frac{t_r}{\L} }$ with $p=-\N,\ldots,\N$. 
Note that
\[
\EX{\va_r \herm{\va}_r} 
= \frac{1}{\R} \herm{\mF} \EX{\vf(t_r/\L) \herm{\vf}(t_r/\L) } \mF 
= \frac{1}{\R} \herm{\mF} \mF
=\frac{1}{\R} \mI,
\]
so that the distribution $\dist$ is isotropic. 
Moreover, the distribution is incoherent with coherence parameter $c \log^2(\L)$, where $c$ is a numerical constant. 
To see that latter, first note that
\[
\onenorm{\va_r} 
= \sqrt\frac{\L}{\R} 
\sum_{\ell = -\N}^\N
\left| \frac{1}{L} \sum_{p = -\N}^\N e^{-i2\pi p \frac{\ell - t_r}{\L}} \right|.
\]
The term in the absolute value is the Dirichlet kernel evaluated at $\frac{\ell - t_r}{\L}$. 
Since the $L_1$-norm of the Dirichlet kernel is upper bounded by $\tilde c \log(\L)$, the incoherence parameter can be bounded by $\mu = c \log^2(\L)$. It then follows from our result (i.e.~Theorem \ref{thm:incoherence}) that, with high probability, the frequencies $\nu_k$ can be perfectly recovered from the samples $y_r=z(t_r)$ for $r=\rindstart,\ldots,\rindend{\R}$, provided that 
$
\R \geq \S c' \log^4(\L)
$,
and the minimum separation condition holds. 
This generalizes the main result in \cite{rauhut_random_2007}, which assumes the frequencies $\nu_k$ lie on a grid with spacing $1/\L$. 

%
%


\subsubsection{\label{sec:JLintro}Random projection matrices}

We next note that our results also hold for a large class of random projection matrices, 
including matrices with i.i.d.~sub-Gaussian\footnote{A random variable $x$ is sub-Gaussian \cite[Sec.~7.4]{foucart_mathematical_2013} if its tail probability satisfies $\PR{|x| >t} \leq c_1 e^{-c_2 t^2}$ for constants $c_1, c_2>0$. 
} entries \cite[Lem.~9.8]{foucart_mathematical_2013}. For instance, matrices with i.i.d.~Gaussian entries and i.i.d.~entries that are uniformly distributed on $\{-1/\sqrt{\R}, 1/\sqrt{\R} \}$. 
While those matrices are not deterministically incoherent in the sense of of definition~\eqref{eq:incoherence}, 
they are in a probabilistic sense incoherent with respect to \emph{any} vector and in particular with respect to the vectors $\{\vf \in \complexset^\L\colon \infnorm{\vf} \leq 1 \}$. 

We can also guarantee exact signal recovery when using such matrices (see Theorem~\ref{thm:ISG} in Section \ref{sec:mainres} for a formal statement). 
Specifically, our result guarantee that with high probability, the pairs $\{(b_k,\nu_k)\}$ and in turn the signal $\vz$ in~\eqref{eq:defz} can be perfectly recovered by a convex program, 
provided that the minimum separation condition \eqref{eq:minsepintro1D} holds and the number of measurements satisfies
\[
\R \geq c \S \log\left(\L\right),
\]
where $c$ is a fixed numerical constant. 

\subsection{\label{sec:higherdim}Extensions to higher dimensions}
The results in Section~\ref{whatmeas} generalize to higher dimensions. 
This is particularly important, as many applications are two or three dimensional generalized line spectral estimation problems. 
Specifically, consider the signal
\begin{align}
\vz = \sum_{k=\indstart}^{\indend{\S}} b_k \vf(\vr_k),
\label{eq:syseqinwdim}
\end{align}
where the complex valued coefficients $\{b_k\}$ and the vectors $\{\vr_k\}$ in $[0,1]^\d$ containing  continuous frequencies are the parameters we wish to estimate.  
Here, the entries of $\vf(\vr)$ are given by $[\vf(\vr)]_{\vp} = e^{i2\pi \innerprod{\vp}{\vr} }$ where $\vp$ an integer vector with $\ell$-th entry, $\ell=1,\ldots,\d$, given by $[\vp]_\ell=-\N,\ldots,\N$. 
We observe measurements of the form 
\begin{align}
\label{measA}
\vy = \mAA \vz\in\mathbb{C}^\R,
\end{align}
 where $\mAA \in \complexset^{\R \times \L^\d}$ is the measurement matrix with $\L = 2\N +1$, as before. 
In the $\d$-dimensional case, we require the minimum separation condition
\begin{align}
\label{eq:minsephdim}
\max_{\ell=1,\ldots,\d} |[\vr_k]_\ell - [\vr_{k'}]_\ell | \geq \frac{c(d)}{\N}, 
\end{align}
with $c(d)$ a constant only depending on the dimension $d$, to hold. 
We note that the minimum separation condition~\eqref{eq:minsephdim} only requires the $\vr_k$ to be separated in one coordinate, but not in all coordinates simultaneously. Motivated by applications such as MIMO radar in this paper we only focus on $\d\leq 3$. 
In these cases one can assume $c(d)=5$. However, our results easily extend to higher dimensions ($d>3$). A variety of signal processing problems arising in imaging, radar, and localization can be phrased as a generalized line spectral estimation problem in higher dimensions. 
We next discuss two particular radar examples.

\subsubsection{\label{sec:superres}Super-resolution radar}
A radar aims to identify the relative velocities and distances of moving targets, often modeled as point scatterers. 
To this end, the radar transmits a suitably chosen probing signal, and in response receives a weighted superposition of delayed and Doppler shifted versions of the probing signal. 
Estimating the delays and Doppler shifts corresponding to the different targets from the response signal at the receiver yields the relative distances and velocities of the targets. 
As shown in \cite{heckel_super-resolution_2015}, the response to a band-limited probing signal observed within a certain time interval is described by the input-output relation
\begin{align}
\label{eq:tfshiftssuperresradar}
\vy = \sum_{k=\indstart}^{\indend{\S}} b_k \mc F_{\nu_k} \mc T_{\tau_k} \vx,
\end{align}
where $\vx \in \complexset^\L$ is the probing signal  
and $\mc F_\nu$ and $\mc T_\tau$ are time and frequency shift operators\footnote{The time and frequency shift operators are formally defined as $[\mc F_\nu \vx_p]_p = x_p e^{i2\pi p \nu}$ and $\mc T_\tau = \herm{\mF} \mc F_\tau \mF$ where $\mF$ is a one dimensional Discrete Fourier Transform (DFT) matrix, with the entry in the $k$-th row and $r$-th column given by $[\mF]_{k, r} \defeq \frac{1}{\sqrt{\L}} e^{-i2\pi \frac{kr}{\L}}, k,r = 0,\ldots, \L-1$.}.
As shown in \cite[Sec.~6]{heckel_super-resolution_2015}
\begin{align}
\label{eq:tfshiftGF}
\mc F_{\nu}  \mc T_{\tau} \vx 
=
\mG_{\vx} \herm{\mF} 
\vf(\vr), \quad \vr = \transp{[\tau,\nu]},
\end{align}
where $\mG_{\vx} \in \complexset^{\L \times \L^2}$ is the Gabor matrix defined by 
\begin{align}
\label{eq:defgabormtx}
[\mG_{\vx}]_{p, (k,\ell)} \defeq x_{p- \ell}  e^{i2\pi \frac{k p}{\L}}, \quad k,\ell, p = -\N,\ldots,\N, 
\end{align}
and $\herm{\mF}$ is the inverse 2D discrete Fourier transform matrix with the entry in the $(k,\ell)$-th row and $(r,q)$-th column given by $[\herm{\mF}]_{(k,\ell), (r,q)} \defeq \frac{1}{\L^2} e^{i2\pi \frac{qk + r\ell}{\L}}$. 
Note that the entries $x_\ell$ of the vector $\vx$ in \eqref{eq:defgabormtx} are assumed to be $\L$-periodic, therefore $x_\ell = x_{(\ell \mod \L)}$. 
Thus,~\eqref{eq:tfshiftssuperresradar} 
is of the form $\vy = \mA \vz$ with $\mA = \mG_{\vx} \herm{\mF}$. 
It is shown in~\cite{heckel_super-resolution_2015} that for a Gaussian probing signal $\vx$, the triplets $(b_k,\tau_k,\nu_k)$ can be recovered perfectly with high probability, provided the $[\tau_k,\nu_k]$ obey the minimum separation condition \eqref{eq:minsephdim} and the number of measurements satisfies $\L \geq c \S \log^3(\L)$. 
While the matrix $\mA = \mG_\vx \herm{\mF}$ 
does not obey the exact conditions in Section \ref{whatmeas}, 
the result in \cite{heckel_super-resolution_2015} can be derived from the mathematical framework developed in this paper (in particular from Lemma~\ref{prop:dualpolynomial} in Section \ref{sec:proofs}). 

We finally remark that previous works have considered the case where the $[\tau_k,\nu_k]$ lie on a coarse grid~\cite{herman_high-resolution_2009,heckel_identification_2013} (specifically a grid with spacing $(1/\L,1/\L)$).
Theorem 1 in \cite{krahmer_suprema_2014} guarantees that in this case, $\ell_1$-minimization provably succeeds.

\subsubsection{\label{sec:mimo}Super-resolution MIMO radar}

A standard or Single-Input Single-Output (SISO) radar can determine the relative distances and velocities of the targets by estimating the induced delays and Doppler shifts. 
A SISO radar is however incapable of determining the actual positions of the objects with a single transmit/receive antenna. 
MIMO radar systems \cite{bliss_multiple-input_2003,li_mimo_2007} 
use multiple antennas to transmit multiple probing signals simultaneously and record the reflections from the targets with multiple receive antennas. 
As a result a MIMO radar can resolve the relative angles along with the relative distances and velocities of targets. 
Let $\vx_j$ be the samples of the probing signal transmitted by the $j$-th transmit antenna, with $j = 0,\ldots,N_T-1$, where $N_T$ is the number of transmit antennas. 
As explained in more detail in \cite{heckel_super-resolution_2016b}, the response to those probing signals, recorded at the $r$-th antenna, $r=0,\ldots, N_R-1$, where $N_R$ is the number of receive antennas, is given by 
\begin{align}
\vy_r
&= 
\sum_{k=\indstart}^{\indend{\S}}
b_k' 
e^{i2\pi r N_T \beta_k } 
\sum_{j=0}^{N_T-1}
e^{i2\pi j \beta_k } 
\mc F_{\nu_k}
\mc T_{\tau_k}
\vx_j,
\label{eq:periorelMIMO}
\end{align}
where the parameters $\tau_k,\nu_k,\beta_k \in [0,1]$ correspond to delays, Doppler shifts, and angle. 
Again, by setting $\vy \defeq \transp{[ \transp{\vy}_{\rindstart},\ldots, \transp{\vy}_{\rindend{\R}} ] }$ and defining $\mAA \in \complexset^{N_R \L \times N_R N_T \L^2}$ appropriately, the input-output relation \eqref{eq:periorelMIMO} obeys \eqref{eq:syseqinwdim} with $\d=3$, as detailed in Section \ref{sec:proofmimo} (here, we assume for notational simplicity that $N_R N_T = \L$). In this paper, we prove that with high probability the parameters $(\beta_k,\tau_k,\nu_k)$ can be recovered perfectly, provided the minimum separation condition~\eqref{eq:minsephdim} holds, from a near-minimal number of measurements. 

\subsection{Outline}
In Section \ref{sec:recatomic} we present an algorithm for the generalized line spectral estimation problem, and Section \ref{sec:mainres} contains the corresponding formal recovery guarantees. 
In Section \ref{sec:discretesuperes} we prove results demonstrating that if the frequencies lie on an arbitrarily fine grid, $\ell_1$-minimization techniques provably succeed for generalized line spectral estimation. 
We illustrate the effectiveness of our methods with various numerical experiments in Section \ref{sec:numres}. Finally, we prove our results in Sections~\ref{sec:proofs} and~\ref{GordonProofs}.



\subsection{Notation}
We provide a brief summery of some of the notation used throughout the paper. 
The superscripts $\transp{}$ and $\herm{}$ stand for transposition and Hermitian transposition, respectively. 
For the vector $\vx$, $[\vx]_q$ denotes its $q$-th entry, $\twonorm{\vx}$ and $\onenorm{\vx}$ its $\ell_2$- and $\ell_1$-norm, and $\norm[\infty]{\vx} = \max_q |x_q|$ its largest entry.  
For the matrix $\mA$, $[\mA]_{ij}$ designates the entry in its $i$-th row and $j$-th column, 
 $\norm[]{\mA} \defeq\;$ $\max_{\norm[2]{\vv} = 1  } \norm[2]{\mA \vv}$ its spectral norm, and $\norm[F]{\mA} \defeq (\sum_{i,j} |[\mA]_{ij}|^2 )^{1/2}$ its Frobenius norm. 
The identity matrix is denoted by $\mI$. 
For convenience, we will frequently use a two- or three-dimensional index for vectors and matrices, e.g., we write $[\vg]_{(k,\ell)}, k,\ell=-\N,\ldots,\N$ for 
$
\vg = [g_{(-\N,-\N)}, g_{(-\N,-\N+1)}, \ldots,g_{(-\N,\N)},g_{(-\N+1,-\N)},\ldots,\allowbreak g_{(\N,\N)}]^T. 
$
For the set $\T$, $|\T|$ designates its cardinality and $\comp{\T}$ is its complement. 
Finally, $c,\tilde c, c', c_1,\allowbreak c_2,\ldots$ are numerical constants that can take on different values at different occurrences. 


\section{\label{sec:recatomic}Generalized line spectral estimation via atomic norm minimization}

In this section, we present the recovery algorithm considered in this paper. 
We consider the $\d$-dimensional case, and state formal results for $\d \leq 3$. 
Recall that recovery of the coefficients $\{b_k\}$ and frequencies $\{\vr_k\}$ from $\vz$ in~\eqref{eq:syseqinwdim} is a $\d$-dimensional line spectral estimation problem that can be solved
with standard spectral estimation techniques such as Prony's method \cite{stoica_spectral_2005}, \cite[Ch.~2]{gershman_space-time_2005}. Therefore,  it suffices to recover $\vz \in \complexset^{\L^\d}$ from the linear measurements $\vy=\mA\vz \in \complexset^{\R}$. 
To this aim, we use the fact that $\vz$ is a sparse linear combination of  atoms from the set $\setA \defeq \{ \vf(\vr), \vr \in [0,1]^\d\}$. 
A regularizer that promotes such a sparse linear combination is the atomic norm induced by these signals~\cite{chandrasekaran_convex_2012}, defined as 
\[
\norm[\setA]{\vz} 
\defeq \inf_{b_k \in \complexset, \vr_k \in [0,1]^\d} \left\{ \sum_k |b_k| \colon \vz = \sum_k b_k \vf(\vr_k) \right\}. 
\]
We estimate $\vz$ by solving the basis pursuit type atomic norm minimization problem   
\newcommand{\AN}{\mathrm{AN}}
\begin{align}
\AN(\vy) \colon \;\; \underset{\minlet{\vz}  }{\text{minimize}} \,  \norm[\setA]{\minlet{\vz} } \; \text{ subject to } \; \vy = \mAA \minlet{\vz}.
\label{eq:primal}
\end{align}
To summarize, we estimate $\{(b_k,\vr_k)\}$ from $\vy$ by 
\begin{enumerate}
\item solving $\AN(\vy)$ in order to obtain $\vz$, 
\item estimating $\{\vr_k\}$ from $\vz$ by solving the corresponding line spectral estimation problem, and 
\item solving the linear system of equations 
$
\vy = \sum_{k=\indstart}^{\indend{\S}} b_k \mAA \vf(\vr_k)
$ for the $b_k$. 
\end{enumerate}

We remark that the $\{\vr_k\}$ may be obtained more directly from a solution to the dual of \eqref{eq:primal};  
see \cite[Sec.~3.1]{bhaskar_atomic_2012}, \cite[Sec.~4]{candes_towards_2014}, \cite[Sec.~2.2]{tang_compressed_2013}, and \cite[Sec.~6]{heckel_super-resolution_2015} for details on this approach applied to related problems. 
This approach does, however, require identification of the zeros of a $\d$-dimensional trigonometric polynomial. 

Since computation of the atomic norm involves taking the infimum over infinitely many parameters, finding a solution to $\AN(\vy)$ may appear to be daunting. 
For the 1D case (i.e., $\d=1$), the atomic norm can be characterized in terms of linear matrix inequalities (LMIs) \cite[Prop.~2.1]{tang_compressed_2013}. This characterization is based on the Vandermonde decomposition lemma for Toeplitz matrices,  
and allows to formulate the atomic norm minimization program as a semidefinite program  that can be solved in polynomial time. While this lemma generalizes to higher dimensions \cite[Thm.~1]{yang_vandermonde_2015}, there is no polynomial bound on the dimension of the corresponding LMIs. Nevertheless, using~\cite[Thm.~1]{yang_vandermonde_2015} one can obtain a convex relaxation of $\AN(\vy)$, which is a Semidefinite Program (SDP) and can be solved in polynomial time. Similarly, a solution of the dual of $\AN(\vy)$ can be found with a SDP relaxation. 

Since the computational complexity of those SDPs is often  prohibitive in practice, in particular for $\d>1$, we do not detail these SDPs. Instead, we show in Section \ref{sec:discretesuperes} that the frequencies $\{\vr_k\}$ of the mixtures can be recovered on an arbitrarily fine grid via $\ell_1$-minimization. While this approach introduces a gridding error, the grid may be chosen sufficiently fine for the gridding error to be negligible compared to the error induced by additive noise (in practice, there is typically additive noise). 

\section{\label{sec:mainres} Main results}

In this section we state our main results which show that under mild conditions, the solution to $\AN(\vy)$ in \eqref{eq:primal} yields the unknown signal $\vz$. As discussed in the previous section, from $\vz$ we can then extract the coefficients and frequency parameters $\{(b_k,\vr_k)\}$. 
In the introduction, we informally stated our results for the 1D case. 
As mentioned in the introduction, those results continue to hold for higher dimensions. In an effort to minimize redundancy and for the sake of concreteness, we state and prove our results for isotropic and incoherent matrices for the 3D case. We choose the 3D case since the MIMO radar problem is a three dimensional problem, and  therefore the corresponding proof requires deriving technical results for the 3D case. 
We state results for three different classes of random matrices. 

\subsection{Isotropic and incoherent matrices}

We start by stating our main result for a matrix $\mA$ obeying isotropy and incoherence conditions. 
\begin{theorem}
Let $\mA \in \complexset^{\R \times \L^3}$ with $\L \geq 1024$, be a random matrix with rows $\va_r$ chosen independently from a distribution obeying the isotropy and incoherence properties
\begin{align}
\EX{\va_r \herm{\va}_r } = \frac{1}{\R} \mI\quad\text{and}\quad\sup_{\vf \in \complexset^{\L^3} \colon \infnorm{\vf} \leq 1}
\left|\innerprod{\vf}{\va_r}\right|^2
\leq \frac{\L^3}{\R} \mu,
\label{eq:incoherence}
\end{align}
for some fixed $\mu\geq 1$. Furthermore, assume we have $\R$ measurements $\vy = \mAA \vz$ from an unknown signal of the form $\vz = \sum_{k=\indstart}^{\indend{\S}} b_k \vf(\vr_k)$. Also assume that the signs $\sign(b_k)$ of the coefficients $b_k\in \complexset$ are chosen independently from symmetric distributions on the complex unit circle, and the vectors $\vr_k = (\beta_k,\tau_k,\nu_k) \in [0,1]^3$ obey the minimum separation condition
\begin{align}
\max( |\beta_k - \beta_{k'}|, |\tau_k - \tau_{k'}|, |\nu_k - \nu_{k'}| ) \geq \frac{5}{\N}
\text{ for all } k,k' \text{ with } k \neq k'.
\label{eq:mindistcond}
\end{align}
Then as long as
\begin{align}
\R \geq c\mu\S \log^2(\L/\delta),
\label{eq:coherenceass}
\end{align}
with $c$ a fixed numerical constant, $\vz$ is the unique minimizer of $\AN(\vy)$ in \eqref{eq:primal} with probability at least $1-\delta$. 
\label{thm:incoherence}
\end{theorem}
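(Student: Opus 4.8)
The plan is to follow the now-standard dual certificate strategy for atomic norm minimization (à la Cand\`es--Fernandez-Granda and Tang et al.), adapted to the compressed setting $\vy = \mA\vz$. By convex duality, $\vz = \sum_k b_k \vf(\vr_k)$ is the unique optimal solution of $\AN(\vy)$ provided there exists a dual certificate: a vector $\vq \in \complexset^\R$ such that the trigonometric polynomial $Q(\vr) \defeq \innerprod{\mA^H \vq}{\vf(\vr)}$ interpolates the signs, i.e. $Q(\vr_k) = \sign(b_k)$ for all $k$, and is strictly bounded, $|Q(\vr)| < 1$ for all $\vr \notin \{\vr_k\}$. (A minor subtlety for uniqueness in 3D is to also control the polynomial away from the support using the derivative/Hessian conditions, as in the super-resolution literature.) So the whole proof reduces to constructing such a $\vq$, equivalently producing a dual polynomial that lives in the row space of $\mA$.

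First I would recall the unconstrained construction: in the $\mA = \mI$ case one builds $Q$ as an explicit linear combination of a Fej\'er-type kernel $\mathbb{K}$ and its partial derivatives centered at the $\vr_k$, with coefficients chosen to satisfy the interpolation and zero-derivative constraints; the minimum-separation condition \eqref{eq:mindistcond} guarantees the associated linear system is well-conditioned and that $|Q| < 1$ off-support. For the compressed problem, the idea is to build a \emph{random} candidate that concentrates around this deterministic certificate. Concretely, set $\vq$ (up to the coefficient vector solving the interpolation system) proportional to $\sum_r \va_r \overline{\big(\sum_k \alpha_k \mathbb{K}(\vr_k - \cdot)\big)}$-type expressions; by the isotropy property $\EX{\mA^H \vq}$ reproduces exactly the deterministic kernel construction, so the mean of the random polynomial $\bar Q(\vr)$ and its derivatives equals the known good certificate. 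The core of the argument is then to show that $Q$ and its first and second derivatives are uniformly close to their expectations on a fine net of $[0,1]^3$ and then extend to all of $[0,1]^3$ via a Bernstein-type inequality for trigonometric polynomials (bounding the Lipschitz constant by the degree $\L$). The concentration itself follows from matrix/vector Bernstein inequalities applied to the sum over the $\R$ i.i.d.\ rows, where the incoherence parameter $\mu$ controls the per-term operator norm and the variance, producing the sample complexity $\R \gtrsim \mu \S \log^2(\L/\delta)$ — the logarithmic factors arising from the union bound over the net and the Bernstein tails. This is essentially the content of Lemma~\ref{prop:dualpolynomial} referenced in the excerpt.

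The random-sign assumption on $\sign(b_k)$ enters in exactly the place it does in Tang et al.: it lets us apply Hoeffding/Bernstein-type inequalities to sums like $\sum_k \sign(b_k)(\text{kernel terms})$ conditionally, decoupling the geometry of the frequency locations from the random coefficients and avoiding a union bound over all sign patterns. I would structure the proof as: (1) state the dual certificate sufficient condition (citing the duality argument and the uniqueness refinement for 3D); (2) recall the deterministic kernel-based interpolation construction and its properties under \eqref{eq:mindistcond}, including invertibility of the interpolation matrix and the strict bound away from support; (3) define the random polynomial $Q$ in the row space of $\mA$ whose expectation is the deterministic certificate; (4) prove concentration of $Q$, $\nabla Q$, $\nabla^2 Q$ at each point of an $\epsilon$-net using vector/matrix Bernstein with $\mu$ controlling the relevant norms; (5) use Bernstein's polynomial inequality to pass from the net to all of $[0,1]^3$; (6) combine to verify $|Q(\vr_k)| = 1$, $|Q(\vr)| < 1$ elsewhere, completing the certificate and hence uniqueness and exactness.

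The main obstacle I anticipate is step (4)--(5) in three dimensions: one must simultaneously control the polynomial and its gradient and Hessian (six second-order components plus cross terms) uniformly, and the near-minimal sample complexity with only $\log^2$ (not higher powers of $\log$) requires a careful, essentially optimal, choice of net resolution and a tight Bernstein bound in which the variance proxy — not the crude $\ell_1$/coherence bound — dominates. Getting the 3D kernel estimates (decay of the Fej\'er kernel and its derivatives at separation $5/\N$, and invertibility/conditioning of the block interpolation system with the off-diagonal blocks controlled) to be sharp enough, and handling the fact that $\setA$ is an infinite continuously-indexed set rather than a basis, is where the real work lies; everything else is bookkeeping over the i.i.d.\ rows.
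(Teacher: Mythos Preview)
Your proposal is correct and follows essentially the same route as the paper: reduce to a dual-certificate condition (Proposition~\ref{prop:dualmin}), build $Q$ from randomized Fej\'er-type kernels $G_{\vm}(\vr,\vr_k)=\innerprod{\mA\vg_{\vm}(\vr_k)}{\mA\vf(\vr)}$ whose expectations (via isotropy) reproduce the deterministic Cand\`es--Fernandez-Granda construction, verify pointwise concentration of these kernels via scalar Bernstein over the i.i.d.\ rows with $\mu$ controlling the $\ell_1$ bounds, use Hoeffding over the random signs $u_k$ to control $Q-\bar Q$ on a net, and extend uniformly by Bernstein's polynomial inequality. One small refinement to keep in mind: because the interpolation coefficients $\val,\val_1,\val_2,\val_3$ solve the \emph{random} system $\mD\,\cdot=\vu$ (not $\bar\mD$), $\EX{Q}$ is not literally $\bar Q$, and the paper handles this by first showing $\norm{\mD-\bar\mD}$ is small (hence $\norm{\mL-\bar\mL}$ is small) and then splitting $Q-\bar Q$ into a ``random-kernel'' piece $I_1$ and a ``random-coefficients'' piece $I_2$, each controlled separately via Hoeffding on $\vu$.
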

Our main results states that when the measurement matrix obeys the isotropy and incoherence conditions and the frequencies of the mixture are sufficiently separated then we can recover the signal $\vz$ via atomic norm minimization, and in turn recover the parameters $\{(b_k,\vr_k)\}$. 
We note that any algorithm (regardless of tractibility) requires at least $\R \gtrsim \S \log(\L/\S)$  measurements for \emph{stable} recovery, and $\R \geq \S$ measurements for recovery in general, so that our results are optimal up to the additional logarithmic factor. 

\subsection{Random projection matrices}
Our next class of matrices are random projection matrices that preserve Euclidean norms. These matrices are not covered directly by Theorem \ref{thm:incoherence} as they are not deterministically incoherent in the sense of definition \eqref{eq:incoherence}. However, these matrices are still incoherent in a probabilistic sense with respect to \emph{any} vector and in particular with respect to the vectors $\{\vf \in \complexset^\L\colon \infnorm{\vf} \leq 1 \}$. 

For simplicity we state our result for a Gaussian random matrix. 

\begin{theorem}
\label{thm:ISG}
Let $\mA \in \mc \complexset^{\R \times \L}$ be a random projection matrix with i.i.d.~$\mc N(0,1/\R)$ entries. 
Furthermore, assume we have $\R$ measurements $\vy = \mAA \vz$ from an unknown signal of the form $\vz = \sum_{k=\indstart}^\indend{\S} b_k \vf(\nu_k), \L \geq 512$. 
Also assume that the frequencies $\nu_k \in [0,1]$ obey the minimum separation condition~\eqref{eq:minsepintro1D}. 
Then as long as
\begin{align}
\R \geq c\S \log(\L),
\label{eq:condJLmtx} 
\end{align}
with $c$ a fixed numerical constant, $\vz$ is the unique minimizer of $\AN(\vy)$ in \eqref{eq:primal} with probability at least $e^{-(\R-2)/8}$.  
\end{theorem}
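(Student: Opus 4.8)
The plan is to prove Theorem~\ref{thm:ISG} by the descent-cone / Gordon comparison route of \cite{chandrasekaran_convex_2012,tang_compressed_2013}, rather than by the row-space dual-certificate (golfing) argument behind Theorem~\ref{thm:incoherence}. Exploiting the randomness of $\mA$ geometrically is exactly what lets us drop the sign assumption of Theorem~\ref{thm:incoherence}: the Cand\`es--Fernandez-Granda dual polynomial will enter only in the Gaussian-width estimate below, and there it may be taken deterministically for arbitrary signs $\sign(b_k)$, since it exists under the minimum separation \eqref{eq:minsepintro1D}.

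The first step is standard. Since $\norm[\setA]{\cdot}$ is a norm and $\vy=\mA\vz$, the signal $\vz$ is the unique minimizer of $\AN(\vy)$ if and only if the descent (tangent) cone
\[
\mathcal D \defeq \overline{\Big\{\vh\in\complexset^\L\colon \norm[\setA]{\vz+t\vh}\le\norm[\setA]{\vz}\ \text{ for some }\ t>0\Big\}}
\]
intersects $\ker\mA$ only at the origin, and this is implied by $\min_{\vh\in\mathcal D,\,\twonorm{\vh}=1}\twonorm{\mA\vh}>0$. Viewing $\mA$ as a real Gaussian operator on $\complexset^\L\cong\reals^{2\L}$, Gordon's comparison inequality in quantitative form (as developed in Section~\ref{GordonProofs}) bounds this minimum away from zero with probability at least $1-e^{-(\R-2)/8}$ once $\R$ exceeds a fixed multiple of the squared Gaussian width $w(\mathcal D\cap\mathbb{S})^2$ of the cone. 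Everything therefore reduces to the estimate $w(\mathcal D\cap\mathbb{S})^2\le c\,\S\log\L$.

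To bound the width I would use the polarity inequality $w(\mathcal D\cap\mathbb{S})^2\le\EX{\mathrm{dist}(\vg,\mathcal N_\setA(\vz))^2}$, where $\vg$ is a standard Gaussian vector and $\mathcal N_\setA(\vz)=\mathrm{cone}(\partial\norm[\setA]{\vz})$ is the normal cone. Under \eqref{eq:minsepintro1D} the representation $\vz=\sum_k b_k\vf(\nu_k)$ is atomic-norm optimal, so $\partial\norm[\setA]{\vz}$ is exactly the set of dual polynomials $\vq$ with $\innerprod{\vq}{\vf(\nu_k)}=\sign(b_k)$ for every $k$ and $|\innerprod{\vq}{\vf(\nu)}|\le1$ for every $\nu\in[0,1]$. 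For each realization of $\vg$ I would exhibit a feasible point $\vq(\vg)\in\mathcal N_\setA(\vz)$ close to $\vg$, built from the Cand\`es--Fernandez-Granda interpolating kernel: its free coefficients, together with an overall scaling factor $t>0$, are chosen so that $\vq(\vg)$ meets the interpolation constraints at the $\nu_k$ (where $|Q|$ peaks at $1$ with vanishing derivative, exactly as in that construction) while tracking the component $\mathcal P_{\mathcal T^\perp}\vg$ of $\vg$ orthogonal to the tangent space $\mathcal T=\mathrm{span}\{\vf(\nu_k),\vf'(\nu_k)\}$ off the support. Taking expectations, $\twonorm{\vg-\vq(\vg)}^2$ splits into a contribution from the finitely many on-support constraints, of order at most $\S t^2$ (controlled because the $2\S\times 2\S$ interpolation system is well conditioned precisely by virtue of \eqref{eq:minsepintro1D}), and a contribution from the continuum of off-support constraints which, after reducing $\sup_{\nu\in[0,1]}$ to a maximum over $O(\L)$ equispaced points---legitimate for degree-$\N$ trigonometric polynomials by Bernstein's inequality---is at most of order $\L\,e^{-t^2/2}$. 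Choosing $t\asymp\sqrt{\log\L}$ balances the two at $O(\S\log\L)$, and substituting into the previous step gives the theorem.

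I expect the width estimate to be the main obstacle. The awkward part is the continuum of off-support constraints $|\innerprod{\vq}{\vf(\nu)}|\le1$, $\nu\in[0,1]$: one has to graft the deterministic Cand\`es--Fernandez-Granda certificate onto the random correction term and check, via a Bernstein/large-sieve estimate for degree-$\N$ trigonometric polynomials, that reducing the supremum to $O(\L)$ points costs only an absolute constant, all while keeping the constant multiplying $\S\log\L$ independent of $\S$ and $\L$. The on-support linear algebra and the Gordon step itself are, by comparison, routine, given the machinery of Section~\ref{GordonProofs} and the kernel bounds underlying Lemma~\ref{prop:dualpolynomial}.
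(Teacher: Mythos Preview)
Your Gordon/descent-cone setup is exactly the paper's: reduce to $\omega(\tangentcone_{\setA}(\vz)\cap\mathbb{S}^{\L-1})^2\le c\,\S\log\L$ and apply \cite[Cor.~3.3]{chandrasekaran_convex_2012}. The routes diverge at the width bound. The paper does \emph{not} build a random point in the normal cone; instead it invokes the Oymak--Hassibi identity \cite{oymak2013sharp}, which equates the mean-square distance $D(\lambda'\partial\norm[\setA]{\vz})$ with the worst-case (over the noise level $\sigma$) atomic-norm \emph{denoising} error, and then imports the denoising bound of \cite{tang_near_2015} (packaged here as Lemma~\ref{lem:denoisinglasso}) to obtain $c\,\S\log\L$ in one stroke. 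The Cand\`es--Fernandez-Granda polynomial thus enters only inside that denoising analysis, not through a bespoke normal-cone element.

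Your direct construction is in the right spirit, but the $\ell_1$-style bookkeeping does not transfer as written. Since $\twonorm{\vf(\nu)}^2=\L$, the process $\nu\mapsto\innerprod{\vg}{\vf(\nu)}$ has variance $\L$, so feasibility of $\vq(\vg)$ forces the cone scale $t\asymp\sqrt{\L\log\L}$ rather than $\sqrt{\log\L}$, and there is no coordinatewise soft-threshold delivering an ``$\L\,e^{-t^2/2}$'' off-support error while keeping $\vq$ a trigonometric polynomial. What actually makes the arithmetic close is the localization estimate $\twonorm{\bar\vq}^2\lesssim\S/\L$ for the CFG coefficient vector (equivalently $\norm[1]{\bar Q}\lesssim\S/\L$, cf.~\cite[Lem.~4]{tang_near_2015}): correcting the $2\S$ interpolation constraints at the larger scale then costs only $O(\S\log\L)$, but one still has to verify the global bound $\abs{\innerprod{\vq(\vg)}{\vf(\nu)}}\le t$ both near and far from the $\nu_k$. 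Carrying that out carefully is essentially a reprise of the \cite{tang_near_2015} denoising analysis---which is precisely why the paper routes through the Oymak--Hassibi equivalence and cites that result rather than redoing it.
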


Note that this statement is superior to our previous statements in that it does not require the sign of the coefficients $\sign{(b_k)}$ to be random. 
The proof in Section \ref{GordonProofs} reveals that the result follows from Gordon's Lemma \cite{Gor,Gor2} carefully applied to this problem. Indeed, this result holds more broadly for all matrices that preserve Euclidean norms of vectors over a general set (i.e.,~matrices obeying Gordon's Lemma),
for instance sub-Gaussian random matrices and certain more structured random projection matrices defined below. 
The corresponding proof is analogous to that for a Gaussian matrix presented in Section \ref{GordonProofs} by using results in \cite{dirksen2015tail} and \cite[Thm.~3.3]{oymak2015isometric} ensuring that the corresponding random matrices 
behave like Gaussian random matrices in terms of dimensionality reduction. 

\paragraph{Isotropic sub-Gaussian matrices:}
The first is the class of isotropic sub-Gaussian matrices \cite{oymak2015sharp}. 
A matrix $\mA\in\mathbb{C}^{\R\times \L}$ is $\Delta$-sub-Gaussian if its rows are independent of each other and for all $r = \rindstart, \ldots, \rindend{\R}$, 
the $r$-th row $\herm{\ab}_r$ satisfies 
\[
\EX{\va_r}=\vect{0}, 
\quad \EX{ \va_r \herm{\va}_r } =\mI,
\footnote{The assumption that the population covariance matrix is identity is not necessary. In fact our result generalizes to any covariance matrix. The only change is that the required number of measurements will now scale with the condition number of the covariance matrix.}
\quad \text{ and for any vector $\vv$, $\PR{\left|\innerprod{\vv}{\va_r}\right| \geq t\twonorm{\vv}}\leq e^{-\frac{t^2}{\Delta^2}}$.}
\]
Theorem~\ref{thm:ISG} continues to hold with condition~\eqref{eq:condJLmtx} replaced by $\R \geq c\S \log(\L)/\Delta^2$.

\paragraph{Subsampled Orthogonal with Random Sign (SORS):}
The second class of matrices to which Theorem~\ref{thm:ISG} can be extend to is the class of SORS matrices~\cite{oymak2015isometric}. 
Let $\mU \in \reals^{\L\times \L}$ denote an orthonormal matrix obeying
\begin{align}
\label{BOS}
\herm{\mU} \mU = \mI 
\quad\text{and}\quad
\max_{i,j}\abs{\mtx{F}_{ij}}\le \frac{\Delta}{\sqrt{n}}.
\end{align}
Define the random subsampled matrix $\mH \in\reals^{\R\times \L}$ with i.i.d.~rows chosen uniformly at random from the rows of $\mU$. 
A SORS matrix is defined as $\mA=\mH \mD$, where $\mD \in \reals^{\L \times \L}$ is a random diagonal matrix with the diagonal entries i.i.d.~$\pm 1$ with equal probability. 
Theorem \ref{thm:ISG} extends to SORS matrices with condition~\eqref{eq:condJLmtx} replaced by
$
\R \geq c\frac{\S \log^4(\L)}{\Delta^2}
$. 
Note that such random matrices include the fast random projection matrices in~\cite{ailon_almost_2013,krahmer_new_2011}. 
Specifically, with $\mH$ the Hadamard matrix, the random projection $\mA  = \mH \mD$ can be applied in time $O( \L \log \L)$ as opposed to time $O(\R \L)$ for the realizations of general sub-Gaussian random matrices.



\subsection{Superresolution MIMO radar matrices}
We next state our main result for random matrices appearing in MIMO radar problems. 
A simple calculation shows (see Section \ref{sec:proofmimo}) that the MIMO radar input-output relation \eqref{eq:periorelMIMO} can be viewed as taking $N_R \L$ linear measurements 
\begin{align}
\label{myeq}
\vy=\begin{bmatrix}\vy_\rindstart \\\vy_\rindsecond \\\vdots\\\vy_{\rindend{N_R}}\end{bmatrix}
=\mA\vz,
\end{align}
from a vector $\vz$ of the form \eqref{eq:syseqinwdim} with $\d=3$. 
Here, $\mAA \in \complexset^{N_R \L \times N_R N_T \L^2}$ is a function of the transmitted signals $\vx_j$ and is given by
\begin{align}
\label{eq:defAmimo}
\mA \defeq \begin{bmatrix}
\mB &  &  \\
& \ddots & \\
 & & \mB 
\end{bmatrix},
\quad
\mB \defeq [\mG_{\vx_{0}} \herm{\mF} \ldots  \mG_{\vx_{N_T - 1 } } \herm{\mF} ] 
\in 
\complexset^{\L \times \L^2 N_T},
\end{align}
where $\mG_{\vx} \herm{\mF}$ is the matrix in \eqref{eq:tfshiftGF} that appeared in the super-resolution radar problem before. 
\begin{theorem} 
\label{thm:mainresmimo}
Assume that the probing signals $\vx_0, \vx_\indsecond,\ldots,\vx_{N_T-1} \in \reals^{\L}$ are i.i.d. $\mathcal{N}(\vct{0},\frac{1}{N_T\L}\mtx{I})$ random vectors with $\L = 2\N+1 \geq 1024$ and $N_T N_R \geq 1024$. Let $\vy = \mAA \vz$ with $\mA$ defined in~\eqref{eq:defAmimo} and $\vz = \sum_{k=\indstart}^{\indend{\S}} b_k \vf(\vr_k)
$. 
Also, assume that the signs $\sign(b_k)$ of the coefficients $b_k\in \complexset$ are chosen independently from symmetric distributions on the complex unit circle, and the vectors $\vr_k = (\beta_k,\tau_k,\nu_k) \in [0,1]^3$ obey the minimum separation condition
\begin{align}
|\beta_k - \beta_{k'}| \geq \frac{10}{N_T N_R- 1} \quad \text{or}\quad
 |\tau_k - \tau_{k'}| \geq \frac{5}{\N} \quad\text{or}\quad
|\nu_k - \nu_{k'}|  \geq \frac{5}{\N}
, \quad \text{for all $k,k'\colon k\neq k'$}. 
\label{eq:minsepcond}
\end{align}
Then as long as
\begin{align}
\label{eq:sleqb}
\min(\L, N_T N_R) \geq \S c \log^3\left( \L / \delta \right),
\end{align}
with $c$ a fixed numerical constant, $\vz$ is the unique minimizer of $\AN(\vy)$ in \eqref{eq:primal} with probability at least $1-\delta$. 
\end{theorem}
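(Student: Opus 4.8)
My plan is to follow the by-now-standard route for atomic norm minimization and exhibit a \emph{dual certificate}. Concretely, $\vz = \sum_{k=\indstart}^{\indend{\S}} b_k \vf(\vr_k)$ is the unique optimum of $\AN(\vy)$ as soon as one produces a vector $\vq \in \complexset^{N_R\L}$ whose induced dual polynomial $Q(\vr) \defeq \innerprod{\mAA\vf(\vr)}{\vq}$ satisfies the interpolation identities $Q(\vr_k) = \sign(b_k)$ for every $k$ together with the strict bound $|Q(\vr)| < 1$ for all $\vr\in[0,1]^3 \setminus\{\vr_k\}$. Thus the whole task reduces to constructing such a $\vq$ with probability at least $1-\delta$ under \eqref{eq:minsepcond} and \eqref{eq:sleqb}.

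The first step is the reduction of Section~\ref{sec:proofmimo}, which puts \eqref{eq:periorelMIMO} into the canonical form $\vy = \mAA\vz$ of \eqref{eq:syseqinwdim} with $\d = 3$ and $\mAA$ the block-diagonal matrix of \eqref{eq:defAmimo}: the factors $e^{i2\pi rN_T\beta_k}$ and $e^{i2\pi j\beta_k}$, collected over receive index $r = 0,\ldots,N_R-1$ and transmit index $j = 0,\ldots,N_T-1$, assemble into a one-dimensional Fourier structure in $\beta_k$ over $N_TN_R$ ``virtual'' samples --- which is why $N_TN_R$ plays in the $\beta$ coordinate the role that $\L$ plays in the $\tau$ and $\nu$ coordinates, and why the separation threshold there is $10/(N_TN_R-1)$ in \eqref{eq:minsepcond} --- while \eqref{eq:tfshiftGF} rewrites each delay--Doppler factor $\mc F_{\nu_k}\mc T_{\tau_k}\vx_j$ as $\mG_{\vx_j}\herm{\mF}\vf(\tau_k,\nu_k)$.

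With $\mAA$ in this form, I would invoke Lemma~\ref{prop:dualpolynomial}, whose purpose is to reduce the existence of the certificate to a list of concentration statements about the random kernel $\tilde K(\vr,\vr') \defeq \innerprod{\mAA\vf(\vr')}{\mAA\vf(\vr)}$ and its first- and second-order frequency derivatives. The certificate is built as a random-kernel interpolant, $Q(\vr) = \sum_k \alpha_k \tilde K(\vr,\vr_k) + \sum_{k,\ell}\gamma_{k,\ell}\,\partial_\ell \tilde K(\vr,\vr_k)$, with coefficients solving the linear system that enforces $Q(\vr_k) = \sign(b_k)$ and $\nabla Q(\vr_k) = 0$, and is arranged so that $\EX{Q}$ equals the deterministic three-dimensional dual polynomial already known to certify recovery in the fully observed setting under \eqref{eq:minsepcond} (a product of Fej\'er-type kernels adapted to the one-coordinate separation). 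One then needs three things: (i) invertibility of the interpolation system, which follows because it concentrates around its deterministic, invertible counterpart; (ii) $|Q(\vr)| < 1$ in a neighborhood of each $\vr_k$, via a Taylor expansion controlled by concentration of $\nabla^2 Q$; and (iii) $|Q(\vr)| < 1$ on the complementary far region, obtained by discretizing $[0,1]^3$ on a fine net, bounding $|Q|$ at each net point by a Bernstein/Hoeffding estimate, and passing to the continuum with Lipschitz bounds on $Q$ and its derivatives.

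The substantive difficulty --- the only place the MIMO structure genuinely bites --- is the concentration input to Lemma~\ref{prop:dualpolynomial}. Because the probing signals $\vx_j$ are Gaussian, the rows of $\mB = [\mG_{\vx_0}\herm{\mF}\cdots\mG_{\vx_{N_T-1}}\herm{\mF}]$ are not incoherent in the deterministic sense of \eqref{eq:incoherence}, and $\tilde K$ together with its derivatives is a second-order Gaussian chaos in the entries $x_{j,\ell}$ rather than a sum of bounded independent terms. I would bound the suprema of these objects over the relevant nets using chaos-process and matrix-Bernstein machinery --- the Gabor-matrix estimates underlying \cite{heckel_super-resolution_2015} and the chaos bounds of \cite{krahmer_suprema_2014}, now carried out in three dimensions and accounting for the block-diagonal coupling between $N_R$ and the $\beta$-direction Dirichlet factor --- which is where both the cubic logarithm and the requirement $\min(\L,N_TN_R) \gtrsim \S\log^3(\L/\delta)$ originate. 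The random-sign hypothesis on $\sign(b_k)$ enters exactly in step (iii): after conditioning on $\mAA$, symmetrization over the signs lets $Q$ concentrate around a deterministic polynomial of modulus strictly below $1$ uniformly over the far region. Combining (i)--(iii) with a final union bound over $k$ and the nets yields the certificate with probability $1-\delta$ and closes the argument. I expect the uniform second-order-chaos control of the kernel derivatives, with the correct joint dependence on $\S$, $\L$, and $N_TN_R$, to be the technical crux.
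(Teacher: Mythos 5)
Your proposal follows essentially the same route as the paper: reduce \eqref{eq:periorelMIMO} to the canonical form $\vy = \mAA\vz$ with $\mAA$ as in \eqref{eq:defAmimo}, invoke Lemma~\ref{prop:dualpolynomial}, and verify its two concentration hypotheses by exploiting the fact that $\innerprod{\mAA\vg_{\vm}(\vr_k)}{\mAA\vf^{\vn}(\vr)}$ is a second-order Gaussian chaos in the stacked probing signal $\vx$. The one place where your plan diverges in flavor is the concentration tool: since the $\epsilon$-net and union bound are already internal to Lemma~\ref{prop:dualpolynomial}, the paper only needs \emph{pointwise} concentration at each fixed $(\vr,\vr_k,\vm,\vn)$, and so writes the kernel as a single quadratic form $\herm{\vx}\mV_{\vm}^{(\vn)}(\vr,\vr_k)\vx$ and applies the Hanson--Wright inequality together with a Frobenius-norm bound on $\mV_{\vm}^{(\vn)}$ (Lemma~\ref{lem:boundmV}), rather than the suprema-of-chaos-processes machinery you cite from \cite{krahmer_suprema_2014}; the latter would duplicate the net argument but would also work.
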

Even though this result does not allow $\S$ to be near linear in the total number of measurements $N_R \L$, it is optimal up to log-factors. Indeed, $\S \leq \min(\L, N_TN_R)$ is in general a necessary condition to uniquely recover the $b_k$ even when the frequencies of the mixture components $(\beta_k,\tau_k,\nu_k)$ are known \cite{heckel_super-resolution_2016b}. 

We note that Theorems \ref{thm:incoherence} and \ref{thm:mainresmimo} assume that the coefficients $\{b_k\}$ have random sign. 
In the radar model, where the $b_k$ describe the attenuation factors, this is a common and well justified assumption~\cite{bello_characterization_1963}. 
While this assumption is reasonable for several applications, we believe that it is not necessary for our results to hold. Indeed, Theorem~\ref{thm:ISG} does not require this assumption. We hope to remove this assumption in future publications.
\section{\label{sec:discretesuperes} Recovery on a fine grid}
A natural approach for estimating the parameters $\{\vr_k\}$ from $\vy = \mA \vz$, with $\vz$ defined in~\eqref{eq:syseqinwdim}, is to assume the parameters $\{\vr_k\}$ lie on a \emph{fine} grid, and solve the generalized line spectral estimation problem on that grid. When the frequencies do not lie on a grid this of course leads to a ``gridding error". However, this error is negligible when the grid is very fine. When the frequencies do lie on a grid the generalized line spectral estimation problem can be be formulated as a discrete sparse signal recovery problem, albeit with a very coherent dictionary. 
In this section we discuss this approach in detail focusing on the 3D-case. 

Suppose the vectors $\vr_k = (\beta_k,\tau_k,\nu_k)$ lie on a grid with spacing $(1/K_1,1/K_2,1/K_3)$, where $K_1,K_2,K_3$ are integers obeying $K_1,K_2,K_3 \geq \L = 2 \N +1$, 
see Figure \ref{fig:illgrid} for an illustration. 
With this assumption, 
recovery of the parameters $\{(b_k,\vr_k)\}$ from $\vy$ reduces to the recovery of the sparse (discrete) signal $\vs\in \complexset^{K_1 K_2 K_3}$ 
from the measurement
\[
\vy = \mA \mF_{\mathrm{grid}} \vs, 
\]
where $\mF_{\mathrm{grid}} \in \complexset^{\L^3  \times K_1 K_2 K_3}$ is the matrix with $(n_1,n_2,n_3)$-th column given by $ \vf(n_1/\K_1, n_2/\K_2, n_3/\K_3)$. 
The non-zeros of $\vs$ and its indices correspond to the $\{b_k\}$ and the $\{\vr_k\}$ on the grid, respectively. 
A standard approach for recovering the sparse signal $\vs$ from the underdetermined linear system of equations $\vy = \mA \mF_{\mathrm{grid}} \vs$ is to solve the following convex program:
\begin{align}
	\mathrm{L1}(\vy) \colon \;\;
\underset{\minlet{\vs}}{\text{minimize}} \; \onenorm{\minlet{\vs}} \text{ subject to } \vy = \mA \mF_{\mathrm{grid}} \minlet{\vs}.
\label{eq:l1minmG}
\end{align}

\begin{figure}
\begin{center}
	\begin{tikzpicture}[scale=1.8,>=latex] 
	\draw[->] (0,3) -- (2.3,3); 
	\draw[->] (0,3) -- (0,5.3);
	\node at (1.2,2.9) [anchor=north] {$\tau$};
	\node at (-0.1,4.2) [anchor=south, rotate=90] {$\nu$};	
	
	\foreach \x in {0.05,0.15,...,2} 
	\foreach \y in {3.05,3.25,...,5} 
	\fill [black,opacity=1] (\x,\y) circle (0.013);

	\fill [color=DarkBlue,] (1.05,3.65) circle (0.04);
	\fill [color=DarkBlue,] (0.65,4.05) circle (0.04);
	\fill [color=DarkBlue,] (1.35,3.85) circle (0.04);

	\coordinate (a) at (3,4);
	\coordinate (b) at (4,4);
	\coordinate (c) at (4,5);
	\coordinate (d) at (3,5);

	\draw [densely dashed,opacity=0.6] (2,4.25) -- (2.65,4.25);
	\draw [densely dashed,opacity=0.6] (2,4.85) -- (2.65,4.85);
	\draw [<->,opacity=0.6,>=latex]  (2.65,4.25) -- (2.65,4.85);
	\node at (2.65,4.55) [anchor=west] {$\frac{1}{\L}$};
	
	\draw [->,opacity=0.6,>=latex]  (2.2,5.1) -- (2.2,4.85);
	\draw [<-,opacity=0.6,>=latex]  (2.2,4.65) -- (2.2,4.4);
	\draw [densely dashed,opacity=0.6] (2,4.65) -- (2.2,4.65);
	\node at (2.4,4.48) [anchor=south] {$\frac{1}{K_2}$};
	
		
	\draw [densely dashed,opacity=0.6] (1.15,5.45) -- (1.15,4.85);
	\draw [densely dashed,opacity=0.6] (1.25,5) -- (1.25,4.85);
	\draw [->,opacity=0.6,>=latex]  (0.9,5) -- (1.15,5);
	\draw [<-,opacity=0.6,>=latex]  (1.25,5) -- (1.5,5);
	\node at (1.3,5) [anchor=south] {$\frac{1}{K_1}$};
	
	\draw [densely dashed,opacity=0.6] (1.75,5.45) -- (1.75,4.85);
	\draw [<->,opacity=0.6,>=latex]  (1.15,5.45) -- (1.75,5.45);
	\node at (1.25,5.65) [anchor=west] {$\frac{1}{\L}$};
	
\end{tikzpicture}
\end{center}
\caption{\label{fig:illgrid}Illustration of the (fine) grid in two dimensions. In this example, the non-zeros, marked by blue dots, violate the minimum separation condition \eqref{eq:discreteminsep}.}
\end{figure}
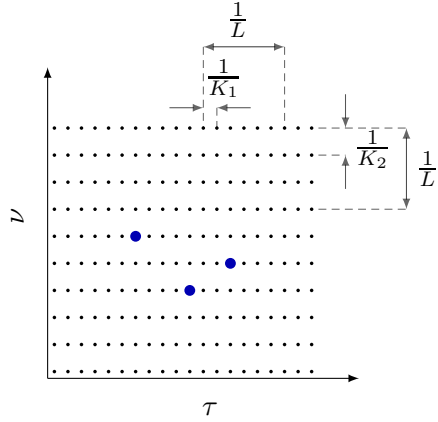
\noindent The following theorems discuss our main results for recovery on the fine grid. For brevity we only state the discrete analagoue to Theorems \ref{thm:incoherence} and \ref{thm:mainresmimo}.
A similar result holds for Theorem~\ref{thm:ISG}. Our first result is the discrete analogue of Theorem \ref{thm:incoherence}.
\begin{theorem} 
\label{cor:discretesuperresII}
Let $\vs$ be an $\S$-sparse vector with non-zeros indexed by the support set $\mc S \subseteq [K_1]\times [K_2] \times [K_3], [K] \defeq \{0,\ldots K-1\}$, 
obeying the minimum separation condition
\begin{align}
\max_{\ell \in\{1,2,3\}}\frac{|n_{\ell} - n_{\ell}'|}{K_\ell}  \geq \frac{5}{\N},
\quad 
\text{for all triplets }
[n_1,n_2,n_3], [n_1',n_2',n_3'] \in \mc S. 
\label{eq:discreteminsep}
\end{align}
Furthermore, assume that the signs of the non-zeros of $\vs$ are chosen independently from symmetric distributions on the complex unit circle. Also assume we have $\R$ linear measurements $\mA \in \complexset^{\R \times \L^3}$ with $\L \geq 1024$ of the form $\vy = \mA \mF_{\mathrm{grid}} \vs$ with the matrix $\mA$ obeying the conditions of Theorem \ref{thm:incoherence}, in particular the isotropy and incoherence conditions. Then as long as 
\begin{align*}
\R \geq c \mu\S\log^2(\L/\delta),
\end{align*}
with $c$ a fixed numerical constant, $\mathbf{s}$ is the unique minimizer of $\mathrm{L1}(\vy)$ with probability at least $1-\delta$. 
\end{theorem}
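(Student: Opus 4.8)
The plan is to obtain Theorem~\ref{cor:discretesuperresII} as a corollary of Theorem~\ref{thm:incoherence}, by reading the discrete recovery problem as an instance of the continuous one. Enumerate the support as $\mc S = \{(n_1^{(k)},n_2^{(k)},n_3^{(k)})\}_{k=1}^{\S}$, set $b_k = [\vs]_{(n_1^{(k)},n_2^{(k)},n_3^{(k)})}$, $\vr_k = \big(n_1^{(k)}/\K_1,\, n_2^{(k)}/\K_2,\, n_3^{(k)}/\K_3\big) \in [0,1]^3$, and $\vz = \mF_{\mathrm{grid}}\vs = \sum_{k=1}^{\S} b_k \vf(\vr_k)$, so that $\vy = \mA\vz$ has exactly the form treated in Theorem~\ref{thm:incoherence}. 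First I would check that all hypotheses carry over: the $\vr_k$ are distinct, the discrete separation condition~\eqref{eq:discreteminsep} is precisely the continuous condition~\eqref{eq:mindistcond} for the points $\vr_k$ once $|n_\ell - n_\ell'|/\K_\ell$ is read as a wrap-around distance (the $n_\ell$ range over $[\K_\ell]$), the random-sign assumption is identical, $\mA$ is assumed to obey the isotropy and incoherence conditions, the bound $\R \geq c\mu\S\log^2(\L/\delta)$ matches the one in Theorem~\ref{thm:incoherence}, and $\L \geq 1024$ in both. Theorem~\ref{thm:incoherence} then gives that, with probability at least $1-\delta$, $\vz$ is the unique minimizer of $\AN(\vy)$; moreover its proof (through Lemma~\ref{prop:dualpolynomial}) produces a dual certificate, i.e.\ a vector $q = \herm{\mA}\vw$ whose dual polynomial $Q(\vr) \defeq \innerprod{\vf(\vr)}{q}$ satisfies $Q(\vr_k) = \sign(b_k)$ for all $k$, $|Q(\vr)| \leq 1$ on $[0,1]^3$, and $|Q(\vr)| < 1$ for every $\vr \notin \{\vr_1,\ldots,\vr_{\S}\}$. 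Since $q$ is dual feasible and achieves the value, this also gives $\norm[\setA]{\vz} = \onenorm{\vs} = |\innerprod{\vz}{q}|$.

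Next I would transfer the certificate to the grid. Let $\tilde\vs$ be any point feasible for $\mathrm{L1}(\vy)$ with $\onenorm{\tilde\vs} \leq \onenorm{\vs}$, and set $\tilde\vz = \mF_{\mathrm{grid}}\tilde\vs$. Since every column of $\mF_{\mathrm{grid}}$ is an atom $\vf(\vr)$, the vector $\tilde\vz$ is feasible for $\AN(\vy)$ with $\norm[\setA]{\tilde\vz} \leq \onenorm{\tilde\vs} \leq \onenorm{\vs} = \norm[\setA]{\vz}$, so the uniqueness part of Theorem~\ref{thm:incoherence} forces $\tilde\vz = \vz$ and turns each inequality into an equality. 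Writing $\vf_j$ for the $j$-th column of $\mF_{\mathrm{grid}}$ and $Q_j \defeq \innerprod{\vf_j}{q}$ (so $|Q_j| \leq 1$ always and $|Q_j| < 1$ for $j \notin \mc S$), the identity $\vz = \tilde\vz = \sum_j [\tilde\vs]_j \vf_j$ together with the triangle inequality gives $\onenorm{\tilde\vs} = |\innerprod{\vz}{q}| \leq \sum_j |[\tilde\vs]_j|\,|Q_j| \leq \onenorm{\tilde\vs}$; hence $[\tilde\vs]_j = 0$ for all $j \notin \mc S$, i.e.\ $\tilde\vs$ is supported in $\mc S$. Then $\sum_k \big([\tilde\vs]_{(n_1^{(k)},n_2^{(k)},n_3^{(k)})} - b_k\big)\vf(\vr_k) = \tilde\vz - \vz = \vect{0}$, and linear independence of the atoms $\{\vf(\vr_k)\}$ — which holds under~\eqref{eq:discreteminsep} by a standard Vandermonde-type argument — yields $\tilde\vs = \vs$. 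Thus $\vs$ is the unique minimizer of $\mathrm{L1}(\vy)$ with probability at least $1-\delta$.

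The whole argument is a short reduction, and most of it is routine bookkeeping. The only step with genuine content is the uniqueness transfer in the second paragraph: Theorem~\ref{thm:incoherence} supplies uniqueness of the vector $\vz$, not directly of the sparse coefficient vector $\vs$, and bridging this gap is exactly where the strict inequality $|Q| < 1$ off the support, and the separation-based linear independence of $\{\vf(\vr_k)\}$, are used. Equivalently one could phrase the second paragraph as the observation that the restriction of $Q$ to the grid is a strict dual certificate for $\mathrm{L1}(\vy)$; I would still route the uniqueness through the atomic-norm comparison above, since it avoids having to argue separately that $\mA\mF_{\mathrm{grid}}$ is injective on the columns indexed by $\mc S$. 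I expect the only items needing care to be verifying that~\eqref{eq:discreteminsep} implies~\eqref{eq:mindistcond} with wrap-around distances, and recalling the linear-independence fact.
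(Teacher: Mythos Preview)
Your argument is correct and rests on the same key idea as the paper: restrict the continuous dual polynomial $Q$ from Lemma~\ref{prop:dualpolynomial} to the grid. The paper's proof is more direct. It cites the standard $\ell_1$ dual-certificate proposition (if the columns of $\mA\mF_{\mathrm{grid}}$ indexed by $\mc S$ are linearly independent and there is a vector $\vv$ in the row space of $\mA\mF_{\mathrm{grid}}$ with $\vv_{\mc S}=\sign(\vs_{\mc S})$ and $\infnorm{\vv_{\comp{\mc S}}}<1$, then $\vs$ is the unique minimizer of $\mathrm{L1}(\vy)$), and simply takes $[\vv]_{(n_1,n_2,n_3)}=Q(n_1/K_1,n_2/K_2,n_3/K_3)$; since $Q(\vr)=\innerprod{\herm{\mA}\vq}{\vf(\vr)}$, this $\vv$ automatically lies in the row space of $\mA\mF_{\mathrm{grid}}$. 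Your detour through Theorem~\ref{thm:incoherence} and atomic-norm uniqueness is not necessary --- you yourself note at the end that one could instead just observe that $Q$ restricted to the grid is a strict $\ell_1$ certificate, which is exactly what the paper does. That said, your packaging does buy one small thing: you only need linear independence of $\{\vf(\vr_k)\}_{k=1}^{\S}$, which follows immediately from the separation (or from invertibility of $\bar\mD$ in Proposition~\ref{prop:invertofbarD}), whereas the proposition the paper invokes requires linear independence of the columns of $\mA\mF_{\mathrm{grid}}$ on $\mc S$, i.e.\ of $\{\mA\vf(\vr_k)\}$, a condition depending on the random matrix that the paper does not explicitly verify.
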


Our second result is the discrete analogue of Theorem \ref{thm:mainresmimo}.
\begin{theorem} 
Let $\vs$ be a $\S$-sparse vector with non-zeros indexed by the support set $\mc S \subseteq [K_1]\times [K_2] \times [K_3]$, 
obeying the minimum separation condition, for all triplets $[n_1,n_2,n_3], [n_1',n_2',n_3'] \in \mc S$, 
\[
\frac{|n_{1} - n_{1}'|}{K_1} \geq \frac{10}{N_T N_R- 1} 
\quad \text{or}\quad
\frac{|n_{2} - n_{2}'|}{K_1} \geq \frac{5}{\N}
\quad\text{or}\quad
\frac{|n_{3} - n_{3}'|}{K_1}  \geq \frac{5}{\N}. 
\]
Furthermore, assume that the signs of the non-zeros of $\vs$ are chosen independently from symmetric distributions on the complex unit circle. Suppose that $\mA$ satisfies the conditions of Theorem \ref{thm:mainresmimo} and assume \eqref{eq:sleqb} holds for some fixed $\delta > 0$. 
Then, with probability at least $1-\delta$, $\mathbf{s}$ is the unique minimizer of $\mathrm{L1}(\vy)$. 
\label{cor:mimodiscrete}
\end{theorem}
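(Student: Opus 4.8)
The plan is to obtain Theorem~\ref{cor:mimodiscrete} as a corollary of the continuous MIMO result of Theorem~\ref{thm:mainresmimo}, by restricting the dual certificate built in that proof to the grid; this is the same discretization mechanism used for the subsampled model in \cite{tang_compressed_2013}. First I would record the trivial reduction: the columns of $\mF_{\mathrm{grid}}$ are exactly the atoms $\vf(n_1/K_1,n_2/K_2,n_3/K_3)\in\setA$, so the data is $\vy=\mA\vz$ with $\vz\defeq\mF_{\mathrm{grid}}\vs=\sum_{k=1}^{\S}b_k\vf(\vr_k)$, where $\vr_k=(n_1^{(k)}/K_1,n_2^{(k)}/K_2,n_3^{(k)}/K_3)$ ranges over $\mc S$ and the $b_k$ are the corresponding nonzeros of $\vs$. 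The discrete separation hypothesis is precisely the statement that these $\vr_k\in[0,1]^3$ satisfy the minimum separation condition~\eqref{eq:minsepcond}, and the signs $\sign(b_k)$ are i.i.d.\ symmetric on the unit circle; since $\mA$ and the sample budget \eqref{eq:sleqb} are assumed to meet the requirements of Theorem~\ref{thm:mainresmimo}, all of its hypotheses hold.

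Next I would invoke the certificate. The proof of Theorem~\ref{thm:mainresmimo}, which rests on the dual-polynomial construction in Lemma~\ref{prop:dualpolynomial}, produces, on an event of probability at least $1-\delta$, a dual vector $\vla$ whose trigonometric polynomial $Q(\vr)\defeq\innerprod{\vf(\vr)}{\herm{\mA}\vla}$ interpolates the signs, $Q(\vr_k)=\sign(b_k)$, obeys $|Q(\vr)|<1$ for all $\vr\in[0,1]^3\setminus\{\vr_1,\dots,\vr_\S\}$, and for which the vectors $\{\mA\vf(\vr_k)\}_{k=1}^{\S}$ are linearly independent. Setting $\vw\defeq\herm{(\mA\mF_{\mathrm{grid}})}\vla=\herm{\mF_{\mathrm{grid}}}\herm{\mA}\vla$, the $(n_1,n_2,n_3)$-th entry of $\vw$ equals $Q(n_1/K_1,n_2/K_2,n_3/K_3)$, so $\vw$ lies in the row space of $\mA\mF_{\mathrm{grid}}$, satisfies $[\vw]_j=\sign([\vs]_j)$ for $j\in\mc S$, and $|[\vw]_j|<1$ for every off-support grid index $j$ (each such grid point is distinct from all the $\vr_k$). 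Thus $\vw$ is a bona fide dual certificate for $\mathrm{L1}(\vy)$.

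I would then conclude with the textbook $\ell_1$ argument (e.g.\ \cite[Ch.~4]{foucart_mathematical_2013}): for any minimizer $\vs'$ of $\mathrm{L1}(\vy)$, the vector $\vh\defeq\vs'-\vs$ lies in $\ker(\mA\mF_{\mathrm{grid}})$ and hence $\innerprod{\vw}{\vh}=0$; comparing $\onenorm{\vs'}$ with $\onenorm{\vs}$ and using $|[\vw]_j|<1$ off $\mc S$ forces $\vh$ to be supported on $\mc S$, and the linear independence of $\{\mA\vf(\vr_k)\}$ --- equivalently injectivity of the columns of $\mA\mF_{\mathrm{grid}}$ indexed by $\mc S$ --- then forces $\vh=\vect{0}$. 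Hence $\vs$ is the unique minimizer of $\mathrm{L1}(\vy)$ with probability at least $1-\delta$.

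Since the entire quantitative content --- the certificate, its strict sup-norm bound, and the local injectivity --- is already delivered by the proof of Theorem~\ref{thm:mainresmimo}, the bulk of this argument is bookkeeping. The one point that needs care, and the only real obstacle, is checking that the construction in Lemma~\ref{prop:dualpolynomial} indeed yields the \emph{strict} inequality $|Q(\vr)|<1$ at all non-support points of $[0,1]^3$ (not merely on a finite set) together with linear independence of the $\{\mA\vf(\vr_k)\}$; both are standard outputs of the interpolation-based construction, so once Theorem~\ref{thm:mainresmimo} is in hand no new estimate is required.
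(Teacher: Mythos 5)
Your proposal follows essentially the same route as the paper's Appendix~\ref{sec:proofdiscrete}: reduce to the continuous signal $\vz=\mF_{\mathrm{grid}}\vs$, note that the discrete separation condition on $\mc S$ is exactly the minimum separation condition on the induced frequencies $\vr_k$, invoke Lemma~\ref{prop:dualpolynomial} (via the verification in Section~\ref{sec:proofmimo}) with $\vu=\sign(\vs_{\mc S})$ to obtain a polynomial $Q$ in the row space of $\mA$, and then restrict $Q$ to the grid to manufacture the discrete dual certificate. The only cosmetic difference is that you spell out the textbook $\ell_1$ uniqueness argument, whereas the paper packages it as a cited standard proposition and, like your write-up, takes the linear-independence of the support columns as an unstated byproduct of the construction.
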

Note that Theorems \ref{cor:discretesuperresII} and \ref{cor:mimodiscrete} do not impose any restrictions on $K_1,K_2,$ and $K_3$, in particular they can be arbitrarily large.  
The proof of those results, presented in Appendix \ref{sec:proofdiscrete}, is closely linked to that of Theorems \ref{thm:incoherence} and \ref{thm:mainresmimo}. Specifically, the existence of a certain dual certificate guarantees that $\vs$ is the unique minimizer of $\mathrm{L1}(\vy)$ in \eqref{eq:l1minmG}. 
These dual certificates are obtained directly from the dual polynomials for the continuos case in the proofs of Theorems \ref{thm:incoherence} and \ref{thm:mainresmimo}. See Appendix \ref{sec:proofdiscrete} for further details.
\section{\label{sec:numres} Numerical results}
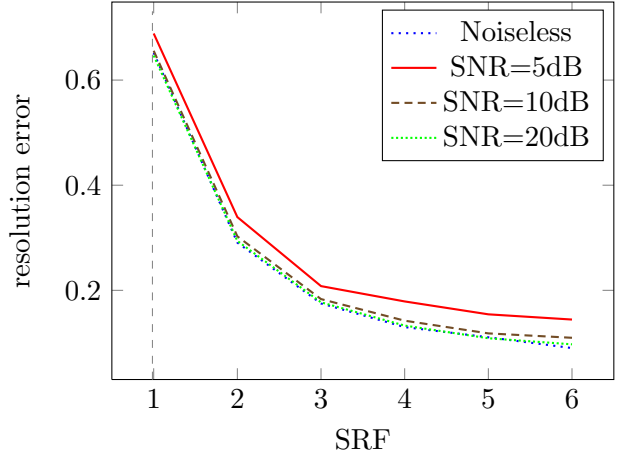
\begin{figure}[t]
\begin{center}
 \begin{tikzpicture}[scale=1,thick] 
 \begin{axis}[xlabel={$\text{SRF}$}, ylabel = {resolution error},  xtick={1,2,3,4,5,6},
 width = 0.5\textwidth, height=0.4\textwidth]   
 \addplot +[thick,mark=none, dotted] table[x index=0,y index=1]{./fig/offgrid.dat}; 
 \addlegendentry{Noiseless}
 \addplot +[thick,mark=none] table[x index=0,y index=2]{./fig/offgrid.dat};
 \addlegendentry{SNR=5dB}
  \addplot +[thick,mark=none,densely dashed] table[x index=0,y index=3]{./fig/offgrid.dat}; 
 \addlegendentry{SNR=10dB}
  \addplot +[thick,mark=none, green, densely dotted] table[x index=0,y index=4]{./fig/offgrid.dat};
 \addlegendentry{SNR=20dB} 
  \end{axis}
  \draw[very thin,dashed,gray] (0.54,0) -- (0.54,4.9);
 \end{tikzpicture}
 \end{center}
 \vspace{-0.4cm}
 \caption{\label{fig:realrecov} 
 Resolution error for the recovery of $\S = 5$ targets from the samples $\mathbf y$ with and without additive Gaussian noise $\mathbf n$ of a certain signal-to-noise ratio $\text{SNR} = \| \mathbf y \|^2_2/ \| \mathbf n \|^2_2$, for varying super-resolution factors (SRFs).  
 The resolution error is defined as the average over $( N_T^2 N_R^2(\hat \beta_k - \beta_k)^2 + \L^2(\hat \tau_k - \tau_k)^2 + \L^2(\hat \nu_k - \nu_k)^2)^{1/2}$, $k=\indstart,\ldots,\indend{\S}$, where $(\hat \beta_k,\hat \tau_k, \hat \nu_k)$ are the locations obtained by solving \text{L1-ERR}.
}
\end{figure}
In this section, we perform some numerical experiments to evaluate the resolution obtained by our approach, to demonstrate that it is robust to noise, and also to compare it to a competing algorithm. We focus on the MIMO radar problem, as results for the other matrices covered by our theory are similar.  
We set $N_T = 3, N_R = 3, \L = 41$, and draw $S=5$ target locations $(\beta_k,\tau_k,\nu_k)$ 
and corresponding attenuation factors $b_k$
uniformly at random from $[0,1] \times [0,2/\sqrt{\L}]^2$
and from the complex unit disc, respectively. 
Moreover, we choose $K_1 = \SRF \cdot N_T N_R, K_2 = \SRF \cdot \L$, and $K_3 = \SRF \cdot \L$, where $\SRF\geq 1$ can be interpreted as a super-resolution factor determining by how much the $(1/K_1,1/K_2,1/K_3)$ grid is finer than the ``coarse" grid $(1/(N_TN_R),1/\L,1/\L)$. 
To account for additive noise, we solve the following modification of $\mathrm{L1}(\vy)$ in \eqref{eq:l1minmG}
\begin{align}
\text{L1-ERR}\colon \underset{\minlet{\vs}}{\text{minimize}} \; \onenorm{\minlet{\vs}} \text{ subject to } 
\twonorm{\vy - \mR \minlet{\vs} }^2 \leq \eta,
\label{eq:BDDN}
\end{align}
with $\eta$ chosen on the order of the noise variance. 
There are two error sources incurred by this approach: the gridding error obtained by assuming the points lie on a grid with spacing\linebreak $(1/K_1,1/K_2,1/K_3)$, and the additive noise error. 
The former decreases when $\SRF$ increases, while the latter only depends on the noise variance. 
The results, depicted in Figure \ref{fig:realrecov}, show that the target resolution of our approach is significantly better than the compressed sensing based approach proposed in \cite{dorsch_refined_2015,strohmer_adventures_2015} corresponding to recovery on the coarse grid, i.e.,~$\SRF=1$. Moreover, the results show that our approach is robust to noise. 

We now compare our approach to the  Iterative Adaptive Approach (IAA) \cite{yardibi_source_2010}, proposed for MIMO radar in~\cite{roberts_iterative_2010}. 
IAA is based on weighted least squares and has been proposed in the array processing literature, where typically multiple measurements, i.e., multiple observations $\vy$ are available. 
IAA  can work well with even one measurement vector $\vy$ and can therefore be directly applied to the problems considered in this paper. 
However, to the best of our knowledge no rigorous performance guarantees are available for IAA. 
We compare the IAA algorithm \cite[Table II, ``The IAA-APES Algorithm'']{yardibi_source_2010} to L1-ERR in \eqref{eq:BDDN}. We use the same problem parameters as above, but with $\SRF = 3$ and $(\beta_k,\tau_k,\nu_k) = (k/(N_R N_t), k/\L, k/\L), k=0,\ldots,\S-1$, so that the location parameters lie on the coarse grid and are therefore separated. 
As before, we draw the corresponding attenuation factors $b_k$ i.i.d.~uniformly at random from the complex unit disc. 
Our results, depicted in Figure \ref{fig:cmpmusicconvex}, show that L1-ERR performs better in this experiment than IAA specially for small signal-to-noise ratios.

\begin{figure}[!ht]
\centering
\begin{tikzpicture}    

\begin{axis}[
xlabel = SNR in dB,
ylabel=resolution error,
x dir=reverse,
legend pos=outer north east,
yticklabel style={/pgf/number format/fixed,
                  /pgf/number format/precision=3},
                  width = 0.45\textwidth,
]

%

\addplot +[blue] table[x index=0,y index=1]{./fig/resultsSRF_nfp_26aug.dat};
\addlegendentry{ L1-ERR};

\addplot +[red] table[x index=0,y index=2]{./fig/resultsSRF_nfp_26aug.dat};
\addlegendentry{IAA};

\end{axis}

\end{tikzpicture}

\caption{
\label{fig:cmpmusicconvex}
Resolution error achieved by L1-ERR and by IAA applied to $\vy + \vn$, where $\vn \in \complexset^R$ is additive Gaussian noise, such that the signal-to-noise ratio is 
 $\text{SNR} \defeq \twonorm{\vy}^2 / \twonorm{ \vn }^2$. 
 As before, the resolution error is defined as $( N_T^2 N_R^2(\hat \beta_k - \beta_k)^2 + \L^2(\hat \tau_k - \tau_k)^2 + \L^2(\hat \nu_k - \nu_k)^2)^{1/2}$, 
where $(\hat \beta_k,\hat \tau_k, \hat \nu_k)$ are the locations obtained by using \text{L1-ERR} or IAA. 
The resolution error is averaged over $100$ trials. 
}

\end{figure}
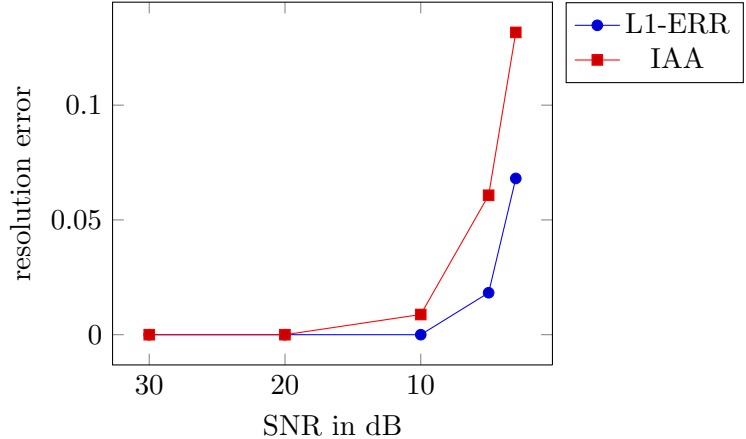




\section{\label{sec:proofs}Proofs of Theorems \ref{thm:incoherence} and \ref{thm:mainresmimo}}

Theorems \ref{thm:incoherence} and \ref{thm:mainresmimo} are proven by constructing an appropriate dual certificate. The existence of this  certificate guarantees that the solution to $\AN(\vy)$ in \eqref{eq:primal} is $\vz$, as formalized by 
Proposition \ref{prop:dualmin} below. 
Proposition \ref{prop:dualmin} is a consequence of strong duality, and well known for the discrete setting from the compressed sensing literature \cite{candes_robust_2006}. 
The proof of Proposition \ref{prop:dualmin} is standard, see for example~\cite[Proof of Prop.~2.4]{tang_compressed_2013}. 
\begin{proposition} 
Let $\vy=\mAA \vz$ with $\vz =   \sum_{k=\indstart}^{\indend{\S}}  b_k  \vf(\vr_k)$. If there exists a dual certificate
\begin{align}
Q(\vr) =  
 \innerprod{ \herm{\mAA}\vq}{\vf(\vr) } 
\label{eq:dualpolyinprop}
\end{align}
with complex coefficients $\vq \in \complexset^{\R}$ such that 
\begin{align}
Q(\vr_k) = \sign(b_k), \text{ for all } k = \indstart,\ldots,\indend{\S}, 
\quad
\text{and } |Q(\vr)| < 1 \text{ for all } \vr \in [0,1]^3 \setminus \{\vr_{\indstart},\ldots,\vr_{\indend{\S}}\}, 
\label{eq:dualpolyinatmincon}
\end{align}
then $\vz$ is the unique minimizer of $\AN(\vy)$. 
\label{prop:dualmin}
\end{proposition}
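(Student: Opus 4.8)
The plan is to establish Proposition~\ref{prop:dualmin} via standard convex duality, following the template of~\cite[Proof of Prop.~2.4]{tang_compressed_2013}. First I would write down the Lagrange dual of $\AN(\vy)$. Using the fact that the atomic norm $\norm[\setA]{\cdot}$ is the support function of its dual atomic norm unit ball, the dual of $\AN(\vy)$ is
\begin{align}
\underset{\vq \in \complexset^\R}{\text{maximize}} \;\; \Re{\innerprod{\vq}{\vy}} \quad \text{subject to} \quad \norm[\setA]{\herm{\mAA}\vq}^\ast \leq 1,
\label{eq:dualAN}
\end{align}
where the dual atomic norm is $\norm[\setA]{\vw}^\ast = \sup_{\vr \in [0,1]^3} |\innerprod{\vw}{\vf(\vr)}|$, so that the dual constraint is precisely $\sup_{\vr} |Q(\vr)| \leq 1$ with $Q(\vr) = \innerprod{\herm{\mAA}\vq}{\vf(\vr)}$ as in~\eqref{eq:dualpolyinprop}. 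Weak duality gives $\norm[\setA]{\vz} \geq \Re{\innerprod{\vq}{\vy}}$ for any dual feasible $\vq$.

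The next step is the exactness argument. Given a $\vq$ whose associated polynomial $Q$ satisfies~\eqref{eq:dualpolyinatmincon}, I would first verify it is dual feasible (automatic, since $|Q(\vr)| \leq 1$ everywhere), and then show it certifies optimality of $\vz$: writing $\vz = \sum_k b_k \vf(\vr_k)$ we have
\begin{align}
\Re{\innerprod{\vq}{\vy}} = \Re{\innerprod{\herm{\mAA}\vq}{\vz}} = \Re{\sum_k \conj{b_k} Q(\vr_k)} = \sum_k |b_k| \geq \norm[\setA]{\vz},
\end{align}
using $Q(\vr_k) = \sign(b_k)$ and $\conj{b_k}\sign(b_k) = |b_k|$ in the middle, and the trivial bound $\norm[\setA]{\vz} \leq \sum_k |b_k|$ at the end. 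Combined with weak duality this forces $\norm[\setA]{\vz} = \sum_k |b_k| = \Re{\innerprod{\vq}{\vy}}$, so $\vz$ attains the primal optimum.

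For uniqueness, suppose $\vz' = \sum_j b_j' \vf(\vr_j')$ is another optimal solution, i.e.\ $\mAA\vz' = \vy$ and $\norm[\setA]{\vz'} = \norm[\setA]{\vz}$, and take an atomic decomposition of $\vz'$ achieving (or nearly achieving) its atomic norm. Then
\begin{align}
\norm[\setA]{\vz'} = \Re{\innerprod{\vq}{\vy}} = \Re{\innerprod{\herm{\mAA}\vq}{\vz'}} = \Re{\sum_j \conj{b_j'} Q(\vr_j')} \leq \sum_j |b_j'| |Q(\vr_j')| \leq \sum_j |b_j'|,
\end{align}
and since $\sum_j |b_j'|$ can be taken arbitrarily close to $\norm[\setA]{\vz'}$, every inequality is tight; in particular $|Q(\vr_j')| = 1$ for every atom actually used. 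But the strict inequality $|Q(\vr)| < 1$ off the set $\{\vr_1,\ldots,\vr_\S\}$ forces every $\vr_j' \in \{\vr_1,\ldots,\vr_\S\}$, so $\vz'$ is supported on the same atoms as $\vz$. Finally, the coefficients are pinned down: on the subspace $\spn\{\vf(\vr_1),\ldots,\vf(\vr_\S)\}$ the measurement map $\mAA$ is injective (this uses that $\{\mAA\vf(\vr_k)\}$ are linearly independent, which follows from the minimum-separation condition making $\{\vf(\vr_k)\}$ linearly independent together with the dual certificate existing — or one argues it directly from tightness of the $\ell_1$/atomic bound plus strict feasibility), so $\mAA\vz' = \mAA\vz$ yields $\vz' = \vz$. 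The main obstacle, and the only genuinely non-routine point, is this last injectivity/uniqueness-of-coefficients step: one must argue carefully that the existence of the certificate $Q$ rules out a different coefficient vector on the same support, typically by noting that any optimal $\vz'$ on that support would have to satisfy $\sum_k \conj{b_k^{(z')}}\sign(b_k) = \sum_k |b_k^{(z')}|$, forcing $\sign(b_k^{(z')}) = \sign(b_k)$, and then using $\mAA(\vz - \vz') = 0$ with linear independence of $\{\mAA\vf(\vr_k)\}_{k=1}^\S$ — the latter being a consequence of the separation condition~\eqref{eq:mindistcond} and $\R \geq \S$. Everything else is a direct transcription of the discrete compressed-sensing duality argument.
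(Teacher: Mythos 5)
Your proposal follows the same standard convex-duality template that the paper itself invokes (it does not give a proof but cites~\cite[Proof of Prop.~2.4]{tang_compressed_2013}), and the optimality argument and the support-identification step via the strict inequality $|Q(\vr)|<1$ off $\T$ are correct.

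However, the last step---pinning down the coefficients once you know any competitor $\vz'$ lives in $\spn\{\vf(\vr_1),\ldots,\vf(\vr_\S)\}$---contains a genuine gap, and the resolution you sketch is incorrect. You claim that linear independence of $\{\mAA\vf(\vr_k)\}_{k=1}^\S$ is ``a consequence of the separation condition~\eqref{eq:mindistcond} and $\R\geq\S$.'' That does not follow: separation and distinctness make $\{\vf(\vr_k)\}$ linearly independent as vectors in $\complexset^{\L^3}$, but $\mAA$ can still annihilate their span (or a nontrivial subspace of it) even with $\R\geq\S$, so $\mAA\vz'=\mAA\vz$ need not force $\vz'=\vz$. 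Concretely, if there is a nonzero $\vc$ with $\sum_k c_k\mAA\vf(\vr_k)=\vect{0}$, then for small $\epsilon$ the vector $\vz+\epsilon\sum_k c_k\vf(\vr_k)$ is feasible, supported on $\T$, and one can have $\sum_k|b_k+\epsilon c_k|\geq\sum_k|b_k|$ for all small $\epsilon$ (e.g., $b_1=b_2=1$, $c_1=-c_2$), so the certificate $Q$ alone does not rule this out. The existence of $Q$ gives you only the single linear constraint $\sum_k\bar{c}_k\sign(b_k)=0$ on any such $\vc$, which is not enough to conclude $\vc=\vect{0}$.

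The standard way this is closed---and what Tang et al.\ actually do---is either to state linear independence of the ``observed'' atoms $\{\mAA\vf(\vr_k)\}_{k\in\T}$ as an explicit hypothesis of the proposition, or to obtain it as a byproduct of the \emph{construction} of the dual certificate: in the proof of Lemma~\ref{prop:dualpolynomial} the coefficients of $Q$ are computed by inverting the interpolation matrix $\mD$ built from the functions $G_{\vm}(\vr,\vr_k)=\innerprod{\mA\vg_{\vm}(\vr_k)}{\mA\vf(\vr)}$, and the invertibility of $\mD$ (which is shown on the event $\mathcal{E}_\xi$) is precisely what certifies that $\mAA$ is injective on $\spn\{\vf(\vr_k)\}$. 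So the injectivity is part of the package delivered by the certificate-construction lemma, not something one can derive from the mere existence of a $Q$ satisfying~\eqref{eq:dualpolyinatmincon}. Your writeup should either add the linear-independence hypothesis to the proposition statement or explicitly cite the invertibility of $\mD$ from the construction.
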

\noindent We now need to show that such a dual certificate exists under appropriate conditions. This is the subject of the next section.

\subsection{\label{sec:statementlem1}Sufficient conditions for the existence of dual certificates}

In this section we state the main technical result of this paper---Lemma~\ref{prop:dualpolynomial} below---which provides sufficient conditions on $\mA$ that guarantee the existence of a dual certificate as described in Proposition~\ref{prop:dualmin}. 
The proofs of Theorems \ref{thm:incoherence} and \ref{thm:mainresmimo} are based on verifying these conditions. 

To derive this lemma, we construct the dual certificate $Q$ of Proposition \ref{prop:dualmin} explicitly. 
Our construction of $Q$ is similar to that in \cite{heckel_super-resolution_2015}, 
and is inspired by the construction of related polynomials in \cite{candes_towards_2014,tang_compressed_2013}. 
We briefly outline the construction of $Q$, in order to introduce the notation required for stating our key lemma. 
To this aim, first note that the constraint \eqref{eq:dualpolyinprop} requires $Q$ to be a 3D-trigonometric polynomial with coefficients $\herm{\mA}\vq$. To build $Q$ we therefore need to construct a 3D-trigonometric polynomial that satisfies the conditions~\eqref{eq:dualpolyinatmincon}, and whose coefficients are constraint to be of the form $\herm{\mAA}\vq$, for some $\vq \in \complexset^{\R}$. 
Since $\mA$ is a random matrix, $Q$ is a \emph{random} trigonometric polynomial.   

It is instructive to first consider the construction of a \emph{deterministic} 3D trigonometric polynomial of the form $\bar Q(\vr) = \innerprod{\bar \vq}{\vf(\vr)}$ with deterministic coefficients $\bar \vq$ 
that satisfies the conditions~\eqref{eq:dualpolyinatmincon}, but whose coefficients $\bar \vq \in \complexset^{\L^3}$ are \emph{not} required to be of the form $\herm{\mA} \vq$. 
Such a construction has been established (provided a minimum separation condition on the $\vr_k$ holds) by \cite[Prop.~2.1, Prop.~C.1]{candes_towards_2014} for the 1D and 2D case. The 3D result follows in a similar manner, we therefore omit the details. 
To construct $Q(\vr)$, Cand\`es and Fernandez-Granda \cite{candes_towards_2014} interpolate the points $u_k \defeq \sign(b_k)$ with a fast-decaying kernel $\bar G$ (defined below) and slightly adopt this interpolation near the $\vr_k$ with the partial derivatives $\bar G^{(n_1,n_2,n_3)}(\vr) \defeq 
\frac{\derd^{n_1} }{ \derd \beta^{n_1}}
\frac{\derd^{n_2} }{ \derd \tau^{n_2}} \frac{\derd^{n_3} }{ \derd \nu^{n_3}}  
\bar G(\vr)$ (recall that $\vr = \transp{\vectbr{\beta,\tau,\nu}}$) to ensure that local maxima are achieved at the $\vr_k$. Specifically, $\bar{Q}$ takes the form 
\begin{align}
\bar Q(\vr) = \sum_{k=\indstart}^{\indend{\S}} \bar \alpha_k \bar G(\vr-\vr_k) + \bar \alpha_{1k} \bar G^{(1,0,0)}(\vr - \vr_k) +  \bar\alpha_{2k} \bar G^{(0,1,0)}(\vr - \vr_k)
+  \bar\alpha_{3k} \bar G^{(0,0,1)}(\vr - \vr_k). 
\label{eq:detintpolC}
\end{align}
Here, $\bar G^{(n_1,n_2,n_3)}(\vr) \defeq \frac{\derd^{n_1} }{ \derd \beta^{n_1}} \frac{\derd^{n_2} }{ \derd \tau^{n_2}} \frac{\derd^{n_3} }{ \derd \nu^{n_3}} \bar G(\vr)$ and 
$  
\bar G(\vr)
\defeq \FK(\tau) \FK(\nu) \FK(\beta)
$,
where $\FK$ is the squared Fej\'er kernel defined as
\[
 \FK(t) \defeq \left( \frac{\sin\left( \M \pi t\right)}{\M \sin(\pi t)} \right)^4, \quad \M \defeq \frac{\N}{2}+1.
\]
The Fej\'er kernel is a trigonometric polynomial of degree $\N/2$ which implies that $\FK$ is a trigonometric polynomial of degree $\N$. $\FK$ can be written in the form  
\begin{align}
\label{eq:def:FejK}
 \FK(t) = \frac{1}{\M} \sum_{k=-\N}^{\N}  g_k e^{i2\pi t k},
\end{align}
with coefficients $g_k$ obeying $|g_k| \leq 1$. 
Observe that shifted versions of $\FK$ (i.e.,~$\FK(t - t_0)$) and the derivatives of $\FK$ are also 1D trigonometric polynomials of degree $\N$. Therefore the kernel $\bar G$, its partial derivatives, and shifted versions are also 3D trigonometric polynomials of the form $\innerprod{\bar \vq }{\vf(\vr)}$, as desired. 

$\bar{Q}$ is not a valid dual certificate for the generalized line spectral estimation problem as it is not of the form \eqref{eq:dualpolyinprop}. However, inspired by the construction of $\bar{Q}$ we build $Q$ by interpolating the points $u_k = \sign(b_k)$ at $\vr_k$ with the functions 
\begin{align}
G_{\vm}(\vr,\vr_k) 
\defeq 
\innerprod{\mA \vg_{\vm}(\vr_k) }{ \mA \vf(\vr) }. 
\label{eq:Gstructuredef}
\end{align}
Here, $\vg_{\vm}(\vr), \vm \defeq \transp{[m_1,m_2,m_3]}$ is the vector with $(v,k,p)$-th coefficient given by
\begin{align}
\frac{1}{\M^3}
g_v g_k g_p
(i2\pi v)^{m_1}(i2\pi k)^{m_2} (i2\pi p)^{m_3}
e^{-i2\pi (\beta v + \tau k + \nu p)},  
\label{eq:defvge}
\end{align}
where the $g_k$ are the coefficients of the squared Fej\'er kernel $\FK$ in~\eqref{eq:def:FejK}. 
Assuming $\EX{\herm{\mA} \mA } =\mI$ we have 
\begin{align}
\EX{G_{\vm}(\vr, \vr_k)} 
=
\frac{\derd^{m_1}}{ \derd \beta^{m_1}} 
\frac{\derd^{m_2}}{ \derd \tau^{m_2}} 
\frac{\derd^{m_3}}{ \derd \nu^{m_3}} 
\FK(\beta-\beta_k) 
\FK(\nu-\nu_k) 
\FK(\tau-\tau_k)
=  
\bar G^{(\vm)}(\vr-\vr_k). 
\label{eq:exgmn}
\end{align}
We construct $Q$ by interpolating the $u_k$ at $\vr_k$ with the functions $G_{(0,0,0)}(\vr,\vr_k), k=\indstart,\ldots,\indend{\S}$, 
and slightly adopt this interpolation near the $\vr_k$ with linear combinations of the functions $G_{(1,0,0)}(\vr,\vr_k)$, $G_{(0,1,0)}(\vr,\vr_k)$, and $G_{(0,0,1)}(\vr,\vr_k)$, in order to ensure that local maxima of $Q$ are achieved exactly at the $\vr_k$: 
\begin{align}
Q(\vr) = \sum_{k=\indstart}^{\indend{\S}} 
\alpha_k G_{(0,0,0)}(\vr,\vr_k)
+ \alpha_{1k} G_{(1,0,0)}(\vr,\vr_k)
+ \alpha_{2k} G_{(0,1,0)}(\vr,\vr_k)
+ \alpha_{3k} G_{(0,0,1)}(\vr,\vr_k).
\label{eq:dualpolyorig}
\end{align}  
Since $Q(\vr)$ is a linear combination of the functions $G_{\vm}(\vr,\vr_k) = \innerprod{\herm{\mA}\mA \vg_{\vm}(\vr_k) }{ \vf(\vr) }$, 
it is of the form $\innerprod{\herm{\mA}\vq}{\vf(\vr)}$ as required by \eqref{eq:dualpolyinprop}. 

We are now ready to state our key lemma, whose proof is deferred to Appendix~\ref{sec:prooflemdualpoly}, providing sufficient conditions on $\mA$ ensuring that $Q$ is indeed an appropriate dual certificate. 
\begin{lemma}
\label{prop:dualpolynomial}
Let $\T = \{\vr_{\indstart}, \vr_1,\ldots,\vr_{\indend{\S}}\} \subset [0,1]^3$ be an arbitrary set of points obeying the minimum separation condition \eqref{eq:mindistcond}. 
Let $\vu \in \complexset^\S$ be a random vector, whose entries are chosen independently from symmetric distributions on the complex unit circle.  
Pick $\delta>0$, and let $\mA$ be a random matrix that obeys
\begin{subequations}
\begin{align}
\PR{
\frac{1}{\kappa^{\onenorm{\vm + \vn}}} 
\abs{
\innerprod{ (\herm{\mA}\mA - \mI) \vg_\vm(\vr_k)}{ \vf^{\vn}(\vr)}
}
\geq \frac{c_1}{\sqrt{S \log ( \L / \delta )   }   }
}
\leq  c_3 \frac{ \delta}{\L^{15}},
\label{eq:cond1new}
\end{align}
for all $\vr_k, \vr \in [0,1]^3$, all $\vm,\vn \in \{0,1,2\}^3$ with $\onenorm{\vm},\onenorm{\vn}\leq 2$. \\
Here, $[\vf^{\vn}(\vr)]_{(v,k,p)} \defeq  (-i2\pi v)^{n_1} (-i2\pi k)^{n_2} (-i2\pi p)^{n_3}  e^{-i2\pi (\beta v + \tau k + \nu p)}$ and $\kappa \defeq \sqrt{|\FK^{(2)}(0)|}$. 
Moreover, suppose that for $\vr_k \in [0,1]^3$, there is a constant $\hat c$ such that 
\begin{align}
\PR{
\max_{\vr \in [0,1]^3}  \left | \innerprod{\mA \vg_{\vect{0}}(\vr_k)}{\mA \vf(\vr)} \right| \geq  \frac{\hat c}{S} \L^{3}  
} \leq \frac{\delta}{8}. 
\label{eq:GrmGrgA}
\end{align}
\end{subequations}
Then, with probability at least $1-\delta$, there exists a trigonometric polynomial $Q(\vr) =  \innerprod{ \herm{\mAA}\vq}{\vf(\vr) }, \vr = \transp{\vectbr{\tau,\nu,\beta}}$, with complex coefficients $\vq \in \complexset^{\R}$ obeying
\begin{align}
Q(\vr_k) =  u_k, \text{ for all } \vr_k \in \T, \text{ and } |Q(\vr)| < 1 \text{ for all } \vr \in [0,1]^3 \setminus \T.
\label{eq:polyboundedinprop}
\end{align}
\end{lemma}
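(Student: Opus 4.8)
The plan is to take for $Q$ the randomized polynomial \eqref{eq:dualpolyorig}, pin down its coefficients $\vct{\alpha}=\{\alpha_k,\alpha_{1k},\alpha_{2k},\alpha_{3k}\}$ by imposing the interpolation conditions $Q(\vr_k)=u_k$ and $\nabla Q(\vr_k)=\vect{0}$ for all $k$, and then establish the strict bound $|Q(\vr)|<1$ off $\T$ by comparison with the deterministic squared--Fej\'er polynomial $\bar Q$ of \eqref{eq:detintpolC}. The comparison object is already under control: the 3D analogue of \cite[Prop.~2.1, Prop.~C.1]{candes_towards_2014}, valid under \eqref{eq:mindistcond}, provides coefficients $\bar{\vct{\alpha}}$ with $\infnorm{\bar{\vct{\alpha}}-[\vu;\vect{0}]}\le\tfrac14$ (so $\infnorm{\bar{\vct{\alpha}}}\le2$), an invertible $4\S\times4\S$ interpolation matrix $\bar{\mathbf B}$ (entries the suitably rescaled derivatives $\bar G^{(\vm+\vn)}(\vr_k-\vr_{k'})$, $\onenorm{\vm},\onenorm{\vn}\le1$) with $\opnorm{\bar{\mathbf B}^{-1}}=O(1)$ and off-diagonal blocks controlled by the decay of $\FK$ and its derivatives, the gap $\nabla^2\mathrm{Re}(\overline{u_k}\bar Q)(\vr_k)\preceq-c\N^2\mI$, and $|\bar Q(\vr)|\le1-c'$ whenever $\twonorm{\vr-\vr_k}>\rho/\N$ for all $k$, for fixed constants $c,c',\rho>0$.

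\emph{Step 1: the random interpolation system.} I would write the coefficient vector as the solution of $\mathbf B\vct{\alpha}=[\vu;\vect{0}]$, where the entries of $\mathbf B$ are the random numbers $\innerprod{(\herm{\mA}\mA)\vg_\vm(\vr_{k'})}{\vf^{\vn}(\vr_k)}$ ($\onenorm{\vm},\onenorm{\vn}\le1$), so that $\EX{\mathbf B}=\bar{\mathbf B}$ by \eqref{eq:exgmn}. Setting $\mathbf B=\bar{\mathbf B}+\mathbf E$ and using \eqref{eq:cond1new} together with a union bound over the $O(\S^2)$ index pairs (affordable since $\S\le\L^3$ and the per-event probability is $\lesssim\delta/\L^{15}$), I would get on an event $\mathcal E_1$ of probability $\ge1-\delta/4$ that every entry of $\mathbf E$ is at most $c_1\kappa^{\onenorm{\vm+\vn}}/\sqrt{\S\log(\L/\delta)}$; from this a matrix-perturbation argument exploiting the near-diagonal structure of $\bar{\mathbf B}$ keeps $\mathbf B$ invertible with $\opnorm{\mathbf B^{-1}}=O(1)$. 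The key quantitative output is $\vct{\alpha}-\bar{\vct{\alpha}}=-\mathbf B^{-1}\mathbf E\,\bar{\vct{\alpha}}$: since $\bar{\vct{\alpha}}$ equals $[\vu;\vect{0}]$ up to a deterministic correction of size $\le\tfrac14$ and the signs $\vu$ are drawn from symmetric distributions on the unit circle independently of $\mA$, Hoeffding's inequality applied coordinatewise to $\mathbf E[\vu;\vect{0}]$ replaces the crude bound $\S\cdot(\text{entry size})$ by $\sqrt{\S}\cdot(\text{entry size})\cdot\sqrt{\log(\L/\delta)}=O(1)$, which shrinks with $c_1$; this gives $\infnorm{\vct{\alpha}-\bar{\vct{\alpha}}}\le1/\sqrt{\log(\L/\delta)}$ on $\mathcal E_1$.

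\emph{Step 2: uniform control of $Q$ and the conclusion.} For fixed $\vr$ I would use
\[
Q(\vr)-\bar Q(\vr)=\sum_{k=1}^{\S}\sum_{\vm}\Big(\alpha_{\vm k}\innerprod{(\herm{\mA}\mA-\mI)\vg_\vm(\vr_k)}{\vf(\vr)}+(\alpha_{\vm k}-\bar\alpha_{\vm k})\,\bar G^{(\vm)}(\vr-\vr_k)\Big),
\]
$\vm$ over $\{\vect{0},\ve_1,\ve_2,\ve_3\}$ and $\alpha_{\vect{0}k}=\alpha_k$. Conditionally on $\mA$ the first sum is, by \eqref{eq:cond1new}, a combination of $\S$ terms of size $\le\epsilon\defeq c_1/\sqrt{\S\log(\L/\delta)}$ weighted by $\alpha_{\vm k}\approx u_k$, so Hoeffding over the signs bounds it by $\sqrt{\S}\,\epsilon\,\sqrt{\log(\L/\delta)}=O(1)$; the second sum is $\le\infnorm{\vct{\alpha}-\bar{\vct{\alpha}}}\cdot\sup_k\sum_{k'}|\bar G^{(\vm)}(\vr-\vr_{k'})|=O(1/\sqrt{\log(\L/\delta)})$ by Step 1 and a Schur-type estimate using \eqref{eq:mindistcond}. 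Union-bounding over a grid of spacing a suitable negative power of $\L$ (keeping the number of grid points within the $\delta/\L^{15}$ budget of \eqref{eq:cond1new}) and passing from the grid to all of $[0,1]^3$ via Bernstein's inequality for trigonometric polynomials together with the crude bound $\infnorm{Q}\lesssim\L^{4}$ supplied by \eqref{eq:GrmGrgA} and $\infnorm{\vct{\alpha}}=O(1)$, I would obtain on an event $\mathcal E_2$ of probability $\ge1-\delta/2$ that $\infnorm{Q-\bar Q}\le\epsilon_0$ for any prescribed small constant $\epsilon_0$, and hence $\|\nabla^{j}(Q-\bar Q)\|_\infty\le(2\pi\N)^{j}\epsilon_0$ for $j\le2$ by Bernstein again. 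On $\mathcal E_1\cap\mathcal E_2$ I would then finish as in \cite{candes_towards_2014}: for $\twonorm{\vr-\vr_k}\le\rho/\N$, a second-order Taylor expansion of $\mathrm{Re}(\overline{u_k}Q)$ and $\mathrm{Im}(\overline{u_k}Q)$ about $\vr_k$ using $Q(\vr_k)=u_k$, $\nabla Q(\vr_k)=\vect{0}$, the gap $\nabla^2\mathrm{Re}(\overline{u_k}\bar Q)(\vr_k)\preceq-c\N^2\mI$, and $\|\nabla^2(Q-\bar Q)\|_\infty\le(2\pi\N)^2\epsilon_0\ll c\N^2$ (the higher-order instances of \eqref{eq:cond1new} give the same estimate with cleaner constants) yields $|Q(\vr)|^2\le1-\tfrac c4\N^2\twonorm{\vr-\vr_k}^2<1$ for $\vr\ne\vr_k$; while for $\twonorm{\vr-\vr_k}>\rho/\N$ for all $k$ one has $|Q(\vr)|\le|\bar Q(\vr)|+\epsilon_0\le1-c'+\epsilon_0<1$. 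As these two cases exhaust $[0,1]^3\setminus\T$, \eqref{eq:polyboundedinprop} holds with probability $\ge1-\delta$.

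\emph{The main obstacle.} The delicate parts are the control of the random interpolation matrix in Step 1 and, above all, the uniform-in-$\vr$ control of the random polynomial in Step 2. Two features make the latter work. First, the sign-averaging: a plain triangle inequality on $Q-\bar Q$ (equivalently, on $\mathbf E\bar{\vct{\alpha}}$) would lose a factor $\sqrt{\S}$, and it is precisely to absorb this with Hoeffding---against a union bound over $\mathrm{poly}(\L)$ grid points and the near-diagonal structure of $\bar{\mathbf B}$---that \eqref{eq:cond1new} carries the denominator $\sqrt{\S\log(\L/\delta)}$ and the tail $\delta/\L^{15}$; this is also why Theorems \ref{thm:incoherence} and \ref{thm:mainresmimo} require the coefficient signs to be random. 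Second, the grid-to-continuum passage is only affordable because \eqref{eq:GrmGrgA} supplies a crude but global polynomial bound on $\infnorm{Q}$, so that Bernstein's inequality can upgrade the grid estimate without a prohibitively fine grid---the tension between grid fineness and the $\delta/\L^{15}$ budget is what pins down the exponent. The deterministic input---that the \emph{squared} Fej\'er construction has Hessian of order $\N^2$ at the $\vr_k$ and stays below $1-c'$ away from them, in 3D under the $5/\N$ separation---is quoted from \cite{candes_towards_2014}; the squared kernel and the $5/\N$ separation (rather than $1/\N$) are exactly what provide the quantitative curvature gap that survives the random perturbation.
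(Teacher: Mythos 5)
Your proposal is correct and follows essentially the same approach as the paper's Appendix~\ref{sec:prooflemdualpoly}: you build $Q$ as in \eqref{eq:dualpolyorig}, use \eqref{eq:cond1new} to show the random interpolation matrix is a small perturbation of $\bar\mD$ and hence invertible, compare $Q$ to the deterministic $\bar Q$ through the decomposition $Q-\bar Q=I_1+I_2$ with Hoeffding over the random signs absorbing the $\sqrt\S$ loss, then pass from a $\mathrm{poly}(\L)$ grid to all of $[0,1]^3$ using Bernstein's polynomial inequality together with the crude global bound from \eqref{eq:GrmGrgA}, and finish near and far from $\T$ using the curvature and gap properties of the squared-Fej\'er $\bar Q$. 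The only cosmetic differences are that you front-load a Hoeffding bound on $\infnorm{\vct{\alpha}-\bar{\vct{\alpha}}}$ in Step~1 (the paper instead bounds $\opnorm{\mL-\bar\mL}$ and applies Hoeffding to $I_2$ per grid point, which is a slightly cleaner bookkeeping of the same estimate) and that you close via a second-order Taylor expansion rather than the paper's Sylvester-criterion check on the Hessian of $|Q|$; both variants are equivalent in substance.
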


Theorems \ref{thm:incoherence} and \ref{thm:mainresmimo} 
are proven in the coming sections by showing that the conditions \eqref{eq:cond1new} and \eqref{eq:GrmGrgA} are satisfied for the corresponding random matrices $\mA$.



\subsection{\label{sec:proofthm1}Proof of Theorem~\ref{thm:incoherence}: Isotropic and incoherent matrices}
We prove Theorem \ref{thm:incoherence} by showing that the isotropy and incoherence conditions imply conditions~\eqref{eq:cond1new} and \eqref{eq:GrmGrgA} of 
Lemma~\ref{prop:dualpolynomial}, which in turn implies the existence of an appropriate dual certificate satisfying the conditions of Proposition \ref{prop:dualmin}. 

We start with verifying \eqref{eq:cond1new}. Note that 
\begin{align}
\frac{1}{\kappa^{\onenorm{\vm + \vn}}}\innerprod{ (\herm{\mA} \mA - \mI) \vg_\vm(\vr_k) }{\vf^{\vn}(\vr)}
=
\sum_{r=\rindstart}^{\rindend{\R}} 
\underbrace{ 
\frac{1}{\kappa^{\onenorm{\vm + \vn}}}
\herm{(\vf^{\vn}(\vr))}\left( \va_r \herm{\va}_r  - \frac{1}{\R} \mI \right) \vg_\vm(\vr_k) }_{v_r},
\label{eq:sumrar}
\end{align}
where $\herm{\va}_r$ are the rows of $\mA$. 
From the isotropy property ($\EX{\va_r \herm{\va}_r } = \frac{1}{\R} \mI$), we have $\EX{v_r} = 0$. 
Thus, \eqref{eq:sumrar} is a sum of independent zero-mean random variables, and we will use Bernstein's inequality to provide an upper bound. Define $\tilde \vf \defeq \frac{1}{\kappa^{\onenorm{\vn}} \sqrt{\L^3} } \vf^{\vn}(\vr)$ and $\tilde \vg \defeq \sqrt{\L^3} \frac{1}{\kappa^{\onenorm{\vm}}} \vg_\vm(\vr_k)$, for notational convenience. 
In order to apply Bernstein's inequality we need an upper bound on $|v_r|$. 
To derive this bound, first note that a simple application of H\"older's inequality yields
\begin{align}
|v_r|
&\leq 
\left| \herm{\tilde \vf} \va_r \herm{\va}_r  \tilde \vg \right| \leq 
\infnorm{\tilde \vf} \infnorm{\tilde \vg} \onenorm{\va_r}^2
\label{tempeq:useHoelder1}.
\end{align}
To bound $\abs{v_r}$ we proceed by bounding each of the terms in \eqref{tempeq:useHoelder1}. To this aim, first note that by using the fact that $\kappa=\sqrt{\FK''(0)} = \sqrt{\frac{\pi^2}{3}(\N^2+4\N)}$ (\cite[Eq.~2.23]{candes_towards_2014}) we can conclude that   
\begin{align}
\left(\frac{2\pi \N}{\kappa}\right)^{\onenorm{\vm+\vn}} \leq c',
\label{eq:bound2pinkappa}
\end{align}
holds for $\vm,\vn$ obeying $\onenorm{\vm}, \onenorm{\vn}\le 2$. Using the latter inequality we arrive at
\begin{align}
\label{ftemp}
\infnorm{\tilde{\mathbf{f}}}^2=\left(\frac{\infnorm{\mathbf{f}^\vn(\vr)}}{\kappa^{\onenorm{\vn}}}\right)^2\frac{1}{\L^3}=\left(\frac{\left(2\pi \N\right)^{\onenorm{\vn}}}{\kappa^{\onenorm{\vn}}}\right)^2\frac{1}{\L^3}\le \frac{\tilde{c}}{\L^3}.
\end{align}
Similarly, the definition of $\vg_{\vm}(\vr)$ in \eqref{eq:defvge} together with the coefficients $g_k$ obeying $|g_k|\leq 1$,  implies that $\infnorm{\vg_\vm(\vr_k)}\le \left(2\pi \N\right)^{\onenorm{\vm}}/\M^3$. Thus, using the latter inequality together with inequality~\eqref{eq:bound2pinkappa} and the fact that $\frac{\L}{\M}=\frac{2\N+1}{\N/2+1}\le 4$ yields
\begin{align}
\infnorm{ \tilde \vg}^2 = \L^3\left(\frac{\infnorm{\vg_\vm(\vr_k)}}{\kappa^{\onenorm{\vm}}}\right)^2
\leq
\frac{4^6}{\L^3}\left(\frac{(2\pi \N)^{\onenorm{\vm}}}{\kappa^{\onenorm{\vm}}}\right)^2
\leq 
\frac{\tilde{c}}{\L^3}.
\label{eq:boundtilvg}
\end{align}
Now utilizing \eqref{ftemp} and \eqref{eq:boundtilvg} together with the incoherence condition \eqref{eq:incoherence} in \eqref{tempeq:useHoelder1} yields
\begin{align}
\label{eq:useHoelder1}
\abs{v_r}
\le \frac{c}{\L^3}\onenorm{\va_r}^2\le c\frac{\mu}{\R} =: B.
\end{align}
In order to apply Bernstein's inequality, we also need an upper bound on 
$\sum_{r=\rindstart}^{\rindend{\R}} \EX{ |v_r|^2}$. 
To obtain such a bound, note that 
\begin{align}
\sum_{r=\rindstart}^{\rindend{\R}} \EX{ |v_r|^2}
&\leq 
\sum_{r=\rindstart}^{\rindend{\R}} \EX{ (\herm{\tilde \vf} \va_r \herm{\va}_r \tilde \vg )^2 } \nonumber \\
&\leq 
\sum_{r=\rindstart}^{\rindend{\R}} 
\infnorm{\tilde \vf}
\infnorm{\tilde \vg}  
\EX{\onenorm{\va_r}^2   |\herm{\tilde \vf} \va_r \herm{\va}_r \tilde \vg| }
\label{eq:useHoelder} \\
&\leq
\R \sqrt\frac{\tilde c}{\L^3}
 \sqrt\frac{\bar{c}}{\L^3}
\mu \frac{\L^3}{\R} 
\EX{|\herm{\tilde \vf} \va_r \herm{\va}_r \tilde \vg|}
= 
c' \mu\EX{|\herm{\tilde \vf} \va_r \herm{\va}_r \tilde \vg|}.
\label{eq:defsigmB}
\end{align}
Here, \eqref{eq:useHoelder} follows from H\"older's inequality, and for \eqref{eq:defsigmB} we used the upper bounds \eqref{ftemp} and \eqref{eq:boundtilvg} together with the incoherence assumption  \eqref{eq:incoherence}. Finally note that
\begin{align}
\EX{
\left| \herm{\tilde \vf} \va_r \herm{\va}_r  \tilde \vg \right| 
} 
&\leq 
\EX{ \max \left( \herm{\tilde \vf} \va_r \herm{\va}_r  \tilde \vf , 
\herm{\tilde \vg} \va_r \herm{\va}_r  \tilde \vg  \right)} \nonumber \\
&\leq 
\EX{ \herm{\tilde \vf} \va_r \herm{\va}_r  \tilde \vf + \herm{\tilde \vg} \va_r \herm{\va}_r  \tilde \vg  } \nonumber \\
&\leq 
 \herm{\tilde \vf} \EX{\va_r \herm{\va}_r} \tilde \vf + \herm{\tilde \vg} \EX{ \va_r \herm{\va}_r}  \tilde \vg  \nonumber \\
&=
\frac{1}{\R} \left( \twonorm{\tilde \vf}^2 + \twonorm{\tilde \vg}^2 \right) \label{eq:byisotropy}\\
&\leq 
\frac{c}{\R}
\label{eq:bytftgb}. 
\end{align}
Here, \eqref{eq:byisotropy} follows from the isotropy assumption, 
and for \eqref{eq:bytftgb} we used the upper bounds~\eqref{ftemp} and \eqref{eq:boundtilvg} on the maximal value of $\tilde \vf$ and $\tilde \vg$ together with $\twonorm{\cdot } \leq \sqrt{\L^3} \infnorm{\cdot}$. Plugging \eqref{eq:bytftgb} in \eqref{eq:defsigmB} yields
\begin{align}
\sum_{r=\rindstart}^{\rindend{\R}}
\EX{ |v_r|^2}
\le c\frac{\mu}{\R} =: \sigma^2.
\label{sigma2}
\end{align} 
We now have all the elements to apply Bernstein's inequality. Before applying Bernstein's inequality first note that by assumption~\eqref{eq:coherenceass}, we have
\begin{align}
\frac{c_1}{\sqrt{\S \log ( \L / \delta )   }   }\ge t:=\sqrt{c'\frac{\mu}{\R}\log(\L/\delta)}.
\label{ms}
\end{align}
Thus an application of Bernstein's inequality yields
\begin{align}
\PR{
\frac{1}{\kappa^{\onenorm{\vm + \vn}}} 
\abs{\innerprod{ (\herm{\mA}\mA - \mI) \vg_\vm(\vr_k)}{ \vf^{\vn}(\vr)}}
\geq \frac{c_1}{\sqrt{S \log ( \L / \delta )   }   }}&\le\PR{
\left| \sum_{r=\rindstart}^{\rindend{\R}} v_r  \right|
\geq t}\nonumber\\
&\le 2 \text{exp}\left(-\frac{\frac{1}{2}t^2}{\sigma^2+\frac{1}{3}Bt}\right)\nonumber\\
&\le 2 e^{-15\log(\L/\delta)}\nonumber\\
&\le 2 \frac{\delta}{\L^{15}}.\nonumber
\end{align}
This concludes the proof of condition~\eqref{eq:cond1new}.

We next show that condition~\eqref{eq:GrmGrgA} holds. Note that
\begin{align}
\left | \innerprod{\mA \vg_{\vect{0}}(\vr_k)}{\mA \vf(\vr)} \right|
&\leq \sum_{r=\rindstart}^{\rindend{\R}}
\left| \herm{\vf}(\vr) \va_r \herm{\va}_r \vg_{\vect{0}}(\vr_k) \right| \nonumber \\
&\leq 
\sum_{r=\rindstart}^{\rindend{\R}}
\infnorm{\tilde \vf} \onenorm{\va_r}^2 \infnorm{\tilde \vg}
\leq c \mu
\leq \frac{\R}{\S \log^2(\L/\delta) } 
\leq \frac{\L^3}{\S}. 
\nonumber 
\end{align}
Here, we used incoherence ($\onenorm{\va_r}^2 \leq \frac{\L^3}{\R} \mu$), the upper bounds \eqref{ftemp} and \eqref{eq:boundtilvg}, 
and finally assumption \eqref{eq:coherenceass} and $\R \leq \L^3$. 
This concludes the proof of \eqref{eq:GrmGrgA}. 


\subsection{\label{sec:proofmimo}Proof of Theorem \ref{thm:mainresmimo}: MIMO radar}

\newcommand{\msA}{\bar \mA} 

We first show that the MIMO-radar input-output relation~\eqref{eq:periorelMIMO} obeys $\vy = \mAA \vz$, with $\mAA \in \complexset^{N_R \L \times N_R N_T \L^2}$ defined in~\eqref{eq:defAmimo}.  
Recall that we assume that $\L = N_R N_T$. 
First note that equation~\eqref{eq:periorelMIMO} can be written as 
\begin{align}
\vy
&=
\begin{bmatrix}
\vy_{\rindstart}\\
\vdots \\
\vy_{\rindend{\R}}
\end{bmatrix} 
= 
\sum_{k=\indstart}^{\indend{\S}}
b_k' 
\begin{bmatrix}
e^{i2\pi 0 \beta_k }  \sum_{j=0}^{N_T-1} e^{i2\pi j \beta} \mc F_{\nu_k}
\mc T_{\tau_k} \vx_j  \\
\vdots  \\
e^{i2\pi N_T (N_R-1) \beta_k }  \sum_{j=0}^{N_T-1} e^{i2\pi j \beta_k } \mc F_{\nu_k}
\mc T_{\tau_k} \vx_j 
\end{bmatrix} \nonumber \\
&=
\sum_{k=\indstart}^{\indend{\S}}
\underbrace{b_k' e^{i2\pi \beta_k \N} }_{b_k}
\begin{bmatrix}
[\mc F_{\nu}  \mc T_{\tau_k} \vx_{0} \ldots \mc F_{\nu_k}  \mc T_{\tau_k} \vx_{N_T-1} ] & \vect{0} & \vect{0} \\
 \vect{0}& \ddots & \vect{0} \\
\vect{0} &  \vect{0} & [\mc F_{\nu_k}  \mc T_{\tau_k} \vx_{0} \ldots \mc F_{\nu_k} \mc T_{\tau_k} \vx_{N_T-1} ]
\end{bmatrix}
\vf(\beta),
\label{eq:mimoinmtxform}
\end{align}
where $[\vf(\beta)]_j = e^{i2\pi \beta j }, j =-\N,\ldots,\N$. 
With 
\[
\mc F_{\nu}  \mc T_{\tau} \vx 
=
\mG_{\vx} \herm{\mF} 
\vf([\tau,\nu]),
\]
from~\eqref{eq:tfshiftGF}, it is seen that equality~\eqref{eq:mimoinmtxform} becomes 
\begin{align}
\vy =
\sum_{k=\indstart}^{\indend{\S}} 
b_k \mA \vf(\vr), 
\quad \mA \defeq \begin{bmatrix}
\mB &  &  \\
& \ddots & \\
 & & \mB 
\end{bmatrix}. 
\end{align}
This concludes the proof of the MIMO-radar input-output relation~\eqref{eq:periorelMIMO} obeying relation~$\vy = \mA \vz$ with $\vz$ defended in~\eqref{eq:syseqinwdim}. 

The remainder of the proof of Theorem \ref{thm:mainresmimo} is again  similar to the proof of Theorem \ref{thm:incoherence} in Section \ref{sec:proofthm1}. 
Specifically, we show that the conditions~\eqref{eq:cond1new} and \eqref{eq:GrmGrgA} of 
Lemma~\ref{prop:dualpolynomial} are satisfied,  which in turn implies the existence of an appropriate dual certificate. 
For notational convenience, we prove that those conditions are satisfied for the normalized measurement model
\[
\bar \vy = \msA \vz, 
\text{
where $\bar \vy \defeq \sqrt{\L^2 N_T} \vy$ and $\msA \defeq \sqrt{\L^2 N_T} \mA$. 
}
\]
In a nutshell, we first show that $\innerprod{ \msA \vg_\vm(\vr_k)}{\msA \vf^{\vn}(\vr)}$ in conditions~\eqref{eq:cond1new} and \eqref{eq:GrmGrgA}  can be expressed as the quadratic form (see Section~\ref{sec:proofquadform}) 
\begin{align}
\innerprod{ \msA \vg_\vm(\vr_k)}{\msA \vf^{\vn}(\vr)}
= 
\herm{\vx}  \mV_{\vm}^{(\vn)}(\vr, \vr_k) \, \vx, 
\label{eq:gmnqform}
\end{align}
where $\mV_{\vm}^{(\vn)}(\vr, \vr_k) \in \complexset^{\L N_T \times \L N_T}$ is defined in equation~\eqref{eq:exprVqf} in Section~\ref{sec:proofquadform} below, and $\vx \defeq \transp{[\transp{\vx}_0,\ldots,\transp{\vx}_{N_T-1}]}$ is a Gaussian random vector (recall that the probing signals are Gaussian). 
Conditions~\eqref{eq:cond1new} and \eqref{eq:GrmGrgA} then follow by utilizing the Hanson-Wright inequality stated in the lemma below,
which ensures that this quadratic form does not deviate too much from its expectation
\begin{align}
\EX{\innerprod{ \msA \vg_\vm(\vr_k)}{\msA \vf^{\vn}(\vr)}}
=
\innerprod{\vg_\vm(\vr_k)}{\vf^{\vn}(\vr)}. 
\label{eq:exgmn}
\end{align}
Equality in~\eqref{eq:exgmn} holds by $\EX{\herm{\msA} \msA } = \mI$, which in turn follows from 
$\EX{\herm{\mG}_{\vx_k} \mG_{\vx_k}} = \frac{1}{N_T} \mI$ \cite[Sec.~8.1]{heckel_super-resolution_2015} by noting that 
\[
 \EX{  \herm{(\mG_{\vx_k} \herm{\mF})} \mG_{\vx_k} \herm{\mF} } 
 =
 \mF \, \EX{ \herm{\mG}_{\vx_k} \mG_{\vx_k} } \herm{\mF}
 = 
 \frac{1}{\L^2 N_T} \mI 
\]
and 
\[
\EX{ \herm{(\mG_{\vx_k} \herm{\mF})} \mG_{\vx_{k'} } \herm{\mF}  } = \vect{0},\quad \text{for } k \neq k'. 
\]

\begin{lemma}[Hanson-Wright inequality {\cite[Thm.~1.1]{rudelson_hanson-wright_2013}}]
Let $\mathbf{x} \in \mathbb R^{\L N_T}$ be a random vector with independent zero-mean $K$-sub-Gaussian entries (i.e., the entries obey $\sup_{p\geq 1} p^{-1} (\EX{|x_\ell|^p})^{1/p} \leq K$), and let $\mV$ be an $\L N_T\times \L N_T$ matrix. Then, for all $t\geq 0$, 
\[
\PR{ | \transp{\vx} \mV \mathbf{x}  -  \EX{\transp{\vx} \mV \vx} |  > t } \leq 2 \exp\left( - \tilde c \min\left( \frac{t^2}{K^4 \norm[_F]{\mV}^2  },  \frac{t}{K^2 \norm[]{\mV}  }\right) \right),
\]
where $\tilde c$ is a numerical constant. 
\label{thm:hanswright}
\end{lemma}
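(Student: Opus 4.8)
The plan is to reproduce the argument of Rudelson and Vershynin \cite{rudelson_hanson-wright_2013}. By homogeneity in $\mV$ together with rescaling of the $x_\ell$ we may assume $K=1$; and since $\vx$ is real, only the symmetric part $\frac12(\mV+\transp{\mV})$ enters the quadratic form, and passing to it only decreases both $\left\|\mV\right\|_F$ and $\left\|\mV\right\|$, so we may also assume $\mV=\transp{\mV}$ (for complex $\mV$, apply the bound to the real and imaginary parts separately). Then split
\begin{align}
\transp{\vx}\mV\vx-\EX{\transp{\vx}\mV\vx}
=\sum_{\ell} V_{\ell\ell}\big(x_\ell^2-\EX{x_\ell^2}\big)
\;+\;\sum_{\ell\neq\ell'} V_{\ell\ell'}\,x_\ell x_{\ell'},
\nonumber
\end{align}
bound the two pieces separately, and recombine via a union bound after writing $t=\tfrac t2+\tfrac t2$.

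For the diagonal sum the summands $x_\ell^2-\EX{x_\ell^2}$ are independent, mean zero, and sub-exponential (the square of a sub-Gaussian variable is sub-exponential), so Bernstein's inequality for sums of independent sub-exponential variables gives a tail of exactly the stated two-regime form, with $\sum_\ell V_{\ell\ell}^2\le\left\|\mV\right\|_F^2$ in the quadratic regime and $\max_\ell|V_{\ell\ell}|\le\left\|\mV\right\|$ in the linear regime.

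The off-diagonal sum is the heart of the proof; the plan is to bound its moment generating function $\EX{\exp(\lambda\sum_{\ell\neq\ell'}V_{\ell\ell'}x_\ell x_{\ell'})}$ and then conclude by a Chernoff argument. First, a standard decoupling inequality (applicable since $u\mapsto e^{\lambda u}$ is convex) bounds this by $\EX{\exp(4\lambda\,\transp{\vx}\mV\vx')}$, where $\vx'$ is an independent copy of $\vx$ and the sum now runs over all pairs $\ell,\ell'$. Conditioning on $\vx'$, the form $\transp{\vx}(\mV\vx')=\sum_\ell x_\ell(\mV\vx')_\ell$ is a linear combination of independent mean-zero sub-Gaussian variables, hence itself sub-Gaussian with variance proxy $\twonorm{\mV\vx'}^2$, so $\EX[\vx]{\exp(4\lambda\,\transp{\vx}\mV\vx')}\le\exp(c\lambda^2\twonorm{\mV\vx'}^2)$. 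It then remains to control $\EX[\vx']{\exp(\mu\twonorm{\mV\vx'}^2)}$ with $\mu=c\lambda^2$. The key device is to linearise this square using an auxiliary standard Gaussian vector $\vg$ independent of everything: from $\EX[\vg]{\exp(\sqrt{2\mu}\,\innerprod{\vg}{\mV\vx'})}=\exp(\mu\twonorm{\mV\vx'}^2)$ and Fubini,
\begin{align}
\EX[\vx']{\exp(\mu\twonorm{\mV\vx'}^2)}
=\EX[\vg]{\EX[\vx']{\exp\!\big(\sqrt{2\mu}\,\innerprod{\transp{\mV}\vg}{\vx'}\big)}}
\le\EX[\vg]{\exp(c\mu\twonorm{\transp{\mV}\vg}^2)},
\nonumber
\end{align}
where the last step again uses the independent sub-Gaussian coordinates of $\vx'$. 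Now $\vg$ is genuinely Gaussian, so $\twonorm{\transp{\mV}\vg}^2$ is a weighted sum of independent $\chi^2_1$ variables with weights the squared singular values $\sigma_k^2$ of $\mV$, and $\EX[\vg]{\exp(c\mu\twonorm{\transp{\mV}\vg}^2)}=\prod_k(1-2c\mu\sigma_k^2)^{-1/2}\le\exp(c'\mu\sum_k\sigma_k^2)=\exp(c'\mu\left\|\mV\right\|_F^2)$ as soon as $\mu\le 1/(4c\left\|\mV\right\|^2)$. Tracking constants through the chain yields $\EX{\exp(\lambda\sum_{\ell\neq\ell'}V_{\ell\ell'}x_\ell x_{\ell'})}\le\exp(C\lambda^2\left\|\mV\right\|_F^2)$ for all $|\lambda|\le 1/(C\left\|\mV\right\|)$; applying Markov's inequality to $e^{\lambda(\cdot)}$ and optimising $\lambda$ over this range produces the two-regime tail, with the quadratic regime governing $t\lesssim\left\|\mV\right\|_F^2/\left\|\mV\right\|$ and the linear regime beyond. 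Adding the diagonal contribution finishes the proof.

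I expect the decoupling step to be the point that needs genuine care: it is the one place where a nontrivial probabilistic lemma replaces an otherwise elementary computation, and pinning down the multiplicative constant (and the passage from $\sum_{\ell\neq\ell'}$ to the full product $\sum_{\ell,\ell'}$) is exactly the subtlety. The Gaussian linearisation of $\EX{\exp(\mu\twonorm{\mV\vx'}^2)}$ is the other delicate move --- it is what lets us sidestep the fact that the coordinates of $\mV\vx'$ are not independent (so a naive product-of-MGFs bound on $\twonorm{\mV\vx'}^2$ is unavailable) by transferring the calculation to a true Gaussian, where the spectral decomposition applies cleanly.
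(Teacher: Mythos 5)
The paper does not prove this lemma at all; it invokes it directly as cited from Rudelson and Vershynin, so there is no in-paper proof against which to compare. Your proposal is a faithful and essentially correct reconstruction of the cited source's argument: the reduction to $K=1$ and symmetric $\mV$, the diagonal/off-diagonal split with Bernstein's inequality for sub-exponential variables handling the diagonal, and for the off-diagonal sum the decoupling inequality followed by the Gaussian linearisation of $\exp(\mu\twonorm{\mV\vx'}^2)$ and the Gaussian chi-square moment generating function. The two points you flag as needing care — tracking the decoupling constant and the passage from $\sum_{\ell\neq\ell'}$ to $\sum_{\ell,\ell'}$, and the Gaussian comparison device that sidesteps the dependence among the coordinates of $\mV\vx'$ — are indeed the technical core of the Rudelson–Vershynin proof, and you have correctly identified why each is indispensable. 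One remark worth promoting from a parenthetical to an explicit step: in the paper's application (Section~\ref{sec:proofmimo}) the matrix $\mV = \mV_{\vm}^{(\vn)}(\vr,\vr_k)$ is complex even though $\vx$ is real, so the reduction to a real (and then real symmetric) matrix by splitting $\mV$ into real and imaginary parts and applying the real-matrix inequality to each is actually needed for the lemma to be used as stated, not merely a stylistic convenience; as you note, it costs only an absolute constant since $\norm[F]{\Re\mV},\norm[F]{\Im\mV}\le\norm[F]{\mV}$ and likewise for the operator norm.
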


We next detail how conditions~\eqref{eq:cond1new} and \eqref{eq:GrmGrgA} follow from the Hanson-Wright inequality. 
In order to evaluate the RHS of the Hanson-Wright inequality, we need the following upper bound on the norm of $\mV_{\vm}^{(\vn)}(\vr, \vr_k)$, proven in Section~\ref{sec:prooflem:boundmV} below. 

\begin{lemma}
\label{lem:boundmV}
For all $\vr, \vr_k \in [0,1]^3$, and for all positive integer vectors $\vn, \vm$ with $\onenorm{\vn + \vm} \leq 4$, 
we have
\begin{align}
\norm[F]{ \mV_{\vm}^{(\vn)}(\vr, \vr_k) }  
\leq 
c_4 
(2\pi \N)^{\onenorm{\vm + \vn}}
N_T  \sqrt{\L}. 
\label{eq:boundonVFnorm}
\end{align}
\end{lemma}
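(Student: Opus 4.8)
## Proof Proposal for Lemma~\ref{lem:boundmV}

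The plan is to start from the explicit formula~\eqref{eq:exprVqf} for $\mV_{\vm}^{(\vn)}(\vr, \vr_k)$ (referenced but not displayed in this excerpt), which expresses the matrix as a sum/product of structured blocks coming from the Gabor matrices $\mG_{\vx_j}$, the Fourier matrix $\herm{\mF}$, and the weighted Fej\'er coefficient vectors $\vg_\vm(\vr_k)$ and $\vf^{\vn}(\vr)$. Since $\innerprod{\msA \vg_\vm(\vr_k)}{\msA \vf^{\vn}(\vr)} = \herm{\vx}\mV_{\vm}^{(\vn)}(\vr,\vr_k)\vx$ and $\msA$ has the block-diagonal structure in~\eqref{eq:defAmimo} with blocks $\mB = [\mG_{\vx_0}\herm{\mF}\,\ldots\,\mG_{\vx_{N_T-1}}\herm{\mF}]$ rescaled by $\sqrt{\L^2 N_T}$, the matrix $\mV_{\vm}^{(\vn)}$ should decompose into an $N_T \times N_T$ array of $\L \times \L$ blocks, each of which is (up to scalars) a product of the form $\herm{\mF}$ times a diagonal/Toeplitz-like factor built from the entries of $\vg_\vm$ and $\vf^{\vn}$ times $\mF$, summed against the $N_R$ copies coming from the outer block-diagonal structure. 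First I would write this decomposition down cleanly and identify which factors contribute to the Frobenius norm.

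The key steps, in order: (i) Use $\norm[F]{\cdot} \le \sqrt{\text{rank}}\cdot\norm[]{\cdot}$ is the wrong direction here, so instead bound $\norm[F]{\mV_{\vm}^{(\vn)}}$ directly by expanding $\norm[F]{\mV}^2 = \text{trace}(\herm{\mV}\mV)$ and using the structure. Alternatively—and this is what I'd actually do—write $\mV_{\vm}^{(\vn)}(\vr,\vr_k)$ as a (short) sum of rank-one-ish or Kronecker-structured terms $\vu\herm{\vw}$ where $\vu$ depends on $\vg_\vm(\vr_k)$ (pushed through $\herm{\mF}$ and the Gabor structure) and $\vw$ depends on $\vf^{\vn}(\vr)$, so that $\norm[F]{\vu\herm{\vw}} = \twonorm{\vu}\twonorm{\vw}$. (ii) Bound $\twonorm{\vu}$ and $\twonorm{\vw}$: the crucial inputs are $\infnorm{\vg_\vm(\vr_k)} \le (2\pi\N)^{\onenorm{\vm}}/\M^3$ (established in the proof of Theorem~\ref{thm:incoherence} from~\eqref{eq:defvge} and $|g_k|\le 1$) and the analogous trivial bound $\infnorm{\vf^{\vn}(\vr)} \le (2\pi\N)^{\onenorm{\vn}}$ from its definition in Lemma~\ref{prop:dualpolynomial}. (iii) Track the dimension factors: the $\herm{\mF}$ and $\mF$ factors are (scaled) unitaries so they convert $\ell_\infty$ bounds to $\ell_2$ bounds at the cost of powers of $\L$; the Gabor matrix $\mG_{\vx}$ deterministically has operator norm that scales appropriately (this is where $\sqrt{\L}$ enters rather than $\L$); and the outer $N_R$-fold block-diagonal repetition together with $\L = N_T N_R$ produces the $N_T\sqrt{\L}$ scaling. (iv) Collect the powers of $(2\pi\N)$, which combine to $(2\pi\N)^{\onenorm{\vm+\vn}}$, absorb all numerical constants (including the $\L/\M \le 4$ bounds) into $c_4$.

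The main obstacle I anticipate is bookkeeping the dimensional factors correctly: there are several places where a factor of $\L$, $\sqrt{\L}$, $\M$, $N_T$, or $N_R$ could enter, and getting $N_T\sqrt{\L}$ rather than, say, $N_T\L$ or $N_T^{3/2}\sqrt{\L}$ requires carefully exploiting (a) the fact that the Gabor matrix $\mG_{\vx}$ with a deterministic (here, the bound must hold for the realized Gaussian $\vx$—but wait, Lemma~\ref{lem:boundmV} is stated deterministically in $\vr,\vr_k$ and does not mention $\vx$, so in fact $\mV_{\vm}^{(\vn)}$ must be defined with the randomness of $\vx$ already extracted, i.e. $\mV$ is a \emph{deterministic} matrix and $\vx$ is the only random object)—so the $\sqrt{\L}$ comes purely from the deterministic Fourier/Gabor combinatorics, not from concentration; and (b) the normalization $\msA = \sqrt{\L^2 N_T}\mA$ which cancels against the $1/\L^2$ and $1/N_T$ normalizations hidden in $\herm{\mF}$ (entries $1/\L^2$) and in the definition of the MIMO measurement. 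I would double-check the normalization using the sanity check $\EX{\herm{\msA}\msA} = \mI$ stated just before Lemma~\ref{thm:hanswright}, which pins down the overall scale, and then the $\vm=\vn=\vect{0}$, $\vr=\vr_k$ case should reduce to a quantity of size $\Theta(\sqrt{\L^3})$ consistent with~\eqref{eq:GrmGrgA} and with~\eqref{eq:boundonVFnorm} giving $N_T\sqrt{\L} = \sqrt{N_T^2\L} = \sqrt{N_T N_R \cdot N_T/N_R \cdot \L}$; reconciling this with the expected $\L^{3/2}$ scaling (note $\L = N_T N_R$, so $N_T\sqrt{\L}$ vs $\L^{3/2}$ differ, which is fine since $\mV$'s Frobenius norm is not the same quantity as $|\herm{\vx}\mV\vx|$) is exactly the kind of consistency check I would run to catch errors.
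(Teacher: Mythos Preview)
Your proposal identifies the right concern---getting $N_T\sqrt{\L}$ rather than $N_T\L$---but does not supply a mechanism for the $\sqrt{\L}$ saving, and the mechanism you tentatively suggest (the operator norm of $\mG_{\vx}$) is, as you yourself notice mid-paragraph, a dead end because $\mV_{\vm}^{(\vn)}$ is deterministic and $\vx$ appears only externally in the quadratic form.

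The paper's proof rests on two structural facts you are missing. First, from \eqref{eq:exprVqf} and the block structure of $\mB_\vf,\mB_\vg$ one obtains the clean Kronecker factorization
\[
\mV_{\vm}^{(\vn)}(\vr,\vr_j)=\frac{N_T}{\M}\,\mD\;\otimes\;\frac{1}{\M^2}\herm{\mH}_\vf\mH_\vg,
\]
where $\mD\in\complexset^{N_T\times N_T}$ collects the angle-dependent phases and $\mH_\vf,\mH_\vg\in\complexset^{\L\times\L}$ are the deterministic matrices introduced in Section~\ref{sec:proofquadform} via $\L\mG_{\vx}\herm{\mF}(\vg_{m_3}(\nu)\otimes\vg_{m_2}(\tau))=\mH_\vg\vx$. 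This immediately gives $\norm[F]{\mV}=\frac{N_T}{\M}\norm[F]{\mD}\cdot\frac{1}{\M^2}\norm[F]{\herm{\mH}_\vf\mH_\vg}$, so the problem splits into an easy angle factor and a delay--Doppler factor. The bound $\norm[F]{\mD}\le 4(2\pi\N)^{m_1+n_1}N_T$ is elementary (sum of $N_R$ outer products of $N_T$-vectors with bounded entries, then $N_TN_R=\L\le 4\M$).

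Second---and this is the real content---the $\sqrt{\L}$ factor comes from the bound $\norm[F]{\frac{1}{\M^2}\herm{\mH}_\vf\mH_\vg}\le c(2\pi\N)^{m_2+m_3+n_2+n_3}\sqrt{\L}$, which the paper imports from \cite[Lem.~3]{heckel_super-resolution_2015}. This bound is not obtained by crude entrywise estimates; it exploits the Dirichlet-kernel-like cancellation in the rows of $\mH_\vg$ (see the expression for $h_{p,\ell}$ in \eqref{eq:GFHgs}). A rank-one or $\ell_\infty\to\ell_2$ approach of the kind you sketch would typically lose this cancellation and yield $\L$ instead of $\sqrt{\L}$. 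So the gap in your plan is precisely the step you flagged as the main obstacle: you need either to recognize and invoke the cited lemma, or to reproduce its proof, which is a separate (though not long) calculation on the explicit entries $h_{p,\ell}$.
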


We are now ready to establish \eqref{eq:cond1new} by applying the Hanson-Wright inequality. 
Setting $\mV \defeq \mV_{\vm}^{(\vn)}(\vr, \vr_k)$ for ease of presentation, it follows that 
\begin{align}
&\hspace{-1cm}
\PR{ \frac{1}{\kappa^{\onenorm{\vm + \vn }}}  
\left|\innerprod{ (\herm{\msA}\msA - \mI) \vg_\vm(\vr_k)}{ \vf^{\vn}(\vr)}\right|
> 
\frac{c_1}{
\sqrt{\S \log(\L/\delta)}
} 
}
\nonumber \\
&\leq 
\PR{ \frac{1}{\kappa^{\onenorm{\vm + \vn }}}  
\left|\innerprod{ (\herm{\msA}\msA - \mI) \vg_\vm(\vr_k)}{ \vf^{\vn}(\vr)}\right|
> c_4 c' \frac{\alpha}{\sqrt{\L}}  
} \label{eq:byassumiLS} \\
&\leq
\PR{
\left|\innerprod{ (\herm{\msA}\msA - \mI) \vg_\vm(\vr_k)}{ \vf^{\vn}(\vr)}\right|
> c_4  (2\pi \N)^{\onenorm{\vm + \vn}}   \frac{\alpha}{\sqrt{\L}}  } \label{eq:byconstleks} \\
&\leq \PR{ | \transp{\mathbf{x}} \mV  \mathbf{x}  -  \EX{\transp{\mathbf{x}} \mV  \mathbf{x}} |  >  \norm[F]{ \mV} \frac{\alpha}{\L N_T }  } \label{eq:uselemboundvfnor} \\
&\leq 2 \exp\left( - \tilde c \min\left( \frac{\norm[F]{\mV }^2 \alpha^2}{(\L N_T)^2 K^4 \norm[F]{\mV  }^2  },  \frac{\norm[F]{\mV} \alpha}{ \L N_T K^2 \norm[]{\mV}  }\right) \right) \label{eq:usehansonwr} \\
&\leq 2 \exp\left( - \tilde c \min\left( \frac{ \alpha^2}{c_5^4   },  \frac{ \alpha}{ c_5^2   }\right) \right).  \label{eq:simphanswrer} \\
&\leq
c_3 \frac{\delta}{\L^{15} }  \label{eq:choicealen}
\end{align}
Here, \eqref{eq:byassumiLS} follows by assumption \eqref{eq:sleqb}: 
\begin{align}
\frac{c_1}{
\sqrt{\S \log(\L/\delta)}
} 
\geq  
\frac{c_1}{
\sqrt{\log(\L/\delta)}
} \frac{ \sqrt{c \log^{3}(\L/\delta)} }{\sqrt{\L}}
\geq c_4 c' \frac{\alpha}{\sqrt{\L}},
\end{align}
where we set $\alpha = \frac{c_5^2}{\tilde c} \log\left( \L^{15} /(2c_3\delta) \right)$. 
Equation~\eqref{eq:byconstleks} follows from $(2\pi \N)^{\onenorm{\vm + \vn}} \leq c' \kappa^{\onenorm{\vm + \vn}}$, by \eqref{eq:bound2pinkappa}. 
Moreover, \eqref{eq:uselemboundvfnor} follows from~\eqref{eq:gmnqform}, \eqref{eq:exgmn}, and~\eqref{eq:boundonVFnorm}. 
To obtain \eqref{eq:usehansonwr}, we used Lemma \ref{thm:hanswright} with $t=\norm[F]{\mV} \frac{\alpha}{\L N_T}$. 
Equation~\eqref{eq:simphanswrer} holds because the sub-Gaussian parameter $K$ of the random variable $[\mathbf{x}]_\ell \sim \mathcal N(0,1/(\L N_T))$ is given by $K = c_5/\sqrt{\L N_T}$ (see e.g., \cite[Ex.~5.8]{vershynin_introduction_2012}) and $\norm[F]{\mV }/\norm[]{\mV }  \geq 1$. 
Finally, \eqref{eq:choicealen} follows by using the fact that for $\alpha \geq c_5^2$, we have $\min\left( \frac{ \alpha^2}{c_5^4   },  \frac{ \alpha}{ c_5^2   }\right) = \frac{ \alpha}{ c_5^2}$.

We next verify \eqref{eq:GrmGrgA}. 
We have 
\begin{align}
\PR{
\max_{\vr \in [0,1]^3}  \left | \innerprod{\msA \vg_{\vect{0}}(\vr_k)}{\msA \vf(\vr)} \right| \geq  \frac{\hat c}{S} \L^{3}  
} 
&= 
\PR{
\max_{\vr \in [0,1]^3}  \left| \herm{\vx} \mV_{\vect{0}}^{(\vect{0})}(\vr, \vr_k) \vx \right|
 \geq  \frac{\hat c}{S} \L^{3}   
}\nonumber \\ 
&\leq 
\PR{
\opnorm{ \mV_{\vect{0}}^{(\vect{0})}(\vr, \vr_k) } \twonorm{\vx}^2 
 \geq  \frac{\hat c}{S} \L^{3}  
} \nonumber \\
&\leq 
\PR{
c_4 N_T \sqrt{\L}  \twonorm{\vx}^2 
 \geq  \frac{\hat c}{S} \L^{3}  
} \label{eq:ueq:boundonVFnorm} \\
&\leq 
\PR{
c_4 N_T \sqrt{\L}  \twonorm{\vx}^2 
 \geq  \frac{\hat c \, c \log^3(\L/\delta)  }{\L} \L^{3}  \label{eq:beq:sleqba}
} \\
&\leq \PR{ \twonorm{\vx}^2 
 \geq  c_5 \log^3(\L/\delta)  \sqrt{\L}  
} \nonumber \\
&\leq \frac{\delta}{8}. \label{eq:usechisqconc}
\end{align}
Here \eqref{eq:ueq:boundonVFnorm} follows from \eqref{eq:boundonVFnorm} and $\opnorm{\cdot} \leq \norm[F]{\cdot}$, and \eqref{eq:beq:sleqba} follows by the assumption \eqref{eq:sleqb}. Finally, \eqref{eq:usechisqconc} follows from a standard concentration bound of a $\chi^2$-random variable.

\subsubsection{\label{sec:proofquadform}Proof of Equation~\eqref{eq:gmnqform}}

In this section we prove that 
$\innerprod{ \msA \vg_\vm(\vr_k)}{\msA \vf^{\vn}(\vr)}$ 
can be expressed as the quadratic form in $\vx$. 
To this end, we note that the vector $\vg_\vm(\vr_k)$ defined in \eqref{eq:defvge} can be written as 
\[
\vg_{\vm}(\vr_k)
= 
\vg_{m_1}(\beta_k) \otimes \vg_{m_3}(\nu_k) \otimes \vg_{m_2}(\tau_k)
 \in \complexset^{\L^3}, 
\quad 
\vr_k = \transp{[\tau_k,\nu_k,\beta_k]}, 
\]
where $\otimes$ denotes the Kronecker product, and 
\[
[\vg_{m}(\tau)]_r 
\defeq \frac{1}{\M} g_r  e^{-i2\pi \tau r } (i2\pi r)^m, \quad r = -\N, \ldots, \N. 
\]
As shown the paper~\cite[p.~25]{heckel_super-resolution_2015})
\begin{align}
\hspace{1cm}&\hspace{-1cm}[\L \mG_{\vx_j} \herm{\mF} \vg_{m_3}(\nu) \otimes \vg_{m_2}(\tau) ]_p  \nonumber \\
&=
\frac{1}{\L}
\sum_{k,\ell=-\N}^\N 
\left(
\sum_{r,q=-\N}^\N  
g_r g_q e^{-i2\pi(\tau r + \nu q)}  (i2\pi r)^{m_2} (i2\pi q)^{m_3}
e^{i2\pi \frac{q k + r\ell }{\L}}
\right)
[\vx_j]_{p-\ell} e^{i2\pi \frac{kp}{\L}} 
\nonumber \\
&=
 \sum_{\ell=-\N}^\N  [\vx_j]_{\ell}   
 \underbrace{
 \sum_{r=-\N}^\N e^{i2\pi \frac{r(p-\ell)}{\L}} 
 g_r g_p e^{-i2\pi(\tau r - \nu p)}  (i2\pi r)^{m_2} (-i2\pi p)^{m_3}
}_{ h_{p,\ell} \defeq}
\label{eq:GFHgs},
\end{align}
where we used that $\frac{1}{\L} \sum_{k=-\N}^\N e^{i2\pi \frac{k (p+q)}{\L}}$ is equal to $1$ if $p= -q$ and equal to $0$ otherwise, together with the fact that $[\vx_j]_\ell$ is $\L$-periodic in $\ell$.   
We next write \eqref{eq:GFHgs} in matrix-vector form according to 
\[
\L \mG_{\vx_j} \herm{\mF} (\vg_{m_3}(\nu) \otimes \vg_{m_2}(\tau))  = 
\mH_\vg \vx_j, 
\]
where the $(p,\ell)$-th entry of $\mH_\vg \in \complexset^{\L\times \L}$ is given by $h_{p,\ell}$ (note that $\mH_\vg$ is a function of $\vg_{m_3}(\nu) \otimes \vg_{m_2}(\tau)$). 
Recall that $\vx = \transp{[\transp{\vx}_0,\ldots,\transp{\vx}_{N_T-1}]} \in \complexset^{\L N_T}$. 
With this notation, we have
\begin{align*}
&\L\mA \vg_{\vm}(\vr_j) \\
&= 
\L \begin{bmatrix}
[\mG_{\vx_{0}} \herm{\mF}, \ldots , \mG_{\vx_{N_T - 1 } } \herm{\mF} ] & & \\
& \ddots & \\
& & [\mG_{\vx_{0}} \herm{\mF},\ldots , \mG_{\vx_{N_T - 1 } } \herm{\mF} ] 
\end{bmatrix}
\vg_{m_1}(\beta_j) \otimes \vg_{m_3}(\nu_j) \otimes \vg_{m_2}(\tau_j)   \\
&=
\L
\begin{bmatrix}
\ddots & & \\
& [\ldots  \mG_{\vx_k} \herm{\mF}  (\vg_{m_3}(\nu_j) \otimes \vg_{m_2}(\tau_j)) \ldots ] 
& \\
& & \ddots 
\end{bmatrix}  
\vg_{m_1}(\beta_j) \\
&= 
\begin{bmatrix}
\ddots & & \\
& [\mH_{\vg} \vx_0,\ldots,\mH_{\vg} \vx_{N_T-1}] 
& \\
& & \ddots 
\end{bmatrix}  
\vg_{m_1}(\beta_j)  \\
&=
\mB_{\vg} \vx, 
\end{align*}
where $\mB_{\vg} \in \complexset^{N_R \L \times N_T \L}$ is a block matrix where the $(\ell,k)$-th block, $\ell = -\frac{N_R-1}{2},\ldots,\frac{N_R-1}{2}$, $k = -\frac{N_T-1}{2},\ldots,\frac{N_T-1}{2}$, is given by $e^{i2\pi \beta(k+\ell N_T)}\mH_\vg$. 

Analogously, we define the matrix $\mH_\vf \in \complexset^{\L\times \L}$ such that
\[
\L \mG_{\vx_j} \herm{\mF} \vf^{(m_3)} (\nu) \otimes \vf^{(m_2)}(\tau)
=
\mH_\vf \vx_j,
\]
and a block matrix $\mB_{\vf} \in \complexset^{N_R \L \times N_T \L}$ with $(\ell,k)$-th block given by $e^{i2\pi \beta(k+\ell N_T)}\mH_\vf$, 
such that 
\[
\L \mA \vf^{(\vn)}(\vr)
= \mB_\vf \vx. 
\]
With this notation,
\begin{align*}
&\innerprod{ \msA \vg_\vm(\vr_k)}{\msA \vf^{\vn}(\vr)}
=
N_T \innerprod{ \L \mA \vg_\vm(\vr_k)}{\L\mA \vf^{\vn}(\vr)}
=
\herm{\vx} \mV_{\vm}^{(\vn)}(\vr, \vr_j) \vx, 
\end{align*}
with 
\begin{align}
\mV_{\vm}^{(\vn)}(\vr, \vr_j)
\defeq 
N_T \herm{\mB}_\vf \mB_\vg. 
\label{eq:exprVqf}
\end{align}
This concludes the proof of equation~\eqref{eq:gmnqform}. 

\subsubsection{\label{sec:prooflem:boundmV}Proof of Lemma \ref{lem:boundmV}}

We are now ready to prove Lemma \ref{lem:boundmV}. 
First note that from the previous section, 
\[
\mV_{\vm}^{(\vn)}(\vr, \vr_j) = \frac{N_T}{\M} \mD \otimes \frac{1}{\M^2}\herm{\mH}_\vf \mH_\vg,
\]
with $\mD \in \complexset^{N_T \times N_T}$ defined as
\[
\mD 
\defeq \frac{N_T}{\M} \sum_{\ell= - \sfrac{(N_R - 1)}{2} }^{\sfrac{(N_R - 1)}{2}} 
\begin{bmatrix}
\vdots \\
(i2\pi (k + \ell N_T))^{n_1} e^{i2\pi \beta (k + \ell N_T)} \\
\vdots 
\end{bmatrix}
[\hdots (i2\pi (k' + \ell N_T))^{m_1} g_{k' + \ell N_T} \hdots],
\]
where $k,k' = - (N_T-1)/2,\ldots,(N_T-1)/2$ are the indices of the vectors above. 
With $k+ \ell N_T \leq \N$, and using that $g_k \leq 1$ we have 
\[
\norm[F]{\mD} 
\leq 
\frac{N_T}{ \M } N_R \sqrt{N_T} (2\pi \N)^{n_1}  \sqrt{N_T} (2\pi \N)^{m_1}
\leq 
4
(2\pi \N)^{m_1 + n_1} N_T, 
\]
where we used that $N_R N_T = \L = 2\N + 1 \leq 4 \M$. 
It follows that 
\[
\norm[F]{ \mV_{\vm}^{(\vn)}(\vr, \vr_j) } 
=
\norm[F]{ \frac{N_T}{\M} \mD} \norm[F]{  \frac{1}{\M^2}  \herm{\mH}_\vf \mH_\vg  }  
\leq 
c_1 
(2\pi \N)^{\onenorm{\vm + \vn}}
N_T  \sqrt{\L}, 
\]
which concludes the proof. 
For the inequality above, we used \cite[Lem.~3]{heckel_super-resolution_2015} which states that $\norm[F]{  \frac{1}{\M^2}  \herm{\mH}_\vf \mH_\vg  }\leq c_2 (2\pi \N)^{m_2 + m_3 + n_2 +n_3} \sqrt{\L}$.



\section{\label{GordonProofs}Proof of Theorem~\ref{thm:ISG}} 

\newcommand\func{f}
\newcommand\tangentcone{\mc T}
\newcommand\convset{\mc C}
\newcommand{\lassoest}[3][]{
\ifthenelse{\equal{#1}{}}{ \tilde \vz(#2,#3) }{
\tilde \vz_{#1}(#2,#3)
} }
\newcommand\polar[1]{{#1}^\circ}
\newcommand\f{\nu}
\newcommand\dualnorm[2][\Tnorm]{\ensuremath{{\left\|#2\right\|}_{#1}^{\ast}}}

To prove Theorem \ref{thm:ISG} we need a few definitions. 

Given a convex function $\func$, the \emph{set of descent} of $\func$ at a point $\vx$ is defined as
\begin{align*}
{\cal D}_{\func}(\vx)
\defeq \Big\{\vh:\text{ }\func(\vx+\vh)\le \func(\vx)\Big\}.
\end{align*}
The \emph{tangent cone} $\tangentcone_{\func}(\vx)$ is the conic hull of the descent set. That is, the smallest closed cone $\tangentcone_{\func}(\vx)$ obeying $\mathcal{D}_{\func}(\vx)\subset\tangentcone_{\func}(\vx)$.
The Gaussian width of a set $\convset \subset \reals^n$ is defined as:
\begin{align*}
\omega(\convset)
\defeq
\EX{\underset{\vz\in\convset}{\sup}~\innerprod{\vw}{\vz}},
\end{align*}
where the expectation is taken over $\vw \sim\mathcal{N(}(\vct{0},\mtx{I})$. 
For the remainder of this section, we set $\func(\cdot) = \norm[\setA]{\cdot}$. 

We now state a theorem from \cite{chandrasekaran_convex_2012} characterizing the sample complexity of structured signal recovery problems from linear measurements. 
This theorem is based on Gordon's escape through the mesh lemma \cite{Gor} and a standard Gaussian concentration inequality for Lipschitz functions. 

\begin{theorem}[{\cite[Cor.~3.3]{chandrasekaran_convex_2012}}] Let $\mA\in\reals^{\R \times \L}$ be a random matrix with i.i.d.~$\mathcal{N}(0,1/\R)$ entries. Assume we have $\R$ linear measurements of the form $\vy=\mA\vz$ from an unknown signal $\vz$. Let $\hat \vz$ be a solution to the
atomic norm minimization problem $\AN(\vy)$, i.e., 
\[
\hat{\vz}=\underset{\bar{\vz}}{\arg\min}\quad
\norm[\setA]{\vz}
\quad\emph{subject to}\quad\vy=\mA\bar{\vz}.
\]
Fix $\eta>0$. 
Then, as long as 
\begin{align}
\R \geq \left(\omega(\tangentcone_{\setA}(\vz)\cap\mathbb{S}^{\L-1})+\eta\right)^2 + 1,
\label{msamp}
\end{align}
with probability at least $1-e^{-\frac{\eta^2}{2}}$ the optimization problem $\AN(\vz)$ recovers the unknown signal, i.e.,~$\hat{\vz}=\vz$.
Here, 
$\mathbb{S}^{\L-1}$ is the unit sphere and $\tangentcone_{\setA}(\vz)$ is the tangent cone associated with the atomic norm, i.e., $\tangentcone_{\setA}(\vz) = \tangentcone_{\func}(\vz)$ with $\func(\cdot) = \norm[\setA]{\cdot}$. 
\label{sampthm}
\end{theorem}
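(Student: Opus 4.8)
The plan is the standard two-step argument for statements of this type (it is Corollary~3.3 of \cite{chandrasekaran_convex_2012}): reduce exact recovery to the event that the tangent cone meets $\ker\mA$ only at the origin, then control that event with Gordon's escape-through-the-mesh lemma followed by Gaussian concentration. Set $\func(\cdot)=\norm[\setA]{\cdot}$. Then $\vz$ is the unique minimizer of $\AN(\vy)$ precisely when $\func(\vz+\vh)>\func(\vz)$ for every $\vh\neq\vect{0}$ with $\mA\vh=\vect{0}$, i.e.\ when $\mathcal{D}_\func(\vz)\cap\ker\mA=\{\vect{0}\}$; since $\mathcal{D}_\func(\vz)\subseteq\tangentcone_\setA(\vz)$ and $\tangentcone_\setA(\vz)\cap\mathbb{S}^{\L-1}$ is compact, this is implied by
\begin{align}
\sigma:=\min\big\{\,\|\mA\vh\|_2:\vh\in\tangentcone_\setA(\vz)\cap\mathbb{S}^{\L-1}\,\big\}>0,
\label{eq:posinf}
\end{align}
because a nonzero element of $\tangentcone_\setA(\vz)\cap\ker\mA$, normalized, would force $\sigma=0$. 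So it remains to show that \eqref{eq:posinf} holds with probability at least $1-e^{-\eta^2/2}$.

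Since $\mA$ has i.i.d.\ $\mc N(0,1/\R)$ entries, the matrix $\sqrt{\R}\,\mA$ has i.i.d.\ $\mc N(0,1)$ entries, and Gordon's comparison (escape-through-the-mesh) inequality \cite{Gor,Gor2} gives, with $\Omega:=\tangentcone_\setA(\vz)\cap\mathbb{S}^{\L-1}$,
\begin{align}
\EX{\min_{\vh\in\Omega}\|\sqrt{\R}\,\mA\vh\|_2}\ \ge\ \lambda_\R-\omega(\Omega),\qquad \lambda_\R:=\EX{\|\vg\|_2},\ \vg\sim\mc N(\vect{0},\mI_\R).
\label{eq:gordon}
\end{align}
Moreover, as a function of the standard-Gaussian entries of $\sqrt{\R}\,\mA$, the map $\mX\mapsto\min_{\vh\in\Omega}\|\mX\vh\|_2$ is $1$-Lipschitz in the Frobenius norm (two matrices differ in it by at most $\sup_{\vh\in\Omega}\|(\mX-\mX')\vh\|_2\le\|\mX-\mX'\|\le\|\mX-\mX'\|_F$), so Gaussian concentration for Lipschitz functions yields $\PR{\min_{\vh\in\Omega}\|\sqrt{\R}\,\mA\vh\|_2\le\EX{\min_{\vh\in\Omega}\|\sqrt{\R}\,\mA\vh\|_2}-\eta}\le e^{-\eta^2/2}$.

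Combining \eqref{eq:gordon} with this concentration bound and the elementary estimate $\lambda_\R\ge\sqrt{\R-1}$ (which holds since $\lambda_\R^2=\R-\mathrm{Var}(\|\vg\|_2)$ and $\mathrm{Var}(\|\vg\|_2)<1$ for every $\R\ge1$), we get that with probability at least $1-e^{-\eta^2/2}$,
\[
\sqrt{\R}\,\sigma\ =\ \min_{\vh\in\Omega}\|\sqrt{\R}\,\mA\vh\|_2\ \ge\ \lambda_\R-\omega(\Omega)-\eta\ \ge\ \sqrt{\R-1}-\big(\omega(\Omega)+\eta\big)\ \ge\ 0,
\]
the last inequality being the hypothesis $\R\ge(\omega(\Omega)+\eta)^2+1$; since $\lambda_\R>\sqrt{\R-1}$ strictly, this is in fact $>0$, so \eqref{eq:posinf} holds. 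By the first paragraph, $\hat\vz=\vz$ on this event, which is the assertion.

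For this statement I do not expect a substantive obstacle: it is essentially \cite[Cor.~3.3]{chandrasekaran_convex_2012}, and the genuine external input is Gordon's comparison inequality itself. The only points needing care are (i) the $1/\sqrt{\R}$ bookkeeping between $\mc N(0,1/\R)$- and $\mc N(0,1)$-entry matrices; (ii) verifying that \eqref{eq:posinf} is attained (compactness of $\tangentcone_\setA(\vz)\cap\mathbb{S}^{\L-1}$ together with continuity of $\vh\mapsto\|\mA\vh\|_2$); and (iii) the precise constant in $\lambda_\R\ge\sqrt{\R-1}$, which is exactly what produces the clean ``$+1$'' in the sample-complexity threshold $\R\ge(\omega+\eta)^2+1$.
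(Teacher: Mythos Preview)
Your proposal is correct and follows exactly the approach the paper itself indicates: the paper does not give a proof of this theorem but cites it from \cite{chandrasekaran_convex_2012} and notes that it ``is based on Gordon's escape through the mesh lemma \cite{Gor} and a standard Gaussian concentration inequality for Lipschitz functions,'' which is precisely your two-step argument. The details you supply (reduction to $\tangentcone_\setA(\vz)\cap\ker\mA=\{\vect{0}\}$, Gordon's lower bound $\lambda_\R-\omega(\Omega)$, $1$-Lipschitz concentration, and $\lambda_\R>\sqrt{\R-1}$) are all standard and correct.
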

As long as 
\begin{align}
\label{mytemp}
\R \geq 4\omega^2(\tangentcone_{\setA}(\vz)\cap\mathbb{S}^{\L-1}),
\end{align} 
setting $\eta=\sqrt{\R}/2$ 
and applying the inequality $(a+b)^2 \leq 2a^2 + 2b^2$,
we have 
\begin{align*}
\R = \frac{\R}{2}+\frac{\R-2}{2}+1 
&\geq 
2\omega^2(\tangentcone_{\setA}(\vz)\cap\mathbb{S}^{\L-1}) + \frac{\R-2}{2}+1 \nonumber \\
&\geq \left(\omega(\tangentcone_{\setA}(\vz)\cap\mathbb{S}^{\L-1})+\frac{\sqrt{\R-2}}{2}\right)^2 + 1
=\left(\omega(\tangentcone_{\setA}(\vz)\cap\mathbb{S}^{\L-1})+\eta\right)^2 + 1.
\end{align*}
Thus, by Theorem~\ref{sampthm} condition~\eqref{mytemp} ensures exact recovery via atomic norm minimization with probability at least $1-e^{-(\R-2)/8}$. 

We next upper bound the Gaussian width using a connection to denoising with $\ell_1$-regularized minimization from \citet{oymak2013sharp}. 
To this end, we define the proximity operator
\begin{align*}
\lassoest[\func]{\vy}{\lambda} = \arg \min_{\bar{\vz}} \frac{1}{2}
\twonorm{\vy-\bar{\vz}}^2 + \lambda \func(\bar{\vz}).
\end{align*}
\citet[Thm.~1.1,Prop.~5.2]{oymak2013sharp} characterize the mean-square distance (the mean square distance of a set $\convset$ is defined as 
$
D(\convset) \defeq \EX[\vw \sim \mc N(0,\mI)]{\min_{\vs\in \convset} \norm[2]{\vw - \vs}^2 }
$)
of the scaled sub-differential $\partial \func(\vz)$ of $\func$ at $\vx$ as
\[
D(\lambda' \partial \func(\vz)) 
= 
\underset{\sigma>0}{\max}
\frac{
\EX[\vw \sim \mc N(0,\mI)]{
\twonorm{\lassoest[\func]{\vz + \sigma \vw}{\sigma \lambda'} - \vz}^2
}}{\sigma^2}.
\]
Due to 
\begin{align}
\label{eq:relDgwidth}
D(\lambda' \partial \func(\vz))
\geq 
D( \text{cone}(\partial \func(\vz)) )
=
D( \polar{\tangentcone}_\func(\vz) )
\geq 
\omega^2( \tangentcone_\func(\vz) \cap \mathbb{S}^{\L-1} ),
\end{align}
this yields an upper bound on the Gaussian width. 
Here, the first inequality and equality, respectively, can be found in \cite[Eq.~(3.5)]{oymak2013sharp} and \cite[p.~18]{oymak2013sharp} (note that $\text{cone}(\partial \func(\vz))$ is a closed, convex cone, and $\polar{\convset}$ denotes the dual cone of $\convset$).

Thus, the condition
\begin{align}
\label{mytemp2}
\R\geq 
4 \,
\min_{\lambda'}\underset{\sigma>0}{\max}
\frac{
\EX[\vx \sim \mc N(0,\mI)]{
\twonorm{\lassoest[\func]{\vz + \sigma \vw}{\sigma \lambda'} - \vz}^2
}}{\sigma^2}
\end{align} 
implies~\eqref{mytemp}, and therefore, 
\eqref{mytemp2} is sufficient to guarantee that the estimate $\hat \vz$ of $\AN(\vy)$ is exact, i.e., $\hat \vz = \vz$ with probability at least $1-e^{-(\R-2)/8}$. 

We now state a lemma that controls the right hand-side of \eqref{mytemp2} when the frequencies of the mixture components are sufficiently separated. The proof of this lemma is essentially a consequence of results established in \cite{tang_near_2015} and is deferred to Section \ref{pfmlem}. 

\begin{lemma}
\label{lem:denoisinglasso}
Let $\vz$ be of the form~\eqref{eq:defz} with the frequencies of the mixtures obeying the minimum separation condition~\eqref{eq:minsepintro1D}. 
Consider the proximity operator with $\func(\cdot) = \norm[\setA]{\cdot}$, defined as
\begin{align}
\label{eq:proxopat}
\lassoest{\vy}{\lambda} = \arg \min_{\bar{\vz}} \frac{1}{2}
\twonorm{\vy-\bar{\vz}}^2 + \lambda \norm[\setA]{\bar{\vz}}.
\end{align}
Let $\lambda = c \sigma \sqrt{\L \log \L}$, where $c$ is a fixed numerical constant. 
Then the proximity operator obeys 
\begin{align}
\frac{\EX[\vw \sim \mc N(0,\mI)]{\norm[2]{ \lassoest{\vz + \sigma \vw}{\lambda}  - \vz}^2 }}{\sigma^2}
\leq
c \S \log \L.
\end{align}
\end{lemma}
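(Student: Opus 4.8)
The plan is to reduce the statement to the denoising bound established in \cite{tang_near_2015}, which controls exactly the quantity $\EX{\twonorm{\lassoest{\vz+\sigma\vw}{\lambda}-\vz}^2}$ for atomic norm denoising of a mixture of sinusoids with separated frequencies. First I would recall that the proximity operator in~\eqref{eq:proxopat} is precisely the atomic norm denoiser (sometimes called the \emph{atomic soft-thresholding} estimator) applied to the noisy observation $\vz + \sigma \vw$, where here $\vw \sim \mathcal N(\vct 0, \mI)$ so that $\sigma \vw$ is i.i.d.\ $\mathcal N(0,\sigma^2)$ noise. Under the separation condition~\eqref{eq:minsepintro1D}, the analysis of \cite{tang_near_2015} shows that with the regularization parameter chosen as $\lambda = c\sigma\sqrt{\L\log\L}$ — on the order of the expected dual atomic norm of the noise, $\EX{\norm[\setA]{\sigma \vw}^\ast}$, which for a length-$\L$ Gaussian vector scales like $\sigma\sqrt{\L\log\L}$ — the mean-squared error of the denoiser satisfies $\EX{\twonorm{\lassoest{\vz+\sigma\vw}{\lambda}-\vz}^2} \lesssim \S \lambda^2 / \L \asymp \sigma^2 \S \log \L$. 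Dividing through by $\sigma^2$ yields the claimed bound $c\,\S\log\L$.

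The key steps, in order, are: (i) identify the proximity operator with the atomic-norm denoiser and recall that the relevant dual norm $\norm[\setA]{\cdot}^\ast$ is the maximum-modulus of the associated trigonometric polynomial on $[0,1]$; (ii) invoke the bound $\EX{\norm[\setA]{\sigma\vw}^\ast} \le \tilde c\, \sigma\sqrt{\L\log\L}$, which follows from standard suprema-of-Gaussian-process estimates for trigonometric polynomials (e.g.\ Dudley's inequality applied to the Dirichlet-kernel covariance, or Bernstein's inequality for polynomials combined with a union bound over a fine net on $[0,1]$); this justifies the scaling of $\lambda$; (iii) apply the oracle/error bound from \cite{tang_near_2015} — the deterministic inequality that, on the event $\lambda \ge \norm[\setA]{\sigma\vw}^\ast$, controls $\twonorm{\lassoest{\vz+\sigma\vw}{\lambda}-\vz}^2$ in terms of $\S\lambda^2/\L$ using the existence of a dual certificate for the separated frequencies $\{\nu_k\}$ (itself guaranteed by the Cand\`es--Fern\'andez-Granda construction under~\eqref{eq:minsepintro1D}); and (iv) take expectations, handling the low-probability event where $\lambda < \norm[\setA]{\sigma\vw}^\ast$ by a crude bound (e.g.\ $\twonorm{\lassoest{\cdot}{\lambda}-\vz}^2 \le \twonorm{\sigma \vw}^2$ together with Gaussian tail bounds, or by choosing $c$ large enough that this event has negligible contribution).

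The main obstacle is step (iii): getting a clean, quantitatively correct version of the denoising error bound with the right dependence on $\S$, $\L$, and $\log\L$. The subtlety is that \cite{tang_near_2015} works in the setting of compressed sensing off the grid and their error bound is stated for a slightly different normalization and, in places, in a "near-minimax" form that must be specialized to the exact constants needed here; one has to verify that the minimum separation $2/\N$ used in this paper is compatible with the separation hypothesis under which their dual certificate exists, and that the polynomial degree conventions match. I would therefore spend most of the effort transcribing their Theorem into the present notation, checking that the dual certificate for the squared Fej\'er kernel interpolation exists at separation $2/\N$, and confirming that no extra factors of $\S$ or $\log\L$ creep in when passing from their bound to the statement $\le c\,\S\log\L$. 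The remaining steps — the Gaussian-width reduction is already done in the main text via~\eqref{eq:relDgwidth}, and the dual-norm concentration in step (ii) is classical — are routine.
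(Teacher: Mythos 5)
Your high-level plan — invoke the Tang et al.\ denoising bound on the good event $\{\dualnorm[\setA]{\sigma\vw}\le\lambda/\eta\}$, then control the contribution of the complementary event — matches the paper's decomposition exactly, and steps (i)--(iii) are essentially what the paper does. The gap is in step (iv). You propose to handle the bad event via the "crude bound $\twonorm{\lassoest{\cdot}{\lambda}-\vz}^2\le\twonorm{\sigma\vw}^2$," but this inequality is false: the prox operator shrinks toward the origin, so when $\lambda$ is large relative to the actual noise, the estimate can land far from both the observation and the signal. (Already in one dimension: if $z=10$, $\sigma w=0.1$, $\lambda=20$, soft-thresholding returns $0$, so $|\hat z - z|=10\gg|\sigma w|$.) Since you also need this estimate in expectation, not just on a high-probability set, a moment bound is required, and the naive one does not hold.

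What the paper actually does here is nontrivial and is the content of its Lemma~\ref{lem:largenoisebound}: it proves the deterministic inequality $\twonorm{\ve}^2\le 8\twonorm{\vw}^2 + c\lambda^2\S/\L$, whose proof requires pulling in the Cand\`es--Fern\'andez-Granda dual certificate $Q(\nu)=\innerprod{\vq}{\vf(\nu)}$ and the estimate $\twonorm{\vq}^2\le c\S/\L$ from \cite{tang_near_2015}, together with a primal--dual optimality argument and Parseval. This bound then gives $\sqrt{\EX{\twonorm{\ve}^4}}\le c\sigma^2\L\log\L$, which, combined via Cauchy--Schwarz with the tail bound $\PR{\dualnorm[\setA]{\vw}\ge\lambda/\eta}\le\L^{-2}$, makes the bad-event contribution $O(\sigma^2\log\L)$ and hence negligible. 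In short, you labeled the bad-event estimate as "routine" and suggested a bound that is wrong; the correct fourth-moment control is one of the two real technical lemmas in this proof and uses the dual certificate in a genuinely new way beyond its appearance in the conditional error bound.
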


Combining Lemma \ref{lem:denoisinglasso} above with \eqref{mytemp2} completes the proof. All that remains is to prove Lemma~\ref{lem:denoisinglasso} which is the subject of Appendix~\ref{pfmlem}.

\section{Acknowledgements}
The work of RH was supported by the Swiss National Science Foundation under grant P2EZP2\_159065. 
Theorems \ref{thm:mainresmimo} and \ref{cor:mimodiscrete} on the MIMO radar problem
and the numerical results in Section~\ref{sec:numres}
were presented at the 2016 IEEE International Symposium on Information Theory (ISIT) \cite{heckel_super-resolution_2016b}.

\printbibliography

\appendices

\section{\label{sec:prooflemdualpoly}Proof of Lemma~\ref{prop:dualpolynomial}}

In this section we prove our main technical result. 
As explained in Section \ref{sec:statementlem1}, we construct $Q$ explicitly according to \eqref{eq:dualpolyorig}. 
We proceed as follows:  
\begin{description}
\item[Step 1:]\label{it:step2} 
We show that, with probability at least $1-\frac{\delta}{2}$ there exists a choice of coefficients $\alpha_k, \alpha_{1k}, and\alpha_{2k}$ and $\alpha_{3k}$ such that 
\begin{align}
Q(\vr_k) = u_k \quad \text{and} \quad
\nabla Q(\vr_k) = \vect{0} \quad  \text{for all}\; \vr_k \in \T.    
\label{eq:interpcondQ}
\end{align}
This is necessary to ensures that $Q$ reaches local maxima at the $\vr_k$, which is required for the interpolation and boundedness condition in \eqref{eq:polyboundedinprop} to hold simultaneously. 

\item[Step 2:]\label{it:step3} 
We conclude the proof by showing that with the coefficients from Step 1, 
with probability at least $1-\frac{\delta}{2}$, 
the polynomial $Q$ with the coefficients from step 1 obeys $\abs{Q(\vr)} < 1$ uniformly for all $\vr \in [0,1]^3 \setminus \T$. 
This is accomplished using an $\epsilon$-net argument:
\begin{description}
\item[Step 2a:]\label{it:step3a} Let $\Omega \subset [0,1]^3$ be a (finite) set of grid points that is sufficiently dense in the $\ell_\infty$-norm. Specifically, we choose the points in $\Omega$ on a rectangular grid such that 
\begin{align}
\max_{\vr \in [0,1]^3} \min_{\vr_g \in \Omega} \infdist{ \vr - \vr_g} \leq \frac{\epsilon}{3 \tilde c \L^{4}},
\label{eq:gridmaxdist}
\end{align}
where $\epsilon \defeq 0.0005$.
The cardinality of the set $\Omega$ is 
\begin{align}
|\Omega| = \left(\frac{3\tilde c \L^{4}}{\epsilon}\right)^3 = c' \L^{12}/\epsilon^3 = \tilde c \L^{12}. 
\label{eq:cardOmega}
\end{align}
For every $\vr\in \Omega$, we show that $Q(\vr)$ is ``close'' to $\bar Q(\vr)$ with high probability. 
\item[Step 2b:]\label{it:step3b} 
We use that $Q(\vr)$ is close to $\bar Q(\vr)$ for all $\vr \in \Omega$ combined with Bernstein's polynomial inequality to conclude that 
$Q(\vr)$ is close to $\bar Q(\vr)$ uniformly for all $\vr \in [0,1]^3$.
\item[Step 2c:]\label{it:step3c} 
Finally, we combine this result with the properties of $Q(\vr)$ established in Section \ref{sec:constdpu} (in particular $\abs{\bar Q(\vr)} < 1$ for all $\vr \in [0,1]^3\setminus \T$) to conclude that $\abs{Q(\vr)} < 1$ holds with high probability uniformly for all  $\vr \in [0,1]^3\setminus \T$. 
\end{description}
\end{description}

For this argument to work, we will use that, by assumption~\eqref{eq:cond1new},  
$G_{\vm}(\vr,\vr_k) = \innerprod{\mA \vg_{\vm}(\vr_k) }{ \mA \vf(\vr) }$ 
concentrates around the deterministic function 
$\bar G^{(\vn)}(\vr-\vr_k) = \innerprod{\vg_{\vm}(\vr_k) }{ \vf(\vr) }$.

\subsection{\label{sec:constdpu}Construction of a certificate with unconstraint, deterministic coefficients}

The first building block of our construction of $Q$ is a \emph{deterministic} 3D trigonometric polynomial 
$
\bar Q(\vr) =  \innerprod{\bar \vq }{\vf(\vr)}
$
that satisfies the interpolation and boundedness conditions~\eqref{eq:polyboundedinprop}, but whose coefficients $\bar \vq$ are \emph{not} constraint to be of the form $\herm{\mAA} \vq$. 
As mentioned before, corresponding constructions for 1D and 2D trigonometric polynomials have been derived by Cand\`es and Fernandez-Granda \cite[Prop.~2.1, Prop.~C.1]{candes_towards_2014}. As remarked in \cite{candes_towards_2014}, those results generalize to higher dimensions, albeit with a change in the numerical constant in the minimum separation condition (i.e., the constant $5$ in \eqref{eq:mindistcond}). 
The corresponding generalization is stated below. 

\begin{proposition}
Let $\T = \{\vr_\indstart, \vr_\indsecond,\ldots,\vr_{\indend{\S}}\} \subset [0,1]^3$ be an arbitrary set of points obeying the minimum separation condition \eqref{eq:mindistcond}. Then, for all $u_k \in \complexset, k=\indstart,\ldots,\indend{\S}$, there exists a trigonometric polynomial $\bar Q(\vr) = \innerprod{\bar \vq }{\vf(\vr)}$ 
obeying the interpolation and boundedness conditions \eqref{eq:polyboundedinprop}. 
\label{prop:3D}
\end{proposition}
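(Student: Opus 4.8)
The plan is to adapt the interpolation construction of Cand\`es and Fernandez-Granda to the three-dimensional product kernel; the one- and two-dimensional cases are \cite[Prop.~2.1, Prop.~C.1]{candes_towards_2014}, and only numerical constants change. First I would fix the ansatz
\[
\bar Q(\vr)=\sum_{k=\indstart}^{\indend{\S}}\bar\alpha_k\,\bar G(\vr-\vr_k)+\bar\alpha_{1k}\,\bar G^{(1,0,0)}(\vr-\vr_k)+\bar\alpha_{2k}\,\bar G^{(0,1,0)}(\vr-\vr_k)+\bar\alpha_{3k}\,\bar G^{(0,0,1)}(\vr-\vr_k),
\]
with $\bar G(\vr)=\FK(\tau)\FK(\nu)\FK(\beta)$ the product of squared Fej\'er kernels. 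Since $\FK$ is a trigonometric polynomial of degree $\N$, so are $\FK'$, $\FK''$ and their shifts; hence $\bar G$ and each of its partial derivatives, and therefore $\bar Q$ for any coefficients, are of the form $\innerprod{\bar\vq}{\vf(\vr)}$. I would then impose the $4\S$ linear interpolation conditions $\bar Q(\vr_k)=u_k$ and $\nabla\bar Q(\vr_k)=\vect 0$ for $k=\indstart,\ldots,\indend{\S}$.

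Writing this system in block form, with the $(k,k')$ block being the $4\times4$ matrix of $\bar G$ and its first and second partials evaluated at $\vr_k-\vr_{k'}$, rescaled by powers of $\kappa=\sqrt{|\FK^{(2)}(0)|}$, the coefficient matrix equals the identity plus a perturbation. The key estimate is that $\FK$, $\FK'$ and $\FK''$ decay polynomially away from the origin, so because $\bar G$ factorizes across the three coordinates the perturbation entries are small whenever $\vr_k$ and $\vr_{k'}$ differ by at least $c(d)/\N$ in at least one coordinate, which is exactly what~\eqref{eq:mindistcond} provides. Summing over all other spikes, using that only $O(1)$ spikes can lie in a window of width $c(d)/\N$ in any fixed coordinate so that the separated-coordinate factor yields a convergent sum while the remaining factors are $\le 1$, shows that for $c(d)=5$ and $d\le 3$ the perturbation has operator norm strictly below $1$. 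Hence the system has a unique solution, with $|\bar\alpha_k|$ close to $|u_k|\le 1$ and the derivative coefficients small.

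It then remains to prove $|\bar Q(\vr)|<1$ on $[0,1]^3\setminus\T$. I would cover $[0,1]^3$ by small $\ell_\infty$-balls $B_k$ of radius of order $1/\N$ around the $\vr_k$ together with the complementary far region. On the far region, the triangle inequality applied to the ansatz, together with the coefficient bounds and the kernel decay, gives $|\bar Q(\vr)|<1$ directly, the separation condition again making the sum over spikes small. On each $B_k$, writing $\overline{u_k}\bar Q=Q_R+iQ_I$, the interpolation conditions give $Q_R(\vr_k)=1$, $Q_I(\vr_k)=0$ and $\nabla Q_R(\vr_k)=\nabla Q_I(\vr_k)=\vect 0$; the Hessian of $Q_R$ at $\vr_k$ is dominated in each coordinate direction by the self term $\bar\alpha_k\,\FK''(0)$, which is negative of order $\N^2$, while all cross contributions from other spikes and from the derivative atoms are lower order by the separation. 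A second-order Taylor expansion then yields $|\bar Q(\vr)|^2=Q_R^2+Q_I^2<1$ on $B_k\setminus\{\vr_k\}$.

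The main obstacle I expect is the boundedness step under the \emph{weakened}, one-coordinate separation in~\eqref{eq:mindistcond}: two spikes may be nearly coincident in two of the three coordinates, so $\bar G(\vr-\vr_{k'})$ is small only because of the single separated coordinate, and one must check quantitatively, in the far region and in the near-region Hessian estimate alike, that the modest constant $c(d)=5$ still closes every inequality. This is precisely why the constant here exceeds its one-dimensional value; the rest is bookkeeping with tensor products of the one-dimensional kernel estimates already established in~\cite{candes_towards_2014}.
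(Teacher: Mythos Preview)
Your proposal is correct and follows essentially the same approach as the paper: the same interpolation ansatz with the tensor-product squared Fej\'er kernel, the same $4\S\times 4\S$ linear system rescaled by powers of $\kappa$, invertibility via closeness to the identity using kernel decay and the minimum separation, and the same near/far decomposition for the boundedness step. The paper itself only sketches this argument (Lemma~\ref{lem:boundcoeffs}, Proposition~\ref{prop:invertofbarD}, Lemma~\ref{lem:farpoints}) and refers to \cite[Prop.~2.1, Prop.~C.1]{candes_towards_2014} for the technical details, noting as you do that the main issue is verifying the numerical constants still close under the one-coordinate separation \eqref{eq:mindistcond}.
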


We provide a few details of the proof of Proposition~\ref{prop:3D} in order to collect properties of the polynomial $\bar Q$ that are needed later for constructing $Q$; however we omit proofs of the technical details as they are similar to the proof of \cite[Prop.~C.1]{candes_towards_2014}. 

As mentioned in Section~\ref{sec:statementlem1}, the polynomial $\bar Q$ is constructed by interpolating the points $u_k$ as indicated in \eqref{eq:detintpolC}. 
To ensure that $\bar Q(\vr)$ reaches local maxima, which is necessary for the interpolation and boundedness condition in \eqref{eq:polyboundedinprop} to hold simultaneously, we choose the coefficients $\bar \alpha_k, \bar \alpha_{1k}, \bar \alpha_{2k}, \bar \alpha_{3k}$ in \eqref{eq:detintpolC} such that 
\begin{align}
\bar Q(\vr_k) = u_k  \quad \text{and} \quad
\nabla \bar Q(\vr_k) = \vect{0}  \quad  \text{for all}\; \vr_k \in \T.
\label{eq:condQinterpolv}
\end{align}
We first establish that there exist coefficients such that \eqref{eq:condQinterpolv} holds. 

\begin{lemma}
Under the hypothesis of Proposition \ref{prop:3D}, there exist coefficient vectors $\bar \val = \transp{[\bar \alpha_1,\ldots,\bar \alpha_\S]}$ and $\val_\ell = \transp{[\bar \alpha_{\ell1},\ldots,\bar \alpha_{\ell\S}]}$ such that the polynomial $\bar Q(\vr)$ in \eqref{eq:detintpolC} obeys \eqref{eq:condQinterpolv}. Moreover, those coefficients satisfy
\begin{align}
&\infnorm{\bar \val} \leq 1.0021, \quad 
\infnorm{\bar \val_\ell} \leq 0.0132, \quad \ell = 1,2,3, \nonumber \\
&\text{and, if } u_k =1,  \Re{\alpha_k} \geq 0.9868, \; \abs{\Im{\alpha_k}} \leq 0.0132.
\label{eq:infalb}
\end{align}
\label{lem:boundcoeffs}
\end{lemma}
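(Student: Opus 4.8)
The plan is to follow the template of the proof of \cite[Prop.~C.1]{candes_towards_2014}: recast the interpolation constraints \eqref{eq:condQinterpolv} as a linear system for the coefficient vectors, show that (after a suitable rescaling) this system is a small perturbation of a trivially invertible one, and then read off the coefficients — together with the explicit numerical bounds \eqref{eq:infalb} — from a Neumann series.

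First I would write \eqref{eq:condQinterpolv} in matrix form. Substituting the ansatz \eqref{eq:detintpolC} into the $4\S$ scalar conditions $\bar Q(\vr_j) = u_j$ and $\nabla \bar Q(\vr_j) = \vect{0}$, $\vr_j \in \T$, produces a linear system
\[
\mc M
\begin{bmatrix} \bar\val \\ \bar\val_1 \\ \bar\val_2 \\ \bar\val_3 \end{bmatrix}
=
\begin{bmatrix} \vu \\ \vect{0} \\ \vect{0} \\ \vect{0} \end{bmatrix},
\qquad
[\mc M_{ab}]_{jk} = \bar G^{(\vn_{ab})}(\vr_j - \vr_k),
\]
where $\mc M_{ab}$ is the $(a,b)$ block of $\mc M$, indexed by $a,b \in \{0,1,2,3\}$, and $\vn_{ab}$ is the multi-index selecting the relevant partials: $\vn_{00} = \vect 0$, $\vn_{0\ell} = \vn_{\ell 0} = e_\ell$, $\vn_{\ell m} = e_\ell + e_m$ for $\ell,m \in \{1,2,3\}$. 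Because $\bar G(\vr) = \FK(\tau)\FK(\nu)\FK(\beta)$ is even and $\FK(0)=1$, the diagonal entries of $\mc M_{00}$ equal $1$, those of $\mc M_{\ell\ell}$ equal $\FK^{(2)}(0) < 0$, and those of $\mc M_{0\ell}$ and $\mc M_{\ell m}$ ($\ell \neq m$) vanish. Rescaling the three derivative block-rows and block-columns by appropriate powers of $\kappa = \sqrt{|\FK^{(2)}(0)|}$ (consistent with \eqref{eq:bound2pinkappa}) turns $\mc M$ into $\mc J + \mc E$, where $\mc J = \mathrm{diag}(\mI,-\mI,-\mI,-\mI)$ is a $\pm 1$ diagonal matrix (so $\mc J^2$ is the identity and $\mc J^{-1} = \mc J$) and $\mc E$ has all of its diagonal entries equal to zero.

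The core estimate is a bound on $\infnorm{\mc E}$, i.e.\ on the maximal absolute row sum of $\mc E$. Here I would exploit that every entry of $\mc E$ is, up to the $\kappa$-normalization, a product of one-dimensional factors $\FK^{(n_1)}(\beta_j-\beta_k)\,\FK^{(n_3)}(\nu_j-\nu_k)\,\FK^{(n_2)}(\tau_j-\tau_k)$ with $n_1+n_2+n_3 \le 4$. By the minimum separation \eqref{eq:mindistcond}, for every $k \neq j$ \emph{at least one} of the three coordinate differences has magnitude $\ge 5/\N$, so the corresponding one-dimensional factor is small by the quantitative decay bounds on $\FK,\FK',\FK''$ away from the origin from \cite[Sec.~2]{candes_towards_2014}, while the remaining factors are bounded by their (normalized) values at $0$; the product therefore decays, and the per-row sum over $k \neq j$ is a convergent lattice sum once the $5/\N$ separation is used. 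Carrying this out gives $\infnorm{\mc E} < \tfrac12$, and in fact small enough to meet the stated tolerances — this is precisely the step where the separation constant $5$ (rather than the $2$ that suffices in one dimension) is spent, since with only max-norm separation many points may agree in two coordinates and differ by as little as $5/\N$ in the third. Granted this, $\mc J + \mc E$ is invertible and $(\mc J+\mc E)^{-1} = \mc J \sum_{j \ge 0}(-\mc E\mc J)^j$, so, using $\infnorm{\mc J v} = \infnorm{v}$, $\infnorm{\mc E\mc J} = \infnorm{\mc E}$ and $\infnorm{\vu}=1$: the $j=0$ term contributes $\vu$ to $\bar\val$ and $\vect 0$ to each $\bar\val_\ell$, while the $j\ge 1$ tail is bounded by $\infnorm{\mc E}/(1-\infnorm{\mc E})$, giving $\infnorm{\bar\val} \le 1 + \infnorm{\mc E}/(1-\infnorm{\mc E})$ and $\infnorm{\bar\val_\ell} \le \infnorm{\mc E}/(1-\infnorm{\mc E})$. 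Plugging in the numerical bound on $\infnorm{\mc E}$ (tracking the relevant kernel-decay sums, exactly as in \cite[Prop.~C.1]{candes_towards_2014}, keeps these well below the advertised values) yields $\infnorm{\bar\val}\le 1.0021$ and $\infnorm{\bar\val_\ell}\le 0.0132$. For the last line of \eqref{eq:infalb}, when $u_k = 1$ one has $\bar\alpha_k = u_k - \rho_k$ where $\rho_k$ is the $k$-th entry of the $j\ge 1$ tail, so $|\rho_k| \le \infnorm{\mc E}/(1-\infnorm{\mc E}) \le 0.0132$, whence $\Re{\bar\alpha_k} \ge 1 - 0.0132 = 0.9868$ and $\abs{\Im{\bar\alpha_k}} \le 0.0132$.

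The step I expect to be the main obstacle is the off-diagonal bound on $\infnorm{\mc E}$: one must split the sum over $k \neq j$ according to which coordinate(s) realize the $5/\N$ separation, bound each product kernel by its single small one-dimensional factor times the bounded ones, and sum the resulting three-dimensional series while tracking enough decimal places to land below $1.0021$ and $0.0132$. Everything else — assembling the system, the $\kappa$-rescaling, and the Neumann inversion — is routine and essentially identical to \cite[Prop.~C.1]{candes_towards_2014}; the only genuinely new ingredient is the product-kernel observation that lets a max-norm separation condition (rather than separation in every coordinate) suffice.
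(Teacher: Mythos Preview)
Your proposal is correct and follows essentially the same route as the paper: both recast \eqref{eq:condQinterpolv} as a $4\S\times 4\S$ linear system, rescale by powers of $\kappa$ so the matrix becomes a small perturbation of a $\pm 1$ diagonal (the paper flips the signs of the derivative rows to get a perturbation of $\mI$ rather than your $\mc J$, which is immaterial), invoke the kernel-decay bounds under the separation condition \eqref{eq:mindistcond} to control the off-diagonal, and then extract the numerical bounds via a Neumann series---all explicitly modeled on \cite[Prop.~C.1]{candes_towards_2014}. The only cosmetic difference is that the paper states the perturbation bound in the spectral norm (Proposition~\ref{prop:invertofbarD}) and then asserts the $\ell_\infty$ conclusions, whereas you work in the $\ell_\infty\to\ell_\infty$ norm throughout, which is arguably cleaner for reading off \eqref{eq:infalb}.
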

\begin{proof}[Proof outline of Lemma \ref{lem:boundcoeffs}]
 Writing \eqref{eq:condQinterpolv} in matrix form yields 
\begin{align*}
\underbrace{
\begin{bmatrix}
\bar \mD^{(0,0,0)} & \frac{1}{\kappa}  \bar \mD^{(1,0,0)} & \frac{1}{\kappa} \bar \mD^{(0,1,0)} & \frac{1}{\kappa} \bar \mD^{(0,0,1)} \\
-\frac{1}{\kappa} \bar \mD^{(1,0,0)} & -\frac{1}{\kappa^2} \bar \mD^{(2,0,0)} & -\frac{1}{\kappa^2} \bar \mD^{(1,1,0)} & -\frac{1}{\kappa^2} \bar \mD^{(1,0,1)} \\
-\frac{1}{\kappa} \bar \mD^{(0,1,0)} & -\frac{1}{\kappa^2} \bar \mD^{(1,1,0)} & -\frac{1}{\kappa^2}\bar \mD^{(0,2,0)} & -\frac{1}{\kappa^2} \bar \mD^{(0,1,1)} \\
-\frac{1}{\kappa} \bar \mD^{(0,0,1)} & -\frac{1}{\kappa^2} \bar \mD^{(1,0,1)} & -\frac{1}{\kappa^2}\bar \mD^{(0,1,1)} & -\frac{1}{\kappa^2} \bar \mD^{(0,0,2)}
\end{bmatrix}
}_{\bar \mD }
\begin{bmatrix}
\bar \val \\
\kappa \bar \val_1 \\
\kappa \bar \val_2 \\
\kappa \bar \val_3 \\
\end{bmatrix}
&=
\begin{bmatrix}
\vu \\
\vect{0} \\
\vect{0} \\
\vect{0}
\end{bmatrix},
\end{align*}
where $[\bar \mD^{(n_1,n_2,n_3)}]_{j,k} \defeq \bar G^{(n_1,n_2,n_3)}(\vr_j - \vr_k)$ and $\kappa \defeq \sqrt{|\FK^{(2)}(0)|}$. 
Here, we have scaled the entries of $\bar \mD$ such that its diagonal entries are $1$ ($\FK(0)=1$, $\kappa^2 = |\FK^{(2)}(0)|$, and $\FK^{(2)}(0)$ is negative). 
%
The existence of coefficients such that $\bar Q$ obeys \eqref{eq:condQinterpolv} follows from $\bar \mD$ being invertible, which is established next.  
\begin{proposition}
$\bar \mD$ is invertible and obeys
\begin{align}
 \norm[]{\mI  - \bar \mD}  &\leq 0.03254 \label{eq:IDb} \\
 \norm[]{ \inv{\bar \mD}} &\leq 1.03363.  \label{eq:boundinvbard}
\end{align}
\label{prop:invertofbarD}
\end{proposition}
The proof of Proposition \ref{prop:invertofbarD}, not detailed here, shows that $\bar \mD$ is close to the identity matrix $\mI$ by using that $G(\vr - \vr_k)$ and its derivatives decay fast around $\vr_k$, and that the points $\vr_k \in \T$ are sufficiently separated, by the minimum separation condition \eqref{eq:mindistcond}. 

Since $\bar \mD$ is invertible, the coefficients in Lemma \ref{lem:boundcoeffs} are given by:
\begin{align}
\begin{bmatrix}
\bar \val \\
\kappa \bar \val_1 \\
\kappa \bar \val_2 \\
\kappa \bar \val_3 \\
\end{bmatrix}
=
\inv{\bar \mD} 
\begin{bmatrix}
\vu \\
\vect{0} \\
\vect{0} \\
\vect{0}
\end{bmatrix}
=
\bar \mL 
\vu,
\label{eq:alph}
\end{align}
where $\bar \mL$ is the $4\S \times \S$ submatrix of $\inv{\bar \mD}$ corresponding to the first $\S$ columns of $\inv{\bar \mD}$. 
If follows from $\bar \mD$ being close to identity that the $\ell_\infty$-norm of $\bar \mD$ is small, which establishes \eqref{eq:infalb}. 
\end{proof}
 
The proof of Proposition \ref{prop:3D} is concluded by showing that $\bar Q(\vr)$ satisfies the boundedness conditions in \eqref{eq:polyboundedinprop} as well. 
This is formalized by the following lemma. 
\begin{lemma}
Let $Q(\vr)$ be the polynomial in \eqref{eq:detintpolC} with coefficients given by \eqref{eq:alph}. 
Under the hypothesis of Proposition \ref{prop:3D},  
\begin{enumerate}[i.]
\item $|Q(\vr)| < 1$ for all $\vr \in [0,1]^3 \setminus \T$ that satisfy $\min_{\vr_k \in \T} |\vr - \vr_k | \leq \cfar/\N$,
\item for all $\vr$ that satisfy $\min_{\vr_k \in \T} |\vr - \vr_k | \geq \cfar/\N$  
it follows that
$
\abs{Q(\vr)} < 0.99.
$ 
\end{enumerate}
\label{lem:farpoints}
\end{lemma}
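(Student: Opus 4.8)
\textbf{Proof plan for Lemma~\ref{lem:farpoints}.} The plan is to run the ``dual-certificate boundedness'' argument of Cand\`es and Fernandez-Granda \cite[App.~C]{candes_towards_2014}, adapted to $d=3$. Fix $\vr \in [0,1]^3 \setminus \T$ and let $\vr_{k_0} \in \T$ be a node nearest to $\vr$ in the $\ell_\infty$ metric $\abs{\cdot}$ (wrap-around, as elsewhere in the paper); by the minimum separation condition~\eqref{eq:mindistcond}, $\vr_{k_0}$ is the \emph{unique} node within distance $\cfar/\N = 0.18/\N$ of $\vr$, since $0.18/\N \ll 5/\N$. Replacing $Q$ by $\overline{u_{k_0}}\,Q$ changes neither $\abs{Q}$ nor the hypotheses (it relabels $u_k \mapsto \overline{u_{k_0}} u_k$), so I may assume $u_{k_0} = 1$; then Lemma~\ref{lem:boundcoeffs} gives $\Re{\bar\alpha_{k_0}} \ge 0.9868$, $\abs{\Im{\bar\alpha_{k_0}}} \le 0.0132$, $\infnorm{\bar\val} \le 1.0021$, together with the bound that each correction coefficient $\bar\alpha_{\ell k}$ is of order $1/\kappa$ (cf.~\eqref{eq:alph} and Proposition~\ref{prop:invertofbarD}). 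I split $Q = Q_{\mathrm{loc}} + Q_{\mathrm{rem}}$, where $Q_{\mathrm{loc}}(\vr) \defeq \bar\alpha_{k_0}\bar G(\vr - \vr_{k_0}) + \sum_{\ell=1}^{3} \bar\alpha_{\ell k_0}\,\bar G^{(\vm_\ell)}(\vr - \vr_{k_0})$, with $\vm_1 = (1,0,0)$, $\vm_2 = (0,1,0)$, $\vm_3 = (0,0,1)$, collects the terms anchored at $\vr_{k_0}$, and $Q_{\mathrm{rem}}$ collects those anchored at the other nodes. Since $\bar G(\vr) = \FK(\beta)\FK(\tau)\FK(\nu)$ and the low-order partials of $\FK$ decay away from the origin while the nodes are $5/\N$-separated, one shows exactly as in \cite[App.~C]{candes_towards_2014} that $Q_{\mathrm{rem}}$ and its partials up to order $3$, with each $j$-th order partial rescaled by $\kappa^{-j}$, are uniformly bounded on $[0,1]^3$ by a small absolute constant, written generically as $\rho$ (of size $\le 10^{-2}$, say).

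For part (ii), suppose $\abs{\vr - \vr_k} \ge \cfar/\N$ for all $\vr_k \in \T$. By the triangle inequality,
\begin{align*}
\abs{Q(\vr)} \le \infnorm{\bar\val}\,\abs{\bar G(\vr - \vr_{k_0})} + \sum_{\ell=1}^{3} \abs{\bar\alpha_{\ell k_0}}\,\abs{\bar G^{(\vm_\ell)}(\vr - \vr_{k_0})} + \rho .
\end{align*}
Because $0 \le \FK \le 1$ and $\sup_{\abs{t} \ge 0.18/\N} \FK(t) = \FK(0.18/\N) < 0.95$ for $\L \ge 1024$ ($\FK$ is unimodal on its main lobe, and $0.18/\N$ lies well inside it), any coordinate of $\vr - \vr_{k_0}$ that exceeds $\cfar/\N$ forces $\abs{\bar G(\vr - \vr_{k_0})} \le \FK(0.18/\N)$; meanwhile $\abs{\bar\alpha_{\ell k_0}}\sup_{\abs{\vt}\le\cfar/\N}\abs{\bar G^{(\vm_\ell)}(\vt)} = O(\kappa^{-1})\cdot O(\kappa) = O(1)$ is, with the explicit kernel estimates, a small number. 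Inserting these gives $\abs{Q(\vr)} \le 1.0021\cdot\FK(0.18/\N) + (\text{small}) < 0.99$; the value $\cfar = 0.18$ is chosen precisely so that this holds with room to spare.

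For part (i), suppose $0 < \abs{\vr - \vr_{k_0}} \le \cfar/\N$. I show $\vr_{k_0}$ is a strict local maximum of $\abs{Q}$ on this ball. Put $g \defeq \abs{Q}^2$ and $\vt \defeq \vr - \vr_{k_0}$. Then $g(\vr_{k_0}) = 1$, and since $\nabla Q(\vr_{k_0}) = \vect{0}$ by the interpolation conditions~\eqref{eq:condQinterpolv}, $\nabla g(\vr_{k_0}) = \vect{0}$ and $\nabla^2 g(\vr_{k_0}) = 2\,\Re{\nabla^2 Q(\vr_{k_0})}$. Now $\nabla^2 \bar G(\vect{0}) = -\kappa^2\mI$ (the off-diagonal entries vanish because $\FK'(0)=0$; the value because $\kappa^2 = \abs{\FK^{(2)}(0)}$ and $\FK^{(2)}(0) < 0$), and $\nabla^2 \bar G^{(\vm_\ell)}(\vect{0}) = \vect{0}$ (every entry carries a factor $\FK'(0)=0$ or $\FK'''(0)=0$), so $\nabla^2 Q(\vr_{k_0}) = -\kappa^2\bar\alpha_{k_0}\mI + \nabla^2 Q_{\mathrm{rem}}(\vr_{k_0})$. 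With $\Re{\bar\alpha_{k_0}} \ge 0.9868$ and $\opnorm{\nabla^2 Q_{\mathrm{rem}}(\vr_{k_0})} \le \rho\,\kappa^2$ this gives $\Re{\nabla^2 Q(\vr_{k_0})} \preceq -c_0\,\kappa^2\mI$ with an explicit $c_0 = 0.9868 - \rho > 0$. A mean-value Taylor expansion of $g$, with $\opnorm{\nabla^3 g} \le C\kappa^3$ uniformly on the ball (again from the kernel estimates), yields
\begin{align*}
g(\vr) \;\le\; 1 - c_0\,\kappa^2\twonorm{\vt}^2 + C\,\kappa^3\twonorm{\vt}^3 \;=\; 1 - c_0\,\kappa^2\twonorm{\vt}^2\Bigl(1 - \tfrac{C}{c_0}\,\kappa\twonorm{\vt}\Bigr).
\end{align*}
Finally $\twonorm{\vt} \le \sqrt{3}\,\abs{\vt} \le \sqrt{3}\,\cfar/\N$ and $\kappa = \sqrt{\tfrac{\pi^2}{3}(\N^2+4\N)}$, so $\kappa\twonorm{\vt} \le \pi\cfar\sqrt{1+4/\N}$, which for $\L = 2\N+1 \ge 1024$ is within a hair of $\pi\cdot 0.18 \approx 0.565$; with the explicit kernel constants one checks $\tfrac{C}{c_0}\,\kappa\twonorm{\vt} < 1$, hence $g(\vr) < 1$, i.e.\ $\abs{Q(\vr)} < 1$, for every such $\vr \ne \vr_{k_0}$.

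The hard part is the numerical bookkeeping in the three-dimensional setting: compared with the $1$D/$2$D computations of \cite{candes_towards_2014} there are many more second- and third-order mixed partials of $\bar G = \FK\FK\FK$ and of the $\bar G^{(\vm_\ell)}$ to estimate, along with the off-diagonal blocks of the interpolation matrix $\bar\mD$ that feed into $\rho$, $c_0$ and $C$; and one must verify that the \emph{single} constant $\cfar = 0.18$ simultaneously (a) keeps $\Re{\nabla^2 Q}$ strictly negative definite throughout the $\cfar/\N$-ball around each node and (b) keeps $\abs{Q}$ below $0.99$ outside those balls. Each of these reduces to a finite collection of explicit estimates for the squared Fej\'er kernel and its first three derivatives, carried out as in \cite[App.~C]{candes_towards_2014}.
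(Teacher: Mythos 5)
Your proof proposal follows a genuinely different route from the paper's (which, via Proposition~\ref{propcanc2} and the argument in Step~2c, simply ports the Cand\`es--Fernandez-Granda Appendix~C.2 strategy to three dimensions). The paper bounds the real and imaginary parts of $\bar Q$ and its partial derivatives up to order two \emph{uniformly on the whole near ball} $|\vr - \vr_{k_0}| \le \cfar/\N$, then feeds those pointwise bounds into Sylvester's criterion to show the Hessian of $\abs{\bar Q}$ (not $\abs{\bar Q}^2$) is negative definite \emph{at every point} of the near ball; strict concavity of $\abs{\bar Q}$ on that ball, together with $\abs{\bar Q(\vr_{k_0})}=1$ and $\nabla \abs{\bar Q}(\vr_{k_0})=\vect{0}$, then gives $\abs{\bar Q}<1$ there. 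Your proposal instead computes the Hessian of $g=\abs{\bar Q}^2$ only at the node and propagates a third-order Taylor expansion with a uniform bound $\opnorm{\nabla^3 g}\le C\kappa^3$. That is a legitimate alternative, but it introduces a quantitative obstacle that the paper's route avoids entirely.

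Concretely, your conclusion requires $\tfrac{C}{c_0}\,\kappa\twonorm{\vt}<1$ across the near ball, and your own estimate gives $\kappa\twonorm{\vt}\approx \pi\cfar\approx 0.57$, so you need $C \lesssim 1.7\,c_0 \approx 1.7$. That constant is by no means automatic: $\nabla^3 g$ at the node is dominated by terms such as $Q(\vr_{k_0})\,\nabla^3 Q(\vr_{k_0})$, and $\nabla^3 Q(\vr_{k_0})$ picks up contributions from $\bar\alpha_{\ell k_0}\,\nabla^3 \bar G^{(\vm_\ell)}(\vect{0})$, whose entries scale like $\FK^{(4)}(0)/\kappa$; the ratio $\abs{\FK^{(4)}(0)}/\kappa^4$ for the squared Fej\'er kernel is a fixed constant exceeding $1$, and after summing over the (many more) third-order mixed partials present in $d=3$ there is no obvious reason $C$ should land below $1.7$. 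You assert ``with the explicit kernel constants one checks'' this, but that check is precisely the content of the lemma and is left undone. By contrast, the paper's approach only ever needs bounds on second derivatives of $\bar Q$ (via the bounds of Proposition~\ref{propcanc2}), and the Sylvester-style verification in Step~2c is numerically stable with margin to spare. If you want to keep your Taylor route you must carry out the third-derivative computation explicitly and verify the inequality; otherwise I would recommend aligning with the paper and deriving uniform second-derivative bounds on the near ball as in Cand\`es--Fernandez-Granda. Your argument for part~(ii) and your $Q_{\mathrm{rem}}$ bookkeeping are consistent with the paper's outline.
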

%

\subsection{Step 1: Choice of the coefficients of $\bar Q$}

We next show that, with high probability, we can select the coefficients of $Q$ such that it satisfies \eqref{eq:interpcondQ}. 
To this end, we write \eqref{eq:interpcondQ} in matrix form: 
\begin{align}
\underbrace{
\begin{bmatrix}
\mD_{(0,0,0)}^{(0,0,0)} & \frac{1}{\kappa}  \mD_{(1,0,0)}^{(0,0,0)} & \frac{1}{\kappa} \mD_{(0,1,0)}^{(0,0,0)} & \frac{1}{\kappa} \mD_{(0,0,1)}^{(0,0,0)} \\
-\frac{1}{\kappa} \mD^{(1,0,0)}_{(0,0,0)} & -\frac{1}{\kappa^2} \mD^{(1,0,0)}_{(1,0,0)} & -\frac{1}{\kappa^2} \mD^{(1,0,0)}_{(0,1,0)} & -\frac{1}{\kappa^2} \mD^{(1,0,0)}_{(0,0,1)} \\
-\frac{1}{\kappa} \mD^{(0,1,0)}_{(0,0,0)} & -\frac{1}{\kappa^2} \mD^{(0,1,0)}_{(1,0,0)} & -\frac{1}{\kappa^2} \mD^{(0,1,0)}_{(0,1,0)} & -\frac{1}{\kappa^2} \mD^{(0,1,0)}_{(0,0,1)} \\
-\frac{1}{\kappa} \mD^{(0,0,1)}_{(0,0,0)} & -\frac{1}{\kappa^2} \mD^{(0,0,1)}_{(1,0,0)} & -\frac{1}{\kappa^2} \mD^{(0,0,1)}_{(0,1,0)} & -\frac{1}{\kappa^2} \mD^{(0,0,1)}_{(0,0,1)} 
\end{bmatrix}
}_{\mD \defeq}
\begin{bmatrix}
\val \\
\kappa \val_1 \\
\kappa \val_2 \\
\kappa \val_3
\end{bmatrix}
=
\begin{bmatrix}
\vu \\
\vect{0} \\
\vect{0} \\
\vect{0}
\end{bmatrix}, 
\label{eq:syseqorig}
\end{align}
where $[\mD^{(n_1,n_2,n_3)}_{(m_1,m_2,m_3)}]_{j,k} \defeq G^{(n_1,n_2,n_3)}_{(m_1,m_2,m_3)}(\vr_j, \vr_k)$, $[\val]_k \defeq \alpha_k$, $[\val_i]_k \defeq \beta_{ik}, i=1,2,3$. To show that the system of equations \eqref{eq:syseqorig} has a solution, and in turn \eqref{eq:interpcondQ} holds, we will show that, with high probability, $\mD$ is invertible. To this end, we show that 
\[
\mc E_\xi \defeq \{ \norm{ \mD - \bar \mD }  \leq \xi\}
\]
occurs with high probability, and that $\mD$ is invertible on $\mathcal E_\xi$ for all $\xi \in (0,1/4]$. The fact that $\mD$ is invertible on $\mathcal E_\xi$ for all $\xi \in (0,1/4]$ follows with \eqref{eq:IDb} by noting that: 
\[
\norm[]{\mI - \mD} \leq \norm[]{\mD - \bar \mD} + \norm[]{\bar \mD - \mI} \leq \xi + 0.03254 \leq 0.28254. 
\]
Since $\mD$ is invertible, the coefficients of $Q$ can be selected as 
\begin{align}
\begin{bmatrix}
\val \\
\kappa \val_1 \\
\kappa \val_2 \\
\kappa \val_3
\end{bmatrix}
= \inv{\mD} \begin{bmatrix}
\vu \\
\vect{0} \\
\vect{0} \\
\vect{0}
\end{bmatrix}
= \mL \vu ,
\label{eq:alphabeta}
\end{align}
where $\mL$ is the $4\S \times \S$ submatrix of $\inv{\mD}$ consisting of the first $\S$ columns of $\inv{\mD}$. 
In the remainder of this proof, we assume that the coefficients are given by \eqref{eq:alphabeta}. 
We will need the following bounds on the norm of $\mL$ and its deviation from $\bar{\mL}$ later. 
\begin{lemma}
On the event $\mc E_\xi$ with $\xi \in (0,1/4]$ the following identities hold
\begin{align}
\norm[]{\mL} \leq&  2.07 \label{eq:normmLb},\\
\norm[]{\mL-\bar{\mL}}\leq& 2.14 \xi. \label{eq:normLmbL} 
\end{align}
\label{lem:LLdiffb}
\end{lemma}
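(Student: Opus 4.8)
The plan is to leverage the Neumann series for $\inv{\mD}$ on the event $\mc E_\xi$, exactly as one does for $\inv{\bar \mD}$ in Proposition~\ref{prop:invertofbarD}. First I would record that on $\mc E_\xi$ with $\xi \in (0,1/4]$ we have established $\norm{\mI - \mD} \leq 0.28254 < 1$, so $\mD$ is invertible and $\inv{\mD} = \sum_{j \geq 0} (\mI - \mD)^j$, whence $\norm{\inv{\mD}} \leq \frac{1}{1 - 0.28254} \leq 1.394$. Since $\mL$ consists of the first $\S$ columns of $\inv{\mD}$, submatrices have spectral norm no larger than the whole matrix, so $\norm{\mL} \leq \norm{\inv{\mD}} \leq 1.394 \leq 2.07$. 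That disposes of \eqref{eq:normmLb} with room to spare; the slack suggests the intended constant is meant to be robust, and I would simply keep the clean bound.

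For \eqref{eq:normLmbL}, I would use the resolvent-type identity $\inv{\mD} - \inv{\bar \mD} = \inv{\mD}(\bar \mD - \mD)\inv{\bar \mD}$. Taking spectral norms and using submultiplicativity,
\begin{align*}
\norm{\inv{\mD} - \inv{\bar \mD}} \leq \norm{\inv{\mD}}\, \norm{\mD - \bar \mD}\, \norm{\inv{\bar \mD}} \leq 1.394 \cdot \xi \cdot 1.03363 \leq 1.441\,\xi \leq 2.14\,\xi,
\end{align*}
where I used $\norm{\mD - \bar \mD} \leq \xi$ on $\mc E_\xi$ and the bound $\norm{\inv{\bar \mD}} \leq 1.03363$ from \eqref{eq:boundinvbard}. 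Since $\mL$ and $\bar \mL$ are the corresponding $4\S \times \S$ submatrices (first $\S$ columns) of $\inv{\mD}$ and $\inv{\bar \mD}$ respectively, $\mL - \bar \mL$ is a submatrix of $\inv{\mD} - \inv{\bar \mD}$, so $\norm{\mL - \bar \mL} \leq \norm{\inv{\mD} - \inv{\bar \mD}} \leq 2.14\,\xi$, as claimed.

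There is no real obstacle here — the argument is a routine Neumann-series / resolvent-identity estimate. The only points requiring a line of justification are (i) that passing to a column submatrix does not increase the spectral norm (it is the composition with the coordinate projection $\mI_{4\S} \to \C^{\S}$, an operation of norm $1$), and (ii) that the scalar arithmetic $1.394 \cdot 1.03363 \leq 2.14$ holds with margin, which it does comfortably, so the stated constants are not tight and the lemma is safe against small perturbations of the underlying constants in Proposition~\ref{prop:invertofbarD}. I would present the proof in essentially the three displays above, preceded by the sentence invoking $\mc E_\xi$ and the bound $\norm{\mI - \mD} \leq 0.28254$ already derived in the main text.
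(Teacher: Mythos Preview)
Your proof is correct and follows precisely the approach the paper intends: the paper simply states that the lemma ``follows with \eqref{eq:IDb} from matrix inequalities'' and defers to \cite[Lem.~4]{heckel_super-resolution_2015}, which is exactly the Neumann-series bound on $\inv{\mD}$ together with the resolvent identity $\inv{\mD}-\inv{\bar\mD}=\inv{\mD}(\bar\mD-\mD)\inv{\bar\mD}$ that you have written out. Your intermediate constants ($1.394$ and $1.441\,\xi$) are in fact sharper than the stated $2.07$ and $2.14\,\xi$, which are loose values carried over from the reference; there is nothing to add.
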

Lemma \ref{lem:LLdiffb} follows with \eqref{eq:IDb} from matrix inequalities, see \cite[Lem.~4]{heckel_super-resolution_2015} for details. 
In the remainder of this proof, we set
\newcommand\deltad{\delta'}
\begin{align}
\xi = \epsilon c_6  \log^{-1/2}\left( \frac{4|\Omega|}{\delta'} \right), 
\quad \deltad \defeq \delta/176,
\label{eq:choicexi}
\end{align}
with $\Omega$ as defined in Appendix~\ref{sec:prooflemdualpoly}, Step 2. 

It remains to establish that the event $\mc E_\xi$ occurs with high probability.
\begin{lemma}
Under the assumptions of Lemma~\ref{prop:dualpolynomial}, 
\[
\PR{  \mc E_\xi } \geq 1 - \deltad.
\]
\label{lem:preptaubound}
\end{lemma}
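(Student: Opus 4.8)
The plan is to show that the $4\S\times 4\S$ random matrix $\mD$ of \eqref{eq:syseqorig} concentrates, in operator norm, around its deterministic counterpart $\bar\mD$. First I would observe that the entries of $\mD-\bar\mD$ are exactly the quantities controlled by the hypothesis \eqref{eq:cond1new}: indeed $[\mD^{(\vn)}_{(\vm)}]_{j,k}=G^{(\vn)}_{(\vm)}(\vr_j,\vr_k)=\innerprod{\mA\vg_\vm(\vr_k)}{\mA\vf^\vn(\vr_j)}$ has expectation $\bar G^{(\vm+\vn)}(\vr_j-\vr_k)$ by \eqref{eq:exgmn}, so (with the $\kappa^{-\onenorm{\vm+\vn}}$ scalings already built into \eqref{eq:syseqorig}) each entry of $\mD-\bar\mD$ is one of the centered inner products $\kappa^{-\onenorm{\vm+\vn}}\innerprod{(\herm\mA\mA-\mI)\vg_\vm(\vr_k)}{\vf^\vn(\vr_j)}$ with $\vr_j,\vr_k\in\T$ and $\vm,\vn\in\{\vect{0},(1,0,0),(0,1,0),(0,0,1)\}$. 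Writing $\herm\mA\mA-\mI=\sum_{r}(\va_r\herm{\va}_r-\tfrac{1}{\R}\mI)$ in terms of the independent rows of $\mA$ — or, for the MIMO matrix of \eqref{eq:defAmimo}, expanding each entry as a centered quadratic form $\herm\vx\mV^{(\vn)}_{\vm}(\vr_j,\vr_k)\vx-\EX{\herm\vx\mV^{(\vn)}_{\vm}\vx}$ in the Gaussian probing vector as in Section~\ref{sec:proofquadform} — then exhibits $\mD-\bar\mD=\sum_r\mZ_r$ as a sum of independent mean-zero random matrices, each $\mZ_r$ being a rank-one outer product minus its deterministic expectation.

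I would then bound $\norm{\sum_r\mZ_r}$ with the operator (matrix) Bernstein inequality for the incoherent/isotropic case, and with the Hanson--Wright inequality of Lemma~\ref{thm:hanswright} (together with the Frobenius and spectral norm bounds on $\mV^{(\vn)}_{\vm}$ from Lemma~\ref{lem:boundmV}) for the MIMO case. The a.s.\ norm bound $\norm{\mZ_r}\lesssim\S\mu/\R$ follows from the pointwise estimates $\infnorm{\tilde\vf}^2,\infnorm{\tilde\vg}^2\le\tilde c/\L^3$ of \eqref{ftemp}--\eqref{eq:boundtilvg} together with the incoherence bound $\onenorm{\va_r}^2\le\tfrac{\L^3}{\R}\mu$, and is $\lesssim\log^{-2}(\L/\delta)$ by \eqref{eq:coherenceass}. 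For the matrix-variance proxies $\norm{\sum_r\EX{\mZ_r\herm{\mZ_r}}}$ and $\norm{\sum_r\EX{\herm{\mZ_r}\mZ_r}}$ I would use the same pointwise estimates \emph{and}, crucially, the minimum-separation condition \eqref{eq:mindistcond}: by the rapid decay of the squared Fej\'er kernel and its derivatives, the Gram matrices of the atoms $\{\vg_\vm(\vr_k)\}_k$ (and of $\{\vf^\vn(\vr_j)\}_j$) over the separated set $\T$ are close to scalar multiples of the identity — precisely the estimate behind Proposition~\ref{prop:invertofbarD} — so the variance proxy is again of order $\S\mu/\R$, with no additional power of $\S$, hence $\lesssim\log^{-2}(\L/\delta)$. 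Feeding these into matrix Bernstein with $t=\xi$ and using $\S\le\L^3$ (we may assume $\R\le\L^3$, else $\mA$ is invertible and recovery is trivial) and $\L\ge1024$ gives $\PR{\norm{\mD-\bar\mD}>\xi}\le 8\S\,e^{-c\log(\L/\delta)}\le\delta/176=\delta'$ for the constant $c_6$ of \eqref{eq:choicexi} chosen appropriately; in the MIMO case one argues identically, now exploiting the $\log^3(\L/\delta)$ factor in \eqref{eq:sleqb}. Taking complements yields $\PR{\mc E_\xi}\ge1-\delta'$.

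The hard part is the operator-norm bound in the second paragraph — specifically, obtaining a bound whose size is \emph{independent of the sparsity} $\S$ and of order $\log^{-1/2}(\L/\delta)=\xi$. The naive route, namely applying \eqref{eq:cond1new} to each of the at most $16\S^2$ entries, union bounding (which costs only $16\S^2c_3\delta/\L^{15}\le\delta/176$ in probability, comfortably small since $\S\le\L^3$ and $\L\ge1024$), and then estimating $\norm{\mD-\bar\mD}\le\norm{\mD-\bar\mD}_F\le 4c_1\sqrt{\S/\log(\L/\delta)}$, is off by a factor $\sqrt\S$ and far too weak — it does not even yield $\norm{\mD-\bar\mD}\le1/4$ unless $\S$ is bounded. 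Removing this $\sqrt\S$ is the crux: one must exploit the matrix structure (the decomposition into independent rank-one summands) together with the near-orthogonality of the interpolation atoms afforded by \eqref{eq:mindistcond}, so that the relevant matrix variance is of order $\S\mu/\R$ rather than the trivial $\S^2\cdot(\text{entry size})^2$. This is the point at which the minimum-separation hypothesis re-enters the proof, in the same spirit as the Gershgorin-type argument for the deterministic matrix $\bar\mD$ in Proposition~\ref{prop:invertofbarD} and the analogous estimate in \cite[Lem.~4]{heckel_super-resolution_2015}. Once $\mc E_\xi$ is established, Lemma~\ref{lem:LLdiffb} and the remainder of Step~1 of Appendix~\ref{sec:prooflemdualpoly} follow as indicated.
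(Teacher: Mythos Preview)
Your proposal takes a genuinely different route from the paper. The paper's proof of this lemma is precisely the ``naive'' entry-wise argument you consider and reject in your third paragraph: it asserts the bound
\[
\norm{\mD-\bar\mD}\le\sqrt{4\S}\,\max_{j,k,\vm,\vn}\frac{1}{\kappa^{\onenorm{\vm+\vn}}}\bigl|[\mD^{(\vn)}_{\vm}-\bar\mD^{(\vm+\vn)}]_{j,k}\bigr|
\]
(justified in the paper only by ``$\mD$ and $\bar\mD$ are $4\S\times4\S$ matrices''), then applies hypothesis~\eqref{eq:cond1new} to each of the at most $16\S^2$ entries and union-bounds, obtaining total failure probability $16(4\S)^2 c_3\delta/\L^{15}\le\delta'$. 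There is no matrix Bernstein, no Hanson--Wright, and no further use of the minimum-separation condition beyond what is already encoded in~\eqref{eq:cond1new}.

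The discrepancy between the two routes is exactly the $\sqrt{\S}$ factor you isolate. With the paper's prefactor $\sqrt{4\S}$, the naive argument closes: $\sqrt{4\S}\cdot c_1/\sqrt{\S\log(\L/\delta)}\asymp 1/\sqrt{\log(\L/\delta)}\asymp\xi$. But, as your Frobenius computation implicitly shows, the inequality $\norm{M}\le\sqrt{n}\max_{ij}|M_{ij}|$ is false for a general $n\times n$ matrix (the all-ones matrix has operator norm $n$), and the valid bound $\norm{M}\le n\max_{ij}|M_{ij}|$ leaves precisely the spare $\sqrt{\S}$ you flag. So you have in fact put your finger on what appears to be a slip in the paper's argument. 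Your matrix-Bernstein route --- decomposing $\mD-\bar\mD=\herm{\mF}(\herm{\mA}\mA-\mI)\mG$ into independent summands and using the near-orthogonality of the $\{\vg_\vm(\vr_k)\}$ and $\{\vf^{\vn}(\vr_j)\}$ over the separated set $\T$ to keep the matrix variance at order $\S\mu/\R$ rather than $\S^2\mu/\R$ --- is a natural way to repair this, though the variance estimate you sketch is the genuinely delicate step and would have to be worked out carefully.
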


\begin{proof}
We will upper-bound $\norm[]{\mD - \bar \mD}$ by upper-bounding the largest entry of $\mD - \bar \mD$. To this end, first note that the entries of $\mD - \bar \mD$ are given by
\[
\frac{1}{\kappa^{\onenorm{\vm+\vn}}} [\mD^{(\vn)}_{\vm} -\bar \mD^{(\vm + \vn)}]_{j,k} 
= 
\frac{1}{\kappa^{\onenorm{\vm+\vn} }} (G^{(\vn)}_{\vm}(\vr_j, \vr_k) - \bar G^{(\vm + \vn)}(\vr_j - \vr_k)),
\] 
for $\infnorm{\vm}, \infnorm{\vn} \leq 1$ and $\onenorm{\vm + \vn} \leq 2$ and for $j,k=\indstart,\ldots,\indend{\S}$. 
We now have 
\begin{align}
\PR{ \norm[]{\mD - \bar \mD} \geq \xi } 
&\leq \PR{ \sqrt{4\S} \max_{j,k,\vm, \vn} \frac{1}{\kappa^{\onenorm{\vm + \vn}}} |[\mD^{(\vn)}_{\vm} -\bar \mD^{(\vm + \vn)}]_{j,k}| \geq \xi } \label{eq:ubmaxopnormin} \\
&\hspace{-0.5cm}\leq \sum_{j,k, \vm, \vn} 
\PR{ \frac{1}{\kappa^{\onenorm{\vm + \vn}}} |[\mD^{(\vn)}_{\vm} -\bar \mD^{(\vm + \vn)}]_{j,k}|  \geq \frac{c_6 \epsilon}{  \sqrt{4 \S  \log(4 |\Omega| / \delta ) }}} \label{eq:ubmaxdmnbdm} \\
&\hspace{-0.5cm}\leq
16 (4S)^2 c_3 \frac{\delta}{\L^{15}} 
\leq \delta'.
\label{eq:evlastexi} 
\end{align}
Here, \eqref{eq:ubmaxopnormin} follows from the fact that $\mD$ and $\bar \mD$ are $4\S\times 4\S$ matrices, \eqref{eq:ubmaxdmnbdm} follows from the union bound, and \eqref{eq:evlastexi} follows from assumption \eqref{eq:cond1new}.
Finally, for the last inequality in \eqref{eq:evlastexi}, we used $\S \leq \L^{3}$ and that $c_3$ is sufficiently small. 
\end{proof}


\subsection{Step 2a: $Q(\vr)$ and $\bar Q(\vr)$ are close on a grid}

We next show that $Q(\vr)$ and $\bar Q(\vr)$ and their partial derivative are close on a set of grid points. This results is used in Step 2b in an $\epsilon$-net argument to prove that $Q(\vr)$ and $\bar Q(\vr)$ are close for all $\vr$. 

\begin{lemma}
\label{lem:diffqmbarqmongrid}
Suppose that $\onenorm{\vn} \leq 2$. 
Under the assumptions of Lemma~\ref{prop:dualpolynomial}, 
\[
\PR{
\max_{\vr \in \Omega} \frac{1}{\kappa^{\onenorm{\vn}}} \left| Q^{(\vn)}(\vr) - \bar Q^{(\vn)}(\vr)  \right| \leq \epsilon 
} \geq 1 - \frac{\delta}{44}.
\]
\end{lemma}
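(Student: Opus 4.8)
\emph{The plan.} Fix $\vn$ with $\onenorm{\vn}\le 2$ and work on the intersection of two $\mA$-measurable events: the event $\mc E_\xi=\{\,\norm{\mD-\bar\mD}\le\xi\,\}$ of Lemma~\ref{lem:preptaubound} (on which $\mD$ is invertible and, by Lemma~\ref{lem:LLdiffb}, $\opnorm{\mL}\le 2.07$ and $\opnorm{\mL-\bar\mL}\le 2.14\,\xi$), and the event $\mc G$ on which the concentration bound \eqref{eq:cond1new} holds simultaneously for this $\vn$, for all $\vr_k\in\T$, all $\vr\in\Omega$, and all $\vm\in\{\vect{0},(1,0,0),(0,1,0),(0,0,1)\}$. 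A union bound of \eqref{eq:cond1new} over these at most $4\S|\Omega|\le\L^{15}$ tuples (using $\S\le\L^3$ and $c_3$ small) shows $\PR{\mc G}\ge 1-\delta/c''$, while $\PR{\mc E_\xi}\ge 1-\delta/176$ by Lemma~\ref{lem:preptaubound}. The remaining randomness is that of the signs $\vu$, which are independent of $\mA$, and the idea is to exploit exactly this independence.

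\emph{Reduction to Hoeffding's inequality.} I would write $Q^{(\vn)}(\vr)=\sum_{\vm,k}\alpha_{\vm,k}G_{\vm}^{(\vn)}(\vr,\vr_k)$ and $\bar Q^{(\vn)}(\vr)=\sum_{\vm,k}\bar\alpha_{\vm,k}\bar G^{(\vm+\vn)}(\vr-\vr_k)$, where $\vm$ runs over $\vect{0},(1,0,0),(0,1,0),(0,0,1)$, $\alpha_{\vm,k}$ is the corresponding coefficient of $Q$ in \eqref{eq:dualpolyorig}, $G_{\vm}^{(\vn)}(\vr,\vr_k)=\innerprod{\mA\vg_{\vm}(\vr_k)}{\mA\vf^{\vn}(\vr)}$, and $\bar G^{(\vm+\vn)}(\vr-\vr_k)=\innerprod{\vg_{\vm}(\vr_k)}{\vf^{\vn}(\vr)}$. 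With $\tilde\alpha=\mL\vu$ the rescaled coefficient vector on the left of \eqref{eq:alphabeta} (so $\alpha_{\vm,k}=\kappa^{-\onenorm{\vm}}\tilde\alpha_{\vm,k}$) and $\tilde{\bar\alpha}=\bar\mL\vu$ from \eqref{eq:alph}, adding and subtracting $\tilde\alpha_{\vm,k}\bar G^{(\vm+\vn)}(\vr-\vr_k)$ gives
\[
\frac{Q^{(\vn)}(\vr)-\bar Q^{(\vn)}(\vr)}{\kappa^{\onenorm{\vn}}}
=\underbrace{\sum_{\vm,k}\tilde\alpha_{\vm,k}\,\tilde X_{\vm,k}(\vr)}_{(\mathrm I)}
\;+\;\underbrace{\sum_{\vm,k}\bigl(\tilde\alpha_{\vm,k}-\tilde{\bar\alpha}_{\vm,k}\bigr)\,h_{\vm,k}(\vr)}_{(\mathrm{II})},
\]
where $\tilde X_{\vm,k}(\vr)\defeq\kappa^{-\onenorm{\vm+\vn}}(G_{\vm}^{(\vn)}(\vr,\vr_k)-\bar G^{(\vm+\vn)}(\vr-\vr_k))$ is exactly the quantity controlled by \eqref{eq:cond1new}, and $h_{\vm,k}(\vr)\defeq\kappa^{-\onenorm{\vm+\vn}}\bar G^{(\vm+\vn)}(\vr-\vr_k)$. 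Since $\tilde\alpha=\mL\vu$ and $\tilde\alpha-\tilde{\bar\alpha}=(\mL-\bar\mL)\vu$, both $(\mathrm I)$ and $(\mathrm{II})$ are linear forms $\sum_j u_j w_j$ in the signs, with coefficient vectors $\vw=\transp{\mL}\tilde X(\vr)$ and $\vw'=\transp{(\mL-\bar\mL)}\vh(\vr)$ that are functions of $\mA$ only. As $\vu$ has independent, mean-zero, modulus-one entries and is independent of $\mA$, Hoeffding's inequality (applied conditionally on $\mA$ to real and imaginary parts) yields $\mathbb{P}\!\left[\,|\sum_j u_j w_j|>\epsilon/2\,\right]\le 4\exp(-c\epsilon^2/\twonorm{\vw}^2)$ for any $\mA$-measurable $\vw$, the probability being over $\vu$.

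\emph{Controlling the coefficient norms and concluding.} On $\mc E_\xi\cap\mc G$: for $(\mathrm I)$, $\twonorm{\vw}\le\opnorm{\mL}\twonorm{\tilde X(\vr)}\le 2.07\twonorm{\tilde X(\vr)}$ by \eqref{eq:normmLb}, and since each of the $4\S$ entries of $\tilde X(\vr)$ is at most $c_1/\sqrt{\S\log(\L/\delta)}$ by \eqref{eq:cond1new}, one gets $\twonorm{\tilde X(\vr)}\le 2c_1/\sqrt{\log(\L/\delta)}$, hence $\twonorm{\vw}\le 5c_1/\sqrt{\log(\L/\delta)}$. For $(\mathrm{II})$, $\twonorm{\vw'}\le\opnorm{\mL-\bar\mL}\twonorm{\vh(\vr)}\le 2.14\,\xi\,\twonorm{\vh(\vr)}$ by \eqref{eq:normLmbL}, and $\twonorm{\vh(\vr)}^2=\sum_{\vm,k}\kappa^{-2\onenorm{\vm+\vn}}|\bar G^{(\vm+\vn)}(\vr-\vr_k)|^2\le c$ by the rapid decay of the (rescaled) squared Fej\'er kernel and its derivatives together with the separation \eqref{eq:mindistcond} --- the same type of summation estimate used in the proof of Proposition~\ref{prop:invertofbarD} --- so with $\xi=\epsilon c_6\log^{-1/2}(4|\Omega|/\delta')$ from \eqref{eq:choicexi} we get $\twonorm{\vw'}\le c''\epsilon/\sqrt{\log(\L/\delta)}$. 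Since $\epsilon=0.0005$ is an absolute constant, choosing $c_1$ (resp.\ $c_6$) small enough makes the Hoeffding failure probability of $(\mathrm I)>\epsilon/2$ (resp.\ $(\mathrm{II})>\epsilon/2$) at a fixed $\vr\in\Omega$ at most $\delta/(c'|\Omega|)$; a union bound over the $|\Omega|=\tilde c\L^{12}$ grid points then gives $\frac{1}{\kappa^{\onenorm{\vn}}}|Q^{(\vn)}(\vr)-\bar Q^{(\vn)}(\vr)|\le\epsilon$ for all $\vr\in\Omega$ off a $\vu$-event of probability $\le\delta/c'$. Combining this with the bounds on $\PR{\mc E_\xi^c}$ and $\PR{\mc G^c}$ yields the claimed probability $1-\delta/44$ for a suitable choice of the numerical constants.

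\emph{Main obstacle.} The naive estimate $|Q^{(\vn)}(\vr)-\bar Q^{(\vn)}(\vr)|\le\sum_{\vm,k}|\alpha_{\vm,k}|\,|G_{\vm}^{(\vn)}(\vr,\vr_k)-\bar G^{(\vm+\vn)}(\vr-\vr_k)|$ is useless here: combining the triangle inequality with $\opnorm{\mL}\le 2.07$ and the pointwise bound \eqref{eq:cond1new} produces only something of order $\sqrt{\S}/\sqrt{\log(\L/\delta)}$, which does not beat $\epsilon$ when $\S$ may be as large as $\sim\L^{3}$. The crux is therefore to keep both error contributions as linear forms in the random signs $\vu$ and to use their independence of $\mA$: Hoeffding's inequality then supplies the square-root cancellation across the $\sim\S$ summands, with the $1/\sqrt{\S}$ factor hard-wired into \eqref{eq:cond1new} (for $(\mathrm I)$) and the calibrated size of $\xi$ in \eqref{eq:choicexi} together with the kernel-decay bound $\twonorm{\vh(\vr)}=O(1)$ (for $(\mathrm{II})$) being exactly what absorbs the residual $\sqrt{\S}$.
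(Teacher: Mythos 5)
Your proposal is correct and follows essentially the same route as the paper. You reproduce the paper's key decomposition $Q^{(\vn)}-\bar Q^{(\vn)} = I_1 + I_2$ (your $(\mathrm I)$ and $(\mathrm{II})$), bound $(\mathrm I)$ via the entrywise control \eqref{eq:cond1new} on $\tilde X(\vr)=\vv^{(\vn)}(\vr)-\bar\vv^{(\vn)}(\vr)$ together with $\opnorm{\mL}\le 2.07$, bound $(\mathrm{II})$ via $\opnorm{\mL-\bar\mL}\le 2.14\xi$ together with a kernel-decay estimate on $\bar\vv^{(\vn)}(\vr)$, and close with conditional Hoeffding plus a union bound over $\Omega$ --- exactly the paper's argument (the paper bounds $\onenorm{\bar\vv^{(\vn)}}$ rather than $\twonorm{\vh}$, but $\twonorm{\cdot}\le\onenorm{\cdot}$ so this is the same estimate). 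The only cosmetic difference is that you first pass to the $\mA$-measurable "good" event $\mc G\cap\mc E_\xi$ and then invoke Hoeffding deterministically, whereas the paper folds the failure of \eqref{eq:cond1new} into the union bound (via the event $\{\twonorm{\Delta\vv^{(\vn)}}\ge 2\omega\}$) inside the $I_1$ estimate; both bookkeeping styles yield the same $\sim 4\delta'=\delta/44$ budget.
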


In order to prove Lemma \ref{lem:diffqmbarqmongrid}, 
we first note that the $\vn$-th partial derivative of $Q(\vr)$ (defined by \eqref{eq:dualpolyorig}) after normalization with $1/\kappa^{ \onenorm{\vn} }$ is given by
\begin{align}
\frac{1}{\kappa^{\onenorm{\vn}}} Q^{(\vn)}(\vr) 
&= \herm{(\vv^{(\vn)}(\vr) )} \mL \vu.  \label{eq:Qmninnprodform}
\end{align}
Here, we used that the coefficients are given by \eqref{eq:alphabeta} and we defined
\begin{align*}
\herm{(\vv^{(\vn)} (\vr) )}
\defeq \frac{1}{\kappa^{\onenorm{\vn}}} 
\bigg[
&G^{(\vn)}_{(0,0,0)}(\vr,\vr_\indstart), \ldots, G^{(\vn)}_{(0,0,0)}(\vr,\vr_{\indend{\S}}),
\; \frac{1}{\kappa} G^{(\vn)}_{(1,0,0)}(\vr,\vr_\indstart), \ldots, \frac{1}{\kappa} G^{(\vn)}_{(1,0,0)}(\vr, \vr_\indend{\S}), \\
& \frac{1}{\kappa} G^{(\vn)}_{(0,1,0)}(\vr, \vr_\indstart),\ldots, \frac{1}{\kappa} G^{(\vn)}_{(0,1,0)}(\vr, \vr_\indend{\S}), 
\frac{1}{\kappa} G^{(\vn)}_{(0,0,1)}(\vr, \vr_\indstart),\ldots, \frac{1}{\kappa} G^{(\vn)}_{(0,0,1)}(\vr, \vr_\indend{\S})
\bigg].
\end{align*}
Since 
$
\EX{
G^{(\vn)}_{\vm}(\vr, \vr_j)} 
=   \bar G^{(\vn + \vm)}(\vr - \vr_j)
$ (by~\eqref{eq:exgmn}), we have 
\[
\EX{\vv^{(\vn)} (\vr) } = \bar \vv^{(\vn)}(\vr),
\]
where 
\begin{align*}
&\herm{(\bar\vv^{(\vn)}(\vr) )} \!\!\defeq \!\! \frac{1}{\kappa^{\onenorm{\vn}}} 
\bigg[
\bar G^{(\vn)}(\vr-\vr_\indstart), \ldots, \bar G^{(\vn)}(\vr-\vr_\indend{\S}),
\; \frac{1}{\kappa} \bar G^{(n_1+1,n_2,n_3)}(\vr - \vr_\indstart) ,\ldots, \frac{1}{\kappa} \bar G^{(n_1+1,n_2,n_3)}(\vr - \vr_\indend{\S}), \\
&\frac{1}{\kappa} \bar G^{(n_1,n_2+1,n_3)}(\vr - \vr_\indstart) ,\ldots, \frac{1}{\kappa} \bar G^{(n_1,n_2+1,n_3)}(\vr - \vr_\indend{\S}),
\frac{1}{\kappa} \bar G^{(n_1,n_2,n_3+1)}(\vr - \vr_\indstart) ,\ldots, \frac{1}{\kappa} \bar G^{(n_1,n_2,n_3+1)}(\vr - \vr_\indend{\S})
\bigg].
\end{align*}
Next, we decompose the derivatives of $Q(\vr)$ according to 
\begin{align}
\frac{1}{\kappa^{\onenorm{\vn}}} Q^{(\vn)}(\vr)
&=
\innerprod{\vu}{\herm{\mL}  \vv^{(\vn)}(\vr) } \nonumber \\
&= \innerprod{\vu}{  \herm{\bar \mL}  \bar \vv^{(\vn)}(\vr) }
+ \underbrace{\innerprod{\vu}{\herm{\mL} (\vv^{(\vn)}(\vr)  - \bar \vv^{(\vn)}(\vr)) } }_{I^{(\vn)}_1(\vr)}
+ \underbrace{\innerprod{\vu}{\herm{(\mL - \bar \mL)} \bar \vv^{(\vn)}(\vr) } }_{I^{(\vn)}_2(\vr)} \nonumber \\
&= \frac{1}{\kappa^{ \onenorm{\vn} }} \bar Q^{(\vn)}(\vr) + I_1^{(\vn)}(\vr) + I_2^{(\vn)}(\vr), \nonumber 
\end{align}
where $\bar \mL$ was defined in \eqref{eq:alph}. 
It follows that 
\begin{align}
\PR{\max_{\vr \in \Omega} \frac{1}{\kappa^{ \onenorm{\vn} }} \left| Q^{(\vn)}(\vr) - \bar Q^{(\vn)}(\vr)  \right| \geq \epsilon } 
&=
\PR{ \max_{\vr \in \Omega} \left|  I_1^{(\vn)}(\vr) + I_2^{(\vn)}(\vr) \right| \geq \epsilon } \nonumber \\
&\hspace{-4cm}\leq
\PR{ \max_{\vr \in \Omega} \left|  I_1^{(\vn)}(\vr) \right| \geq \epsilon/2 }
+
\PR{  \comp{\mc E}_\xi }
+
\PR{ \max_{\vr \in \Omega} \left| I_2^{(\vn)}(\vr) \right| \geq \epsilon/2 | \mc E_\xi } \label{eq:uppbpri1I2}\\
&\hspace{-4cm}\leq 4 \delta' = \frac{\delta}{44},  \label{eq:maxromgeep}
\end{align}
which concludes the proof of Lemma \ref{lem:diffqmbarqmongrid}. 
Here, \eqref{eq:uppbpri1I2} follows from the union bound and the fact that $\PR{A} =  \PR{A \cap \comp{B}} + \PR{A \cap B} \leq \PR{\comp{B}} + \PR{A | B}$ with 
$B = \mc E_\xi$ and $A = \left\{\max_{\vr \in \Omega} \left| I_2^{(\vn)}(\vr) \right| \geq \epsilon/2 \right\}$. Equation~\eqref{eq:maxromgeep} follows from 
\begin{align}
\PR{\max_{\vr \in \Omega}  |I^{(\vn)}_1(\vr)| \geq \epsilon/2} 
&\leq \deltad +\PR{\comp{ \mc E}_{1/4}},
\label{lem:uboundI1} \\
\PR{\max_{\vr \in \Omega}  |I^{(\vn)}_2(\vr)| \geq \epsilon/2 \Big| \mc E_\xi } 
&\leq \deltad,
\label{lem:uboundI2}
\end{align}
and Lemma \ref{lem:preptaubound} (note that $\xi \leq 1/4$ since $\epsilon/2 \leq 1$ and $c_6\leq 1/4$, thus, Lemma \ref{lem:preptaubound} yields $\PR{ \comp{\mc E}_\xi } \leq \deltad$ and $\PR{  \comp{\mc E}_{1/4} } \leq \deltad$). 
Inequalities~\eqref{lem:uboundI1} and \eqref{lem:uboundI2}, proven
below 
 ensure that the perturbations $I_1^{(\vn)}(\vr)$ and $I_2^{(\vn)}(\vr)$ are small on a set of (grid) points $\Omega$, with high probability.



\subsubsection*{Proof of \eqref{lem:uboundI1} 
}

Set $\Delta \vv^{(\vn)} \defeq \vv^{(\vn)} (\vr) - \bar \vv^{(\vn)} (\vr)$ and $\omega \defeq \frac{\epsilon}{20 \sqrt{\log(8|\Omega|/\deltad)}}$  for notational convenience. We have
\begin{align}
&\PR{\max_{\vr \in \Omega}  |I^{(\vn)}_1(\vr)| \geq \epsilon/2} 
%
=\PR{\max_{\vr \in \Omega} \left|\innerprod{\vu}{\herm{\mL} \Delta \vv^{(\vn)} }\right| \geq 5 \omega \cdot 2 \sqrt{\log(8|\Omega|/\deltad)} } \nonumber \\
&\leq \PR{\bigcup_{\vr \in \Omega } 
\left\{ \left|\innerprod{\vu}{\herm{\mL} \Delta \vv^{(\vn)} }\right| \geq \twonorm{\herm{\mL} \Delta \vv^{(\vn)}} 2 \sqrt{\log(8|\Omega|/\deltad)} \right\} 
\cup \left\{  \twonorm{\herm{\mL} \Delta \vv^{(\vn)}}\geq  5 \omega \right\}  } \nonumber \\
&\leq \PR{\bigcup_{\vr \in \Omega } 
\left\{ \left|\innerprod{\vu}{\herm{\mL} \Delta \vv^{(\vn)} }\right| \geq \twonorm{\herm{\mL} \Delta \vv^{(\vn)}} 2 \sqrt{\log(8|\Omega|/\deltad)} \right\} 
\cup \left\{  \twonorm{\Delta \vv^{(\vn)}}\geq 2\omega \right\} \cup \left\{ \norm[]{\mL} \geq 2.5 \right\}  }  \nonumber  \\
&\leq \PR{\norm[]{\mL} \geq 2.5 } 
+\sum_{\vr \in \Omega} \left( \PR{\left|\innerprod{\vu}{\herm{\mL} \Delta \vv^{(\vn)} }\right| \geq \twonorm{\herm{\mL} \Delta \vv^{(\vn)}} 2 \sqrt{\log(8|\Omega|/\deltad)} }
 +\PR{\twonorm{\Delta \vv^{(\vn)}}\geq  2\omega} \right)  \nonumber \\
&\leq \PR{\comp{ \mc E}_{1/4}} +  |\Omega|  4 e^{-  \log(8|\Omega|/\delta') }
 + \sum_{\vr \in \Omega}  \PR{\twonorm{\Delta \vv^{(\vn)}}\geq 2\omega} \label{eq:useHoeffanddf} \\
&\leq \PR{\comp{ \mc E}_{1/4}} + \frac{\deltad}{2} + \frac{\deltad}{2},  \label{eq:inposltdel}
\end{align}
where \eqref{eq:useHoeffanddf} follows from application of Hoeffding's inequality (stated below, cf.~Lemma \ref{thm:hoeff}) and from $\{\norm[]{\mL} \geq 2.5\} \subseteq \comp{ \mc E}_{1/4}$ according to \eqref{eq:normmLb}.  For \eqref{eq:inposltdel}, we used
\begin{align}
\PR{\twonorm{\Delta \vv^{(\vn)}}^2\geq  4\omega^2} 
\leq
\sum_k \PR{ [\Delta \vv^{(\vn)}]_k^2\geq  \omega^2/S} 
&=  
\sum_k \PR{ [\Delta \vv^{(\vn)}]_k\geq  
\frac{\epsilon}{20 \sqrt{\S \log(8|\Omega|/\deltad)}}
} \nonumber \\
&\leq   
\sum_k \PR{ [\Delta \vv^{(\vn)}]_k\geq  \frac{c_1}{\sqrt{\S \log(\L/\deltad)}} } \label{eq:choapino} \\
&\leq 
4 S \frac{c_3\deltad}{\L^{15} }
\leq  \frac{\deltad}{2 \tilde c \L^{12} } = \frac{\deltad}{2 |\Omega|} ,
\label{eq:byassagil}
\end{align}
where \eqref{eq:choapino} follows from 
$\frac{\epsilon}{20\sqrt{\log(8 |\Omega|/\deltad)}} 
=
\frac{\epsilon}{20\sqrt{\log(8 \tilde c \L^{12}/\deltad)}} 
\geq 
\frac{\epsilon}{20\sqrt{12\log( (8 \tilde c)^{1/12} \L/\deltad)}} 
\geq 
\frac{c_1}{\sqrt{\log(\L/\deltad)}} 
$, provided that $1 \geq (8 \tilde c)^{1/12}$, and 
$\frac{\epsilon}{20\sqrt{12}} \geq c_1$, where $c_1$ is the constant in \eqref{eq:cond1new}. 
Moreover, \eqref{eq:byassagil} follows from the assumption \eqref{eq:cond1new}, 
with $c_3\leq \frac{1}{8\tilde c}$. 

\begin{lemma}[Hoeffding's inequality]
Suppose the entries of $\vu \in \reals^{\S}$ are sampled independently from symmetric distributions on the complex unit disc. Then, for all $t\geq 0$, and for all $\vv \in \complexset^\S$
\[
\PR{ \left|\innerprod{\vu}{\vv}\right|  \geq \twonorm{\vv}  t } \leq 4 e^{- \frac{t^2}{4}}. 
\]
\label{thm:hoeff}
\end{lemma}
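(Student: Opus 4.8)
The plan is to deduce the complex tail bound from two one-dimensional sub-Gaussian estimates, one for $\Re{\innerprod{\vu}{\vv}}$ and one for $\Im{\innerprod{\vu}{\vv}}$, each obtained by a Chernoff argument. We may assume $\twonorm{\vv}>0$. Write $X \defeq \innerprod{\vu}{\vv} = \sum_{k=1}^\S u_k \conj{v_k}$ (with the opposite inner-product convention one simply replaces $u_k$ by $\conj{u_k}$, which is again symmetric on the unit disc, so nothing changes). Since $|X|^2 = (\Re{X})^2 + (\Im{X})^2 \leq 2\max\big((\Re{X})^2,(\Im{X})^2\big)$, the event $\{|X| \geq \twonorm{\vv}\,t\}$ is contained in $\{|\Re{X}| \geq \twonorm{\vv}\,t/\sqrt 2\} \cup \{|\Im{X}| \geq \twonorm{\vv}\,t/\sqrt 2\}$. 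Hence, by the union bound, it suffices to prove $\PR{|\Re{X}| \geq \twonorm{\vv}\,s} \leq 2e^{-s^2/2}$, and likewise for $\Im{X}$, and then set $s = t/\sqrt 2$.

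Next I would fix the real part and write $\Re{X} = \sum_{k=1}^\S Y_k$ with $Y_k \defeq \Re{u_k \conj{v_k}}$. The three facts I need are: the $Y_k$ are independent (the $u_k$ are); $|Y_k| \leq |u_k|\,|v_k| \leq |v_k|$; and each $Y_k$ is symmetric, i.e.\ $Y_k \stackrel{d}{=} -Y_k$, since $u_k \stackrel{d}{=} -u_k$ and multiplying by the fixed scalar $\conj{v_k}$ and taking the real part preserves this symmetry. From symmetry the odd part of $e^{\lambda Y_k}$ has zero mean for every real $\lambda$, so $\EX{e^{\lambda Y_k}} = \EX{\cosh(\lambda Y_k)}$; combining the elementary inequality $\cosh(x) \leq e^{x^2/2}$ with $|Y_k| \leq |v_k|$ gives the sub-Gaussian moment bound $\EX{e^{\lambda Y_k}} \leq e^{\lambda^2 |v_k|^2/2}$, and by independence $\EX{e^{\lambda \Re{X}}} \leq e^{\lambda^2 \twonorm{\vv}^2/2}$.

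A standard Chernoff bound, optimized at $\lambda = s/\twonorm{\vv}$, then yields $\PR{\Re{X} \geq \twonorm{\vv}\,s} \leq e^{-s^2/2}$; since $-\Re{X} = \sum_k(-Y_k)$ has the same law as $\Re{X}$ by symmetry and independence of the $Y_k$, the same bound applies to $-\Re{X}$, so $\PR{|\Re{X}| \geq \twonorm{\vv}\,s} \leq 2e^{-s^2/2}$. The identical argument controls the imaginary part. Substituting $s = t/\sqrt 2$ into the union bound of the first paragraph and using $(t/\sqrt 2)^2/2 = t^2/4$ gives $\PR{|X| \geq \twonorm{\vv}\,t} \leq 2e^{-t^2/4} + 2e^{-t^2/4} = 4e^{-t^2/4}$, as claimed.

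There is no serious obstacle here; the only point deserving care is the complex-to-real reduction, specifically the observation that symmetry of the (possibly genuinely complex) entries $u_k$ is exactly the hypothesis needed for the one-dimensional moment-generating-function estimate, so that no assumption on the joint law of the real and imaginary parts of $u_k$ — let alone their independence — is required. If ``symmetric distribution on the complex unit disc'' is read more strongly (e.g.\ rotational invariance), the argument is unchanged, as it uses only $u_k \stackrel{d}{=} -u_k$ and $|u_k| \leq 1$.
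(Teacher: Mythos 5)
Your proof is correct. Note that the paper does not actually prove this lemma --- it is stated and invoked as a known Hoeffding-type tail bound for sums of bounded symmetric random variables, with no internal argument to compare against --- so the only thing to check is correctness, and your argument is the standard and right one. The structure is exactly what is needed: split $\innerprod{\vu}{\vv}$ into real and imaginary parts, observe that symmetry of $u_k$ makes each $Y_k = \mathrm{Re}(u_k\conj{v_k})$ a bounded, symmetric, independent real random variable, control the moment generating function via $\EX{e^{\lambda Y_k}} = \EX{\cosh(\lambda Y_k)} \le e^{\lambda^2|v_k|^2/2}$, and then Chernoff. The bookkeeping of constants is also right: the factor $\sqrt 2$ from the inclusion $\{|X|\ge \twonorm{\vv}t\}\subseteq\{|\mathrm{Re}(X)|\ge \twonorm{\vv}t/\sqrt 2\}\cup\{|\mathrm{Im}(X)|\ge \twonorm{\vv}t/\sqrt 2\}$ yields $t^2/4$ in the exponent, and the four one-sided tail events produce the prefactor $4$. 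Your side remark about the meaning of ``symmetric'' is well taken: you use only $u_k \stackrel{d}{=} -u_k$ and $|u_k|\le 1$, which is implied by any sensible reading of ``symmetric distribution on the complex unit disc (or circle),'' and the conjugation under either inner-product convention preserves both properties, as you note. (One cosmetic point: the lemma statement writes $\vu\in\reals^{\S}$, which is evidently a typo for $\vu\in\complexset^{\S}$ given that the entries live on the complex unit disc; you correctly treat $\vu$ as complex throughout.)
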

\subsubsection*{Proof of \eqref{lem:uboundI2} 
}
By the union bound 
\begin{align}
\PR{\max_{\vr \in \Omega}  |I^{(\vn)}_2(\vr)| \geq \epsilon/2 \Big| \mc E_\xi } 
&\leq \sum_{\vr \in \Omega}
\PR{ \left|\innerprod{ \vu }{\herm{(\mL - \bar \mL)} \bar \vv^{(\vn)}(\vr) }\right| \geq \epsilon/2  \Big| \mc E_\xi } \nonumber \\
&\leq \sum_{\vr \in \Omega}  \PR{\left|\innerprod{\vu}{\herm{(\mL - \bar \mL)} \bar \vv^{(\vn)}(\vr) }\right| \geq  \twonorm{\herm{(\mL - \bar \mL)} \bar \vv^{(\vn)}(\vr) }  \frac{\epsilon c_6}{\xi}  }
\label{eq:useeq:ubltlvr}\\
&\leq |\Omega|  4 e^{- \frac{(\epsilon c_6/\xi)^2}{4}}
\label{eq:useHoeffanddf2} \\
&\leq \delta',  \label{eq:iqledeluc}
\end{align}
where \eqref{eq:useeq:ubltlvr} follows from \eqref{eq:ubltlvr} below,  \eqref{eq:useHoeffanddf2} follows by Hoeffding's inequality (Lemma \ref{thm:hoeff}), and \eqref{eq:iqledeluc} holds by definition of $\xi$ in \eqref{eq:choicexi}. 

To complete the proof, note that by \eqref{eq:normLmbL} we have $\norm[]{\mL - \bar \mL} \leq 2.14 \xi$ on $\mc E_\xi$. Thus, conditioned on $\mc E_\xi$,
\begin{align}
\twonorm{\herm{(\mL - \bar \mL)} \bar \vv^{(\vn)}(\vr) } 
\leq \norm[]{\mL - \bar \mL} \twonorm{ \bar \vv^{(\vn)}(\vr) } \leq 2.14 \xi \onenorm{ \bar \vv^{(\vn)}(\vr) }
\leq c_5 \xi
\leq \frac{\xi}{2c_6} ,
\label{eq:ubltlvr}
\end{align}
where we used $\twonorm{\cdot} \leq \onenorm{\cdot}$, and the third inequality follows from the fact that, for all $\vr$, 
\begin{align*}
\onenorm{ \bar \vv^{(\vn)}(\vr) }
&= \frac{1}{\kappa^{\onenorm{\vn}}} \sum_{k=1}^\S \left( 
\left|\bar G^{(\vn)}(\vr-\vr_k)\right|
+ \left| \frac{1}{\kappa} \bar G^{(n_1+1,n_2,n_3)}(\vr - \vr_k) \right|
+ \left|\frac{1}{\kappa} \bar G^{(n_1,n_2+1,n_3)}(\vr - \vr_k) \right| \right. \\
&\hspace{2cm}
\left. 
+ \left|\frac{1}{\kappa} \bar G^{(n_1,n_2 ,n_3+1)}(\vr - \vr_k) \right|
\right)
\leq \frac{c_5}{2.14}.
\end{align*}
Here, $c_5$ is a numerical constant;
the inequality can be shown analogously to corresponding bounds in \cite[Appendix C.2]{candes_towards_2014} pertaining to the 2D case. 
To prove those bounds we use the minimum separation condition and that the functions $\bar G^{(\vn)}(\vr - \vr_k)$ decay quickly around $\vr_k$. 
Finally, the last inequality in \eqref{eq:ubltlvr} follows from choosing $c_6$ sufficiently small.

\newcommand\x{\tau}
\newcommand\y{\nu}

\subsection{Step 2b: $Q(\vr)$ and $\bar Q(\vr)$ are close for all $\vr$}

We next use an $\epsilon$-net argument together with Lemma \ref{lem:diffqmbarqmongrid} to establish that $Q^{(\vn)}(\vr)$ is close to $\bar Q^{(\vn)}(\vr)$ with high probability uniformly for all $\vr \in [0,1]^3$. 
\begin{lemma}
With probability at least $1 - \frac{\delta}{2}$, 
\begin{align}
\max_{\vr \in [0,1]^3, \;\vn \colon \onenorm{\vn} \leq 2}
\frac{1}{\kappa^{ \onenorm{\vn} }} \left| Q^{(\vn)}(\vr) - \bar Q^{(\vn)}(\vr) \right| 
\leq \epsilon.
\label{eq:ledffbaere}
\end{align}
\label{lem:lemdffqbarqevry}
\end{lemma}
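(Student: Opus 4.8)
The plan is to establish \eqref{eq:ledffbaere} by combining the grid estimate in Lemma~\ref{lem:diffqmbarqmongrid} with a Bernstein-type polynomial inequality to pass from the finite net $\Omega$ to all of $[0,1]^3$. The key point is that $Q^{(\vn)}(\vr) - \bar Q^{(\vn)}(\vr)$ is itself a trigonometric polynomial in $\vr$ of bounded degree, so its values cannot change too quickly: a bound on a sufficiently dense grid propagates to a bound everywhere, at the cost of a mild loss that is absorbed by the choice of grid spacing in \eqref{eq:gridmaxdist}.

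First I would record that, since every $G^{(\vn)}_{\vm}(\vr,\vr_k) = \innerprod{\mA\vg_{\vm}(\vr_k)}{\mA\vf(\vr)}$ is a 3D trigonometric polynomial in $\vr$ of degree at most $\N$ in each variable (each $\vf(\vr)$ has frequencies in $\{-\N,\dots,\N\}^3$, and taking partial derivatives does not increase the degree), the error function $E^{(\vn)}(\vr) \defeq \frac{1}{\kappa^{\onenorm{\vn}}}(Q^{(\vn)}(\vr) - \bar Q^{(\vn)}(\vr))$ is, for each fixed $\vn$, a trigonometric polynomial in $\vr \in [0,1]^3$ of degree at most $\N$ per coordinate. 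By the multivariate Bernstein inequality for trigonometric polynomials (applied coordinate-wise), $\norm[\infty]{\nabla E^{(\vn)}} \leq 2\pi \N \d \cdot \norm[\infty]{E^{(\vn)}}$, where $\d=3$; more to the point, I would instead use the standard ``sampling'' form: if a degree-$\N$ trigonometric polynomial $p$ on $[0,1]^\d$ satisfies $|p(\vr_g)| \leq \epsilon'$ on a grid $\Omega$ with mesh size at most $\frac{\eta}{\N}$ for small enough absolute constant $\eta$, then $\norm[\infty]{p} \leq C\epsilon'$ with $C$ an absolute constant (this is the argument used in \cite[Lem.~6.7]{candes_towards_2014} / \cite{tang_compressed_2013} and in \cite{heckel_super-resolution_2015}). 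The grid choice \eqref{eq:gridmaxdist} with mesh $\frac{\epsilon}{3\tilde c \L^4}$ is far finer than $1/\N$, which is precisely what gives headroom to convert the grid bound $\epsilon$ (from Lemma~\ref{lem:diffqmbarqmongrid}, after rescaling $\epsilon$ there by a constant) into a uniform bound $\epsilon$ here.

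Concretely, the steps in order: (i) fix $\vn$ with $\onenorm{\vn}\leq 2$ and observe $E^{(\vn)}$ is a trig polynomial of degree $\le \N$ per coordinate; (ii) bound $\norm[\infty]{\nabla E^{(\vn)}}$ crudely by $c\N \L^{3} \max(\norm[\infty]{\bar\vv^{(\vn)}}, \norm[]{\mL})$ or more simply by a polynomial-in-$\L$ factor — the point being it is at most $\tilde c \L^4$ times the uniform norm of the relevant coefficient vectors, which are $O(1)$; (iii) for an arbitrary $\vr$, pick the nearest grid point $\vr_g \in \Omega$, so $\infdist{\vr-\vr_g} \leq \frac{\epsilon}{3\tilde c\L^4}$, and write $|E^{(\vn)}(\vr)| \leq |E^{(\vn)}(\vr_g)| + \norm[\infty]{\nabla E^{(\vn)}} \cdot \d \cdot \infdist{\vr - \vr_g} \leq |E^{(\vn)}(\vr_g)| + \epsilon/\text{(const)}$; (iv) apply Lemma~\ref{lem:diffqmbarqmongrid} (with $\epsilon$ there chosen as a suitable constant fraction of the target $\epsilon$, say $\epsilon/2$) to bound $\max_{\vr_g \in \Omega} |E^{(\vn)}(\vr_g)|$ with probability at least $1 - \frac{\delta}{44}$; (v) take a union bound over the finitely many $\vn \in \{0,1,2\}^3$ with $\onenorm{\vn}\leq 2$ — there are at most $\binom{3+2}{2} \cdot$(const) $\le 10$ such $\vn$, so the total failure probability is at most $10 \cdot \frac{\delta}{44} < \frac{\delta}{2}$, yielding \eqref{eq:ledffbaere}.

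The main obstacle is step (ii): getting an honest polynomial-in-$\L$ bound on the derivative (equivalently, the ``Lipschitz constant'') of $E^{(\vn)}$ that is uniform in the realization of $\mA$. One has to control $\norm[\infty]{\nabla Q^{(\vn)}}$, which involves the derivatives of the random functions $G^{(\vn')}_{\vm}(\vr,\vr_k)$ with $\onenorm{\vn'}\leq 3$; this is where one needs that these quantities are themselves bounded (deterministically, up to the isotropy/incoherence constants, or on the high-probability event) — precisely the role of condition \eqref{eq:GrmGrgA} and the bounds in \eqref{eq:cond1new} extended to $\onenorm{\vm+\vn}\le 4$, together with Bernstein's polynomial inequality applied once more to pass from function bounds to derivative bounds. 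The bookkeeping — tracking that the grid density $\L^4$ in \eqref{eq:gridmaxdist} dominates this Lipschitz constant with room to spare for the constant $\epsilon = 0.0005$ — is the only delicate part; everything else is a routine union bound. I would handle it exactly as in \cite[Sec.~6]{candes_towards_2014} and \cite[Lem.~5, 6]{heckel_super-resolution_2015}, citing those arguments for the polynomial-inequality mechanics rather than reproducing them.
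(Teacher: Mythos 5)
Your proposal follows essentially the same route as the paper's proof of Lemma~\ref{lem:lemdffqbarqevry}: a union bound over the finitely many non-negative integer vectors $\vn$ with $\onenorm{\vn}\leq 2$, the grid bound from Lemma~\ref{lem:diffqmbarqmongrid}, Bernstein's polynomial inequality (Proposition~\ref{prop:bernstein}) to pass from the grid $\Omega$ to all of $[0,1]^3$, and---as you correctly identify as the crux---an a priori polynomial-in-$\L$ bound on the sup-norm of $Q^{(\vn)}$ (hence, via Bernstein again, on its gradient), which is exactly what assumption~\eqref{eq:GrmGrgA} delivers in Section~\ref{sec:techres1}. The paper sidesteps the potential derivative-order escalation you gesture at by bounding $|Q^{(\vn)}(\vr)-Q^{(\vn)}(\vr_g)|$ and $|\bar Q^{(\vn)}(\vr)-\bar Q^{(\vn)}(\vr_g)|$ separately in~\eqref{eq:Qmqgrid}--\eqref{eq:barQrmbarQrg}, rather than working with the difference directly; your framing amounts to the same thing once $\nabla Q^{(\vn)}$ is controlled this way.

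One remark: you invoke two distinct grid-to-uniform mechanisms---the self-improving ``sampling'' form (bound on a $\sim 1/\N$-mesh grid implies a uniform bound up to an absolute constant) and the cruder fixed Lipschitz bound of order $\L^{4}$. The paper uses only the latter, which is why it needs the very fine grid~\eqref{eq:gridmaxdist} with $\sim\L^{12}$ points and the extra assumption~\eqref{eq:GrmGrgA}. The self-improving form, applied directly to the trigonometric polynomial $Q^{(\vn)}-\bar Q^{(\vn)}$ of degree $\leq \N$ per coordinate, would in fact suffice with a much coarser grid of $\sim\L^{3}$ points and would make~\eqref{eq:GrmGrgA} unnecessary in this lemma; but you do not pursue that observation to its conclusion and instead revert to the paper's argument. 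So this is a missed potential simplification rather than a gap, and your proof as written is correct in spirit.
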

\begin{proof}
First, we use Lemma \ref{lem:diffqmbarqmongrid} to show that 
$\left| Q^{(\vn)}(\vr_g) - \bar Q^{(\vn)}(\vr_g) \right|$ is small for all $\vr_g \in \Omega$. 
Using the union bound over all $11$ vectors $\vn$ with non-negative integer entries obeying $\onenorm{\vn} \leq 2$, it  follows from Lemma \ref{lem:diffqmbarqmongrid}, that
\begin{align}
\left\{
\max_{\vr_g \in \Omega, \vn\colon  \onenorm{\vn} \leq 2}  \frac{1}{\kappa^{\onenorm{\vn} }} \left| Q^{(\vn)}(\vr_g) - \bar Q^{(\vn)}(\vr_g) \right| 
\leq \frac{\epsilon}{3}
\right\} 
\label{eq:QrmgbarQrg}
\end{align}
holds with probability at least $1- \frac{\delta}{4}$. 

We will show in Section \ref{sec:techres2} below that \eqref{eq:QrmgbarQrg} together with 
\begin{align}
\left\{
\max_{\vr \in [0,1]^3,\; \vn\colon \onenorm{\vn} \leq 2}\frac{1}{\kappa^{\onenorm{\vn}} } \left | Q^{(\vn)}(\vr) \right| \leq  \frac{\tilde c}{3} \L^{3}  
\right\}
\label{eq:QrmQrg}
\end{align}
implies \eqref{eq:ledffbaere}. 
In Section \ref{sec:techres1} below we show that \eqref{eq:QrmQrg} holds with probability at least $1-\frac{\delta}{4}$. 
By the union bound, the events in \eqref{eq:QrmgbarQrg} and \eqref{eq:QrmQrg} hold simultaneously with probability at least $1-\frac{\delta}{2}$, which concludes the proof.

\subsubsection{\label{sec:techres1} Proof of the fact that \eqref{eq:QrmQrg} holds with probability at least $1-\frac{\delta}{4}$:} 

We start by upper-bounding $|Q^{(\vn)}(\vr)|$. By \eqref{eq:Qmninnprodform},
\begin{align}
\frac{1}{\kappa^{ \onenorm{\vn} }} \left | Q^{(\vn)}(\vr) \right|   
&= \left| \innerprod{ \mL \vu }{ \vv^{(\vn)}(\vr)}  \right| \nonumber \\
&\leq  \norm[]{{\mL}} \twonorm{ \vu }  \twonorm{ \vv^{(\vn)} (\vr)} \nonumber \\
%
%
&\leq \norm[]{{\mL}} \sqrt{\S}   \sqrt{4\S}\infnorm{ \vv^{(\vn)} (\vr)} \nonumber \\
&= \norm[]{{\mL}} \sqrt{4} \, \S \max_{j,\, \vm \in \{ (0,0,0), (1,0,0), (0,1,0), (0,0,1) \}} 
\frac{1}{\kappa^{ \onenorm{\vn + \vm}}} \left | G^{(\vn)}_{\vm}(\vr,\vr_j) \right| \nonumber \\
&\leq \norm[]{{\mL}} \sqrt{4} \, \S 
\frac{\tilde c}{S} \L^{3}
\label{eq:ubqmnarradf} \\
&\leq 2.5 \sqrt{4} \, \tilde c \L^{3}
\label{eq:ubqmnarr}
\end{align}
where we used $\twonorm{\vu} = \sqrt{\S}$, since the entries of $\vu$ are on the unit disc; 
\eqref{eq:ubqmnarradf} holds with probability at least $1-\delta/8$ according to \eqref{eq:GrmGrg} below, and \eqref{eq:ubqmnarr} holds with probability at least $1-\delta/8$ 
due to  
$
\PR{\norm[]{{\mL}} \geq 2.5} \leq \PR{\comp{\mathcal E}_{1/4} } \leq \frac{\delta}{8}
$ (by \eqref{eq:normmLb} and application of Lemma \ref{lem:preptaubound}). This concludes the proof of \eqref{eq:QrmQrg} holding with probability at least $1-\frac{\delta}{4}$.

It is left to prove that
\begin{align}
\left\{
\max_{\vr,\vr_j \in [0,1]^3, \vn,\vm\colon \onenorm{\vn+\vm} \leq 3}
\frac{1}{\kappa^{\onenorm{\vn+\vm}} } \left | G^{(\vn)}_{\vm}(\vr,\vr_j) \right| 
\leq  \frac{\tilde c}{S} \L^{3}  
\right\}
\label{eq:GrmGrg}
\end{align}
holds with probability at least $1-\delta/8$. 
To this end, note that $G^{(\vn)}_{\vm}(\vr,\vr_j)$ is a trigonometric polynomial in $\vr,\vr_j$. Thus, by Bernstein's inequality, we have
\[
\max_{\vr, \vr_j}
\frac{1}{\kappa^{\onenorm{\vn+\vm}} } \left | G^{(\vn)}_{\vm}(\vr,\vr_j) \right| 
\leq 
\frac{(2\pi \N)^{\onenorm{\vn+\vm}}}{\kappa^{\onenorm{\vn+\vm}} } 
\max_{\vr, \vr_j}
\left | G^{(\vect{0})}_{\vect{0}}(\vr,\vr_j) \right| 
\leq c' \max_{\vr, \vr_j}
\left | G^{(\vect{0})}_{\vect{0}}(\vr,\vr_j) \right| 
\leq 
\frac{\tilde c}{S} \L^{3}
\]
where the last inequality holds with probability at least $1-\delta/8$, by assumption \eqref{eq:GrmGrgA}.

\subsubsection{\label{sec:techres2} Proof of the fact that \eqref{eq:QrmgbarQrg} and \eqref{eq:QrmQrg} imply \eqref{eq:ledffbaere}:}
Consider a point $\vr \in [0,1]^3$ and let $\vr_g$ be the point in $\Omega$ closest to $\vr$ in $\ell_\infty$-distance. By the triangle inequality,
\begin{align}
&\frac{1}{\kappa^{ \onenorm{\vn} }} \left| Q^{(\vn)}(\vr) - \bar Q^{(\vn)}(\vr) \right| 
\leq \nonumber \\
&\hspace{0.5cm}\frac{1}{\kappa^{ \onenorm{\vn} }} \left[ 
\left| Q^{(\vn)}(\vr) - Q^{(\vn)}(\vr_g) \right| 
+ \left| Q^{(\vn)}(\vr_g) - \bar Q^{(\vn)}(\vr_g) \right| 
+ \left| \bar Q^{(\vn)}(\vr_g) - \bar Q^{(\vn)}(\vr) \right|
\right].
\label{eq:Qmqgrid}
\end{align}
We next upper-bound the terms in \eqref{eq:Qmqgrid} separately. With a slight abuse of notation, we write $Q^{(\vn)}(\beta, \tau,\nu) = Q^{(\vn)}(\transp{[\beta,\tau,\nu]}) \allowbreak = Q^{(\vn)}(\vr)$. The first absolute value in \eqref{eq:Qmqgrid} can be upper-bounded according to
\begin{align}
\left| Q^{(\vn)}(\vr) - Q^{(\vn)}(\vr_g) \right| 
&\leq \left| Q^{(\vn)}(\beta,\tau,\nu) - Q^{(\vn)}(\beta,\tau,\nu_g) \right| 
+ \left| Q^{(\vn)}(\beta,\tau,\nu_g)    - Q^{(\vn)}(\beta,\tau_g,\nu_g) \right| 
\nonumber \\
&\quad+ \left| Q^{(\vn)}(\beta,\tau_g,\nu_g) -Q^{(\vn)}(\beta_g,\tau_g,\nu_g) 
\right| \nonumber \\
&\leq |\beta - \beta_g| \sup_{z} \left|Q^{(n_1+1,n_2,n_3)}(z,\tau,\nu)\right|
+ |\tau - \tau_g| \sup_{z} \left|Q^{(n_1,n_2+1,n_3)}(\beta,z,\nu)\right|
\nonumber \\
&\quad+ |\nu - \nu_g| \sup_{z} \left|Q^{(n_1,n_2,n_3+1)}(\beta,z,\nu)\right|
\nonumber \\
&\leq 
|\beta- \beta_g| 2 \pi \N \sup_{z}  \left|Q^{(\vn)}(z,\tau,\nu)\right|
+|\tau- \tau_g| 2 \pi \N \sup_{z}  \left|Q^{(\vn)}(\beta,z,\nu)\right| 
\nonumber \\
&\quad+|\nu- \nu_g| 2 \pi \N \sup_{z}  \left|Q^{(\vn)}(\beta,\tau,z)\right|, 
\label{eq:ubqrmqrg}
\end{align}
where \eqref{eq:ubqrmqrg} follows from Bernstein's polynomial inequality, stated below (recall that $Q^{(\vn)}(\beta,\tau,\nu)$ is a trigonometric polynomial of degree $\N$ in $\beta,\tau$ and $\nu$). 
\begin{proposition}[Bernstein's polynomial inequality {\cite[Cor.~8]{harris_bernstein_1996}
}]
Let $p(\theta)$ be a trigonometric polynomial of degree $\N$ with complex coefficients $p_k$, i.e., $p(\theta) = \sum_{k=-\N}^{\N}  p_k e^{i2\pi \theta k}$ where $p_k$ are complex coefficients. 
Then 
\[
\sup_{\theta} \left| \frac{d}{d\theta} p(\theta)  \right|  \leq 2 \pi \N \sup_{\theta} |p(\theta)|.
\]
\label{prop:bernstein}
\end{proposition}
%
%
On the event \eqref{eq:QrmQrg} we have, by \eqref{eq:ubqrmqrg}, that
\begin{align}
\frac{1}{\kappa^{ \onenorm{\vn} }} \left| Q^{(\vn)}(\vr) - Q^{(\vn)}(\vr_g) \right| 
\leq  \frac{\tilde c}{3} \L^{4} 
( |\beta + \beta_g| + |\tau- \tau_g| + |\nu - \nu_g|)
\leq
\tilde c  \L^{4}
\infdist{\vr - \vr_g}
\leq \frac{\epsilon}{3}, 
\label{eq:diffQrQrg}
\end{align}
where the last inequality follows from \eqref{eq:gridmaxdist}. 

We next upper-bound the third absolute value in \eqref{eq:Qmqgrid}. 
Using steps analogous to those leading to \eqref{eq:diffQrQrg}, 
we obtain  
\begin{align}
\frac{1}{\kappa^{m+n}} \left| \bar Q^{(m,n)}(\vr_g) - \bar Q^{(m,n)}(\vr) \right| 
\leq \frac{\epsilon}{3}. 
\label{eq:barQrmbarQrg}
\end{align}
%
Substituting \eqref{eq:QrmgbarQrg}, \eqref{eq:diffQrQrg}, and \eqref{eq:barQrmbarQrg} into \eqref{eq:Qmqgrid} yields that
\[
\frac{1}{\kappa^{ \onenorm{\vn} }} \left| Q^{(\vn)}(\vr) - \bar Q^{(\vn)}(\vr) \right|  \leq \epsilon,  \text{ for all } \vn\colon \onenorm{\vn} \leq 2 \text{ and for all } \vr \in [0,1]^2.
\]
This concludes the proof. 
\end{proof}


\subsection{Step 2c: $\abs{Q(\vr)} < 1$ for all $\vr \in [0,1]^3 \setminus \T$}

\begin{lemma}
With probability at least $1 - \frac{\delta}{2}$ the following statements hold simultaneously:
\begin{enumerate}[i.]
\item \label{it:stat1} For all $\vr$, that satisfy $\min_{\vr_j \in \T} \infdist{\vr - \vr_j } \geq \cfar/\N$ 
we have that 
$
\abs{Q(\vr)} < 0.9995.
$
\item \label{it:stat2} For all $\vr \notin \T$ that satisfy $0 < \infdist{\vr - \vr_j} \leq \cfar/\N$ for some $\vr_j \in \T$, we have that $\abs{Q(\vr)} < 1$. 
\end{enumerate}
\end{lemma}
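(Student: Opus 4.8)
The plan is to deduce the final lemma from the two facts established just before it: the uniform closeness of $Q$ (and its derivatives up to order 2) to the deterministic certificate $\bar Q$ on all of $[0,1]^3$ (Lemma~\ref{lem:lemdffqbarqevry}), together with the boundedness properties of $\bar Q$ near and far from the interpolation nodes (Lemma~\ref{lem:farpoints}). Both of these hold on an event of probability at least $1-\delta/2$, so conditioning on their intersection (which still has probability at least $1-\delta/2$ after adjusting constants, or one may simply quote the $\delta/2$ figure from Lemma~\ref{lem:lemdffqbarqevry} since that is the binding event) we argue deterministically.

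For statement (i), I would take any $\vr$ with $\min_{\vr_j\in\T}\infdist{\vr-\vr_j}\ge \cfar/\N$. By the second part of Lemma~\ref{lem:farpoints}, $\abs{\bar Q(\vr)}<0.99$. By Lemma~\ref{lem:lemdffqbarqevry} with $\vn=(0,0,0)$ we have $\abs{Q(\vr)-\bar Q(\vr)}\le \epsilon = 0.0005$. The triangle inequality gives $\abs{Q(\vr)}\le \abs{\bar Q(\vr)}+\abs{Q(\vr)-\bar Q(\vr)} < 0.99 + 0.0005 = 0.9905 < 0.9995$, which is the claimed bound with room to spare.

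For statement (ii), the region $0<\infdist{\vr-\vr_j}\le\cfar/\N$ near a node: here one cannot just compare values, because $\bar Q$ itself approaches $1$ in magnitude as $\vr\to\vr_j$. Instead I would use the standard ``second-order'' argument of Cand\`es--Fernandez-Granda. Without loss of generality consider $u_j=1$ (the symmetric distribution of the signs and rotational symmetry reduce to this case; more precisely one works with $\Re(\conj{u_j}Q)$). One shows that on the close-in ball the real part $\Re(\conj{u_j}Q(\vr))$ is strictly less than $1$ by examining its Hessian: since $Q(\vr_j)=u_j$, $\nabla Q(\vr_j)=\vect 0$ (by the construction in Step~1, equation~\eqref{eq:interpcondQ}), and the second derivatives of $Q$ are uniformly within $\kappa^2\epsilon$ of those of $\bar Q$ by Lemma~\ref{lem:lemdffqbarqevry} with $\onenorm{\vn}=2$, the Hessian of $\Re(\conj{u_j}Q)$ at $\vr_j$ is negative definite with eigenvalues bounded away from $0$ (inherited from the corresponding negative-definiteness of $\bar Q$'s Hessian, established in the proof of Lemma~\ref{lem:farpoints}(i) and perturbed by at most $O(\kappa^2\epsilon)$). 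A Taylor expansion with the uniform third-derivative bound on $Q$ — which is controlled by Bernstein's inequality (Proposition~\ref{prop:bernstein}) together with the uniform magnitude bound $\abs{Q(\vr)}\le \tilde c \L^3$ from event~\eqref{eq:QrmQrg}, exactly as in Section~\ref{sec:techres1} — then shows $\Re(\conj{u_j}Q(\vr))<1$ strictly, and separately $\abs{Q(\vr)}^2 = \Re(\conj{u_j}Q(\vr))^2 + \Im(\conj{u_j}Q(\vr))^2 < 1$ by also controlling $\Im(\conj{u_j}Q(\vr))$ via its gradient vanishing at $\vr_j$ and its Hessian being small. These are the same estimates that prove Lemma~\ref{lem:farpoints}(i) for $\bar Q$, carried over to $Q$ through the $\epsilon$-closeness of all derivatives up to order two plus the Bernstein-type third-derivative bound.

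The main obstacle is statement (ii): the near-node regime genuinely requires the derivative information, not just a value comparison, so one must propagate the $\epsilon$-closeness through the Hessian and control the cubic Taylor remainder uniformly. The delicate point is checking that the perturbations ($O(\kappa^2\epsilon)$ in the Hessian, and the $O(\N^3\cdot \cfar^3/\N^3) = O(\cfar^3)$-type remainder after normalizing by $\kappa^2\sim \N^2$) are small enough relative to the fixed negative curvature of $\bar Q$ near its peaks — this is why $\epsilon$ was chosen as small as $0.0005$ and $\cfar$ as small as $0.18$ in the setup. Everything else is a routine triangle-inequality and Taylor-expansion bookkeeping exercise mirroring Appendix~C.2 of \cite{candes_towards_2014}.
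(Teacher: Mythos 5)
Your treatment of statement (i) is correct and matches the paper: combine Lemma~\ref{lem:farpoints}(ii) ($|\bar Q(\vr)| < 0.99$ in the far regime) with the $\epsilon$-closeness of Lemma~\ref{lem:lemdffqbarqevry} at $\vn = (0,0,0)$, and apply the triangle inequality.

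For statement (ii), however, there are two issues. First, the paper's route is not a Taylor-remainder argument. It shows that the Hessian of $\tilde Q(\vr) = |Q(\vr)|$ is negative definite \emph{for every} $\vr$ in the near-node ball $|\vr - \vr_j| \leq \cfar/\N$, by Sylvester's criterion applied to the explicit entries in \eqref{eq:lhsoftildq20}--\eqref{eq:tilQ11}, using the numerical bounds of Proposition~\ref{propcanc2} (which hold uniformly on the ball) together with the $\epsilon$-closeness of derivatives up to order two. Combined with $|Q(\vr_j)| = 1$ and $\nabla|Q|(\vr_j) = 0$ (which follow from \eqref{eq:interpcondQ}), concavity of $|Q|$ on the ball then gives $|Q(\vr)| < 1$ for $\vr \neq \vr_j$ with no need for third derivatives at all. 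Your proposal instead controls the Hessian only at the center and controls the cubic Taylor remainder; that is a viable alternative strategy, but it is a genuinely different one, and it shifts the burden onto a uniform third-derivative bound that the paper never needs.

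Second, and more importantly, the specific source you cite for that third-derivative bound does not give anything strong enough. Applying Bernstein's inequality to the magnitude bound $|Q^{(\vn)}(\vr)/\kappa^{\onenorm{\vn}}| \leq \tfrac{\tilde c}{3}\L^3$ from event~\eqref{eq:QrmQrg} yields a bound on the order of $\N\L^3$ after normalization, so the Taylor remainder on the ball of radius $\cfar/\N$ is of order $\cfar^3 \L^3$, which swamps the $\kappa^2|\vr|^2 \sim \cfar^2$ curvature term by a factor of $\L^3$; the Taylor argument therefore fails as stated. What you actually need is the conclusion of Lemma~\ref{lem:lemdffqbarqevry} itself: it gives $\frac{1}{\kappa^{\onenorm{\vn}}}|Q^{(\vn)}(\vr)| \leq \frac{1}{\kappa^{\onenorm{\vn}}}|\bar Q^{(\vn)}(\vr)| + \epsilon \leq C$ uniformly for all $\vr$ and all $\onenorm{\vn} \leq 2$ (a \emph{constant}, not $\L^3$), after which a single application of Bernstein's inequality yields $\sup_{\vr}|Q^{(\vn)}(\vr)| \leq c' \kappa^3$ for $\onenorm{\vn} = 3$, and only then does the cubic remainder become $O(\cfar^3)$ versus a curvature term $O(\cfar^2)$, closing the argument for small $\cfar$. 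With this correction your Taylor variant would work, but as written the step is a genuine gap.
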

\begin{proof}
By Lemma \ref{lem:lemdffqbarqevry}, (recall that $\epsilon =0.0005$), with probability $\geq 1- \frac{\delta}{2}$, 
\begin{align}
\frac{1}{\kappa^{ \onenorm{ \vn } }} \left| Q^{(\vn)}(\vr) - \bar Q^{(\vn)}(\vr) \right| 
\leq 0.0005,
\label{eq:conddiffQbarQinfipr}
\end{align}
 for all nonnegative integer vectors $\vn\colon \onenorm{\vn} \leq 2$, and for all $\vr \in [0,1]^3$. 
Statement \ref{it:stat1} follows directly by combining \eqref{eq:conddiffQbarQinfipr} with Lemma \ref{lem:farpoints} via the triangle inequality. 
 

We next show that Statement \ref{it:stat2} follows from \eqref{eq:conddiffQbarQinfipr} and certain properties of $\bar Q$. 
We assume without loss of generality that $\vect{0} \in \T$, and that $Q(\vect{0}) = 1$, and consider a vector $\vr$ with $|\vr| \leq \cfar/\N$. 
Statement \ref{it:stat2} now follows by showing that the Hessian matrix $\tilde \mH$ of $\tilde Q(\vr) \defeq |Q(\vr)|$ is negative definite. 
By Sylvester's criterion \cite[Thm.~7.2.5]{horn_matrix_2012}, $\tilde \mH$ is negative definite if the leading principal minors denoted by $\tilde \mH_1,\tilde \mH_2$, and $\tilde \mH_3 = \tilde \mH$, have alternating sign, which is implied by
\begin{align}
\det(\tilde \mH_1)&=\tilde Q^{(2,0,0)}(\vr) < 0  \label{eq:minor1} \\
\det(\tilde \mH_2)&=\tilde Q^{(2,0,0)}(\vr) \tilde Q^{(0,2,0)}(\vr)  - \tilde Q^{(1,1,0)}(\vr) \tilde Q^{(1,1,0)}(\vr)
  > 0   \label{eq:minor2}\\
\det(\tilde \mH) &=\tilde Q^{(2,0,0)}(\vr) \tilde Q^{(0,2,0)}(\vr) \tilde Q^{(0,0,2)}(\vr)
+  2 \tilde Q^{(1,1,0)}(\vr) \tilde Q^{(0,1,1)}(\vr) \tilde Q^{(1,0,1)}(\vr) \nonumber \\
&-  \tilde Q^{(1,0,1)}(\vr) \tilde Q^{(0,2,0)}(\vr) \tilde Q^{(1,0,1)}(\vr)
-  \tilde Q^{(1,1,0)}(\vr) \tilde Q^{(1,1,0)}(\vr) \tilde Q^{(0,0,2)}(\vr) \nonumber \\
&-  \tilde Q^{(2,0,0)}(\vr) \tilde Q^{(0,1,1)}(\vr) \tilde Q^{(0,1,1)}(\vr)
 < 0.  \label{eq:minor3}  
\end{align}
The inequalities \eqref{eq:minor1}, \eqref{eq:minor2}, and \eqref{eq:minor3}, follow directly from 
\begin{align}
 \tilde Q^{(2,0,0)},  \tilde Q^{(0,2,0)},  \tilde Q^{(0,0,2)} \leq -0.842 \kappa^2,
\quad  |\tilde Q^{(1,1,0)}|,  |\tilde Q^{(1,0,1)}|,  |\tilde Q^{(0,1,1)}| \leq 0.2113 \kappa^2. 
\label{eq:tilQbounds}
\end{align}
To prove \eqref{eq:tilQbounds}, we define $Q_R^{(\vn)} \defeq \frac{1}{\kappa^{ \onenorm{\vn} }} \mathrm{Re}(Q^{(\vn)})$, $Q_I^{(\vn)} \defeq \frac{1}{\kappa^{ \onenorm{\vn} }} \mathrm{Im}(Q^{(\vn)})$ and note that 
\[
\frac{1}{\kappa}\tilde Q^{(1,0,0)} = \frac{Q_R^{(1,0,0)}Q_R + Q_I^{(1,0,0)}Q_I  }{|Q|}. 
\]
After simplification, we obtain
\begin{align}
\frac{1}{\kappa^2}
\tilde Q^{(2,0,0)} 
%
&=
\left(1-\frac{Q_R^2}{|Q|^2}\right) \frac{{Q_R^{(1,0,0)}}^2}{|Q|} -\frac{2Q_R Q_R^{(1,0,0)}  Q_I Q_I^{(1,0,0)} + Q_I^2 {Q_I^{(1,0,0)}}^2  }{|Q|^3} \nonumber \\
 &\hspace{0.3cm}+ \frac{{Q_I^{(1,0,0)}}^2   + Q_I Q_I^{(2,0,0)} }{|Q|}+ \frac{Q_R}{|Q|} Q_R^{(2,0,0)},
\label{eq:lhsoftildq20}
\end{align}
and 
\begin{align}
\frac{1}{\kappa^2}
\tilde Q^{(1,1,0)}
%
&= Q_R^{(1,1,0)} \frac{Q_R}{|Q|} + \frac{Q_R^{(1,0,0)}Q_R^{(0,1,0)}}{|Q|} \left(1- \frac{Q_R^2}{|Q|^2} \right) 
+ \frac{ Q_I^{(1,1,0)}Q_I + Q_I^{(1,0,0)}Q_I^{(0,1,0)} }{|Q|} \nonumber \\
&- \frac{Q_R^{(0,1,0)}Q_R Q_I^{(1,0,0)}Q_I +  Q_I^{(0,1,0)}Q_I (Q_R^{(1,0,0)}Q_R + Q_I^{(1,0,0)}Q_I)  }{|Q|^3}. \label{eq:tilQ11}
\end{align}
The upper bounds in \eqref{eq:tilQbounds} now follow by applying the following inequalities with $\epsilon = 0.0005$ to \eqref{eq:lhsoftildq20} and \eqref{eq:tilQ11}:
\begin{align*}
 &Q_R(\vr) \leq \abs{\bar Q(\vr)} + \epsilon \leq 1+\epsilon,   \quad
  Q_R(\vr) \geq \abs{\bar Q(\vr)} - \epsilon \geq 0.9298 - \epsilon, \quad  
\abs{Q_I(\vr)} \leq \bar Q_I(\vr) + \epsilon \leq 0.0702 \\
&|Q^{(1,0,0)}_R(\vr)| \leq 
 |\bar Q_R^{(1,0,0)}(\vr)| + \epsilon \leq 0.3756 +\epsilon,\quad 
|Q^{(1,0,0)}_I(\vr)| \leq 
|\bar Q^{(1,0,0)}_I(\vr)| + \epsilon \leq 0.0541 +\epsilon. \\
&Q_R^{(2,0,0)}(\vr)  \leq  
\bar Q_R^{(2,0,0)}(\vr) + \epsilon \leq -0.8572 + \epsilon, \quad
\abs{Q_I^{(2,0,0)}(\vr)}  \leq  
\abs{\bar Q^{(2,0,0)}(\vr) } + \epsilon \leq -0.1005 +\epsilon \\
&|Q_R^{(1,1,0)}| \leq 
|\bar Q_R^{(1,1,0)}(\vr)| + \epsilon \leq 0.1967 + \epsilon, \quad 
|Q_I^{(1,1,0)}| \leq 
|\bar Q_I^{(1,1,0)}(\vr)| + \epsilon \leq 0.0921 + \epsilon.
\end{align*}
Here, $\bar Q_R^{(\vn)} = \frac{1}{\kappa^{ \onenorm{\vn} }} \mathrm{Re}(\bar Q^{(\vn)})$ and $\bar Q_I^{(\vn)} = \frac{1}{\kappa^{ \onenorm{\vn} }} \mathrm{Im}(\bar Q^{(\vn)})$. 
The inequalities above follow from combing \eqref{eq:conddiffQbarQinfipr} with  Proposition \ref{propcanc2} below using the triangle inequality (recall that $\bar Q^{(\vn)}(\vr)$ is real). 

\begin{proposition}
Suppose that $\vect{0} \in \T$, and let $Q(\vect{0}) = 1$. Then, for $|\vr| \leq \cfar / \N$,
\begin{align*}
  &\abs{\bar Q(\vr) } \leq 1, \quad 
  \bar Q_R(\vr) \geq 0.92982 \\
& \bar Q_R^{(2,0,0)}(\vr)  \leq  -0.8572, \quad 
\abs{\bar Q_I^{(2,0,0)}(\vr) }  \leq  0.1005 \\
&|\bar Q_R^{(1,1,0)}| \leq 0.1967, \quad 
|\bar Q_I^{(1,1,0)}|  \leq 0.0921 \\
&|\bar Q^{(1,0,0)}_R(\vr)|  \leq  0.3756, \quad
|\bar Q^{(1,0,0)}_I(\vr)|  \leq  0.0541.
\end{align*}
\label{propcanc2}
\end{proposition}

The proof of Proposition~\ref{propcanc2} is analogous to corresponding results established for the 1D and 2D case in \cite[Proofs of Lemmas 2.3 and 2.4, Appendix C.2]{candes_towards_2014}, and therefore omitted. 

\end{proof}

\newcommand{\dT}{\mc S} 

\section{\label{sec:proofdiscrete}Proof of discrete results (Theorems \ref{cor:discretesuperresII} and \ref{cor:mimodiscrete})}
\label{discreteProofs}

The following proposition---standard in the theory of compressed sensing (see e.g.,~\cite{candes_robust_2006})---shows that the existence of a certain dual polynomial guarantees that $\mathrm{L1}(\vy)$ in \eqref{eq:l1minmG} succeeds in reconstructing $\vs$. 

\begin{proposition}
Let $\vy= \mA \mF_{\mathrm{grid}} \vs$ and let $\dT$ denote the support of $\vs$. Assume that the columns of $\mA \mF_{\mathrm{grid}}$ indexed by $\dT$ are linearly independent. If there exists a vector $\vv$ in the row space of $\mA \mF_{\mathrm{grid}}$ with 
\begin{align}
\vv_{\dT} = \sign(\vs_{\dT}) \quad \text{and} \quad \infnorm{\vv_{\comp{\dT}}} < 1
\label{eq:dualcerl1}
\end{align}
then $\vs$ is the unique minimizer of $\mathrm{L1}(\vy)$ in \eqref{eq:l1minmG}. 
Here, $\vv_{\dT}$ is the sub-vector of $\vv$ corresponding to the entries indexed by $\dT$. 
\end{proposition}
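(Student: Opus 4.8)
The plan is to reproduce the classical convex-duality argument for $\ell_1$-minimization (standard in compressed sensing, see e.g.~\cite{candes_robust_2006}), adapted to complex coefficients. Write $\mtx{\Phi} \defeq \mA \mF_{\mathrm{grid}}$ for brevity and let $\hat{\vs}$ be an arbitrary feasible point of $\mathrm{L1}(\vy)$, so that $\mtx{\Phi}\hat{\vs} = \vy = \mtx{\Phi}\vs$ and the difference $\vh \defeq \hat{\vs} - \vs$ lies in $\ker(\mtx{\Phi})$; the goal is to show $\onenorm{\hat{\vs}} > \onenorm{\vs}$ unless $\vh = \vect{0}$. First I would exploit the hypothesis that $\vv$ lies in the row space of $\mtx{\Phi}$: writing $\vv = \herm{\mtx{\Phi}}\vw$ for some $\vw$, we get $\innerprod{\vv}{\vh} = \innerprod{\vw}{\mtx{\Phi}\vh} = 0$, and splitting this Hermitian inner product over the support $\dT$ and its complement gives the identity $\innerprod{\vv_{\dT}}{\vh_{\dT}} = -\innerprod{\vv_{\comp{\dT}}}{\vh_{\comp{\dT}}}$.

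The second and main step is the norm estimate. Since $\vs$ is supported on $\dT$, we have $\onenorm{\hat{\vs}} = \onenorm{\vs_{\dT} + \vh_{\dT}} + \onenorm{\vh_{\comp{\dT}}}$. Applying the complex subgradient inequality $\onenorm{\va + \vb} \geq \onenorm{\va} + \Re{\innerprod{\sign(\va)}{\vb}}$ with $\va = \vs_{\dT}$ and using $\vv_{\dT} = \sign(\vs_{\dT})$, this is at least $\onenorm{\vs} + \Re{\innerprod{\vv_{\dT}}{\vh_{\dT}}} + \onenorm{\vh_{\comp{\dT}}}$. Substituting the identity from the first step and then bounding $|\innerprod{\vv_{\comp{\dT}}}{\vh_{\comp{\dT}}}| \leq \infnorm{\vv_{\comp{\dT}}}\onenorm{\vh_{\comp{\dT}}}$ by H\"older's inequality yields
\[
\onenorm{\hat{\vs}} \;\geq\; \onenorm{\vs} + \big(1 - \infnorm{\vv_{\comp{\dT}}}\big)\onenorm{\vh_{\comp{\dT}}} \;\geq\; \onenorm{\vs},
\]
where the last step uses $\infnorm{\vv_{\comp{\dT}}} < 1$. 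Hence $\vs$ is a minimizer, and if $\hat{\vs}$ is any minimizer then equality must hold throughout, forcing $\vh_{\comp{\dT}} = \vect{0}$.

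Finally I would upgrade this to uniqueness: if $\vh_{\comp{\dT}} = \vect{0}$ then $\vh$ is supported on $\dT$ and still satisfies $\mtx{\Phi}\vh = \vect{0}$, i.e.~the columns of $\mtx{\Phi}$ indexed by $\dT$ admit the vanishing linear combination with coefficient vector $\vh_{\dT}$; by the assumed linear independence of those columns, $\vh_{\dT} = \vect{0}$, so $\vh = \vect{0}$ and $\hat{\vs} = \vs$. I do not anticipate a real obstacle here, as the argument is entirely standard; the only step needing slight care is the complex subgradient inequality, where $\sign(\cdot)$ is the entrywise phase and one verifies coordinatewise that $|a_i + b_i| \geq |a_i| + \Re{\conj{\sign(a_i)}\, b_i}$ before summing. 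Everything else is routine bookkeeping.
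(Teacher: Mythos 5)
Your argument is correct and is exactly the standard duality proof the paper alludes to but does not write out (the paper simply cites this as a well-known fact, e.g.\ from \cite{candes_robust_2006}). The decomposition of $\onenorm{\hat{\vs}}$ over $\dT$ and $\comp{\dT}$, the complex subgradient inequality, the use of $\vv \perp \ker(\mtx{\Phi})$, H\"older, and the final linear-independence step for uniqueness are all as in the canonical treatment, and each step is verified correctly.
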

The proof now follows directly from Lemma~\ref{prop:dualpolynomial}, and from the conditions of Lemma~\ref{prop:dualpolynomial} being satisfied by the assumptions of Theorems \ref{thm:incoherence} and \ref{thm:mainresmimo}, 
as shown in Sections \ref{sec:proofthm1} and \ref{sec:proofmimo}. 
To see this, set $\vu = \sign(\vs_{\dT})$ in Lemma~\ref{prop:dualpolynomial} and consider the polynomial $Q(\vr)$ from Lemma~\ref{prop:dualpolynomial}. Define $\vv$ as
 \mbox{$[\vv]_{(n_1,n_2,n_3)} = Q([n_1/\K_1,n_2/\K_2,n_3/\K_3])$} and note that it satisfies \eqref{eq:dualcerl1} since \linebreak\mbox{$Q([n_1/\K_1,n_2/\K_2,n_3/\K_3]) = \sign(\vs_{(n_1,n_2,n_3)})$} for $\vectbr{n_1,n_2,n_3} \in \dT$ and $\abs{Q([n_1/\K_1,n_2/\K_2,n_3/\K_3])} < 1$ for $\vectbr{n_1,n_2,n_3} \notin \dT$. 


\section{Proof of Lemma \ref{lem:denoisinglasso}}
\label{pfmlem}

\newcommand\optz{\vz}
\newcommand\optmu{\mu}

\newcommand\estmu{\tilde \mu}
\newcommand\estz{\tilde \vz}
\newcommand\errm{\xi} 
\newcommand\proj[1]{P_{#1}}

Lemma~\ref{lem:denoisinglasso} is essentially a consequence of results established in \cite{tang_near_2015}, in particular the following proposition. 

\begin{proposition}[{\cite[Thm.~1]{tang_near_2015}}]
\label{thm:denoisinglinespect}
Let $\optz = \sum_{k=1}^\S b_k \vf(\f_k)$, and suppose the $\f_1,\ldots,\f_\S$ obey the minimum separation condition~\eqref{eq:minsepintro1D}. 
Suppose that the dual norm of $\norm[\setA]{\cdot}$, defined as
$\dualnorm[\setA]{\cdot} \defeq \sup_{\f \in [0,1]} \left| \innerprod{\cdot}{\vf(\f)}\right|$, obeys
\begin{align}
\label{eq:conditionhighprob}
\dualnorm[\setA]{\vw} 
\leq  \frac{\lambda}{\eta}.
\end{align}
Then, for $\eta>1$ sufficiently large, the proximity operator in~\eqref{eq:proxopat} with regularization parameter $\lambda = \eta \sigma \sqrt{4\L \log(\L)}$ applied to $\vy = \optz + \vw$ with $\vw \sim \mc N(0, \sigma^2 \mI)$ obeys
\[
\norm[2]{\estz(\optz + \vw, \lambda) - \optz}^2
\leq
c \sigma^2 \S \log \L.
\]
\end{proposition}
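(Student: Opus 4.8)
\textit{Approach.} The statement is \cite[Thm.~1]{tang_near_2015}; the plan is to reproduce that argument, whose only probabilistic input is the bound $\dualnorm[\setA]{\vw}\le\lambda/\eta$ from \eqref{eq:conditionhighprob} and whose only geometric input is the noiseless dual polynomial furnished by the separation condition \eqref{eq:minsepintro1D}. Set $\estz\defeq\estz(\optz+\vw,\lambda)$, $\vh\defeq\estz-\optz$, and let $\vp\defeq(\optz+\vw)-\estz=\vw-\vh$ be the residual. Since $\estz$ minimizes the strongly convex objective in \eqref{eq:proxopat}, first-order optimality gives $\vp/\lambda\in\partial\norm[\setA]{\estz}$, i.e.\ $\dualnorm[\setA]{\vp}\le\lambda$ and $\Re{\innerprod{\vp}{\estz}}=\lambda\norm[\setA]{\estz}$. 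Combining $\Re{\innerprod{\vp}{\optz}}\le\dualnorm[\setA]{\vp}\,\norm[\setA]{\optz}\le\lambda\norm[\setA]{\optz}$ with the identity $\twonorm{\vh}^2=\Re{\innerprod{\vh}{\vw}}-\Re{\innerprod{\vh}{\vp}}$ and $\Re{\innerprod{\vh}{\vp}}=\lambda\norm[\setA]{\estz}-\Re{\innerprod{\vp}{\optz}}\ge\lambda(\norm[\setA]{\estz}-\norm[\setA]{\optz})$ yields the basic inequality
\[
\twonorm{\vh}^2\;\le\;\Re{\innerprod{\vh}{\vw}}\;+\;\lambda\bigl(\norm[\setA]{\optz}-\norm[\setA]{\estz}\bigr).
\]
Bounding $\Re{\innerprod{\vh}{\vw}}\le\dualnorm[\setA]{\vw}(\norm[\setA]{\estz}+\norm[\setA]{\optz})$ already gives the crude rate $\twonorm{\vh}^2\lesssim\lambda\norm[\setA]{\optz}$; the substance of the proof is the improvement to the near-oracle rate $\sigma^2\S\log\L$.

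\textit{Dual polynomial and structural consequences.} By \eqref{eq:minsepintro1D} there is a degree-$\N$ trigonometric polynomial $Q(\f)=\innerprod{\vq}{\vf(\f)}$ --- the one-dimensional instance of the construction recalled in Section~\ref{sec:constdpu}, cf.\ \cite[Prop.~2.1]{candes_towards_2014} --- with $\dualnorm[\setA]{\vq}\le1$, $Q(\f_k)=\sign(b_k)$, $Q'(\f_k)=0$, $|Q(\f)|<1$ off $T\defeq\{\f_1,\dots,\f_\S\}$, together with the quantitative estimates $1-|Q(\f)|\gtrsim\min(1,\N^2\mathrm{dist}(\f,T)^2)$, $|Q'|\lesssim\N$, $|Q''|\lesssim\N^2$; moreover the Gram matrix of $\{\vf(\f_k),\N^{-1}\vf'(\f_k)\}_{k}$ is, after scaling by $\L^{-1}$, within a fixed constant of the identity (again a consequence of \eqref{eq:minsepintro1D}, see \cite{candes_towards_2014}). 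Fix an optimal atomic decomposition $\estz=\sum_l\hat b_l\vf(\hat\f_l)$, so $\norm[\setA]{\estz}=\sum_l|\hat b_l|$, and partition the $\hat\f_l$ into a ``near'' set (within $\rho/\N$ of $T$, $\rho$ a fixed small constant) and a ``far'' set. Since $\dualnorm[\setA]{\vq}\le1$ forces $\norm[\setA]{\optz}=\sum_k|b_k|=\Re{\innerprod{\optz}{\vq}}$ (the interpolation property) and, by the optimal decomposition and $|Q|\le1$, $\norm[\setA]{\estz}\ge\Re{\innerprod{\estz}{\vq}}+\mathcal R$ with $\mathcal R\defeq\sum_l|\hat b_l|(1-|Q(\hat\f_l)|)$, the second term of the basic inequality obeys
\[
\lambda\bigl(\norm[\setA]{\optz}-\norm[\setA]{\estz}\bigr)\;\le\;-\lambda\,\Re{\innerprod{\vh}{\vq}}\;-\;\lambda\,\mathcal R .
\]
By the quadratic-vanishing estimate $\mathcal R\gtrsim\sum_{l\ \mathrm{far}}|\hat b_l|+\N^2\sum_{l\ \mathrm{near}}|\hat b_l|\,\mathrm{dist}(\hat\f_l,T)^2$, so $\mathcal R$ simultaneously controls the total mass on far atoms and the clustering of near atoms around $T$.

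\textit{Self-bounding inequality.} Taylor-expand each near atom about its nearest true frequency, $\vf(\hat\f_l)=\vf(\f_{k(l)})+(\hat\f_l-\f_{k(l)})\vf'(\f_{k(l)})+\tfrac12(\hat\f_l-\f_{k(l)})^2\vf''(\xi_l)$; this writes $\vh$ as a vector $\vh_V$ in the $(\le2\S)$-dimensional space $V=\mathrm{span}\{\vf(\f_k),\vf'(\f_k)\}$ plus a remainder $\vh_1$ built from the far atoms and the quadratic terms. Using the well-conditioned Gram matrix, the $\ell_1$-mass of the coordinates of $\vh_V$ is $\lesssim\sqrt{\S/\L}\,\twonorm{\vh}$; using Bernstein's polynomial inequality on the degree-$\N$ polynomials $\f\mapsto\innerprod{\vf^{(j)}(\f)}{\vw}$ and on $Q^{(j)}$, the remainder contributes $|\innerprod{\vh_1}{\vw}|\lesssim\dualnorm[\setA]{\vw}\,\mathcal R$ and $|\innerprod{\vh_1}{\lambda\vq}|\lesssim\lambda\,\mathcal R$, while $\Re{\innerprod{\vh_V}{\vw-\lambda\vq}}\lesssim(\dualnorm[\setA]{\vw}+\lambda)\sqrt{\S/\L}\,\twonorm{\vh}+(\text{multiples of }\mathcal R)$, the identity $Q'(\f_k)=0$ being what kills the leading coupling of $\vh_V$ with $\lambda\vq$. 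Substituting into the basic inequality and choosing $\eta$ a sufficiently large fixed constant, the $\dualnorm[\setA]{\vw}\,\mathcal R$-type terms are absorbed by $-\lambda\mathcal R$ (this is exactly where $\dualnorm[\setA]{\vw}\le\lambda/\eta$ is used), leaving a self-bounding inequality $\twonorm{\vh}^2\lesssim\lambda\sqrt{\S/\L}\,\twonorm{\vh}$. Hence $\twonorm{\vh}\lesssim\lambda\sqrt{\S/\L}=2\eta\sigma\sqrt{\S\log\L}$, i.e.\ $\twonorm{\vh}^2\le c\sigma^2\S\log\L$ with $c$ depending only on the fixed $\eta$. (Gaussianity of $\vw$ is used only to ensure \eqref{eq:conditionhighprob} holds with high probability and, in the crude estimates on the quadratic Taylor terms, that $\twonorm{\vw}^2\lesssim\sigma^2\L$.)

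\textit{Main obstacle.} The hard part is the third step: the bookkeeping that turns the basic inequality into the self-bounding quadratic. It hinges on (i) the precise quadratic-vanishing and derivative bounds for $Q$ (with the matching second-derivative bound used in $Q$'s own interpolation system), (ii) the well-conditioning on $V$ of the Gram matrix of $\{\vf(\f_k),\N^{-1}\vf'(\f_k)\}$, which is where the minimum separation is truly consumed, and (iii) a careful choice of the near/far threshold $\rho$ and of $\eta$ so that the far-atom and quadratic-remainder contributions are genuinely lower order than $-\lambda\mathcal R$. These estimates constitute the technical core of \cite{tang_near_2015}, and executing them rigorously --- rather than at the level of $\lesssim$ --- is the bulk of the work.
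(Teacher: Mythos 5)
This proposition is not proved in the paper: as the bracketed citation in its header indicates, it is imported from \cite[Thm.~1]{tang_near_2015} and used as a black box in the proof of Lemma~\ref{lem:denoisinglasso}, so there is no in-paper argument to compare against. As a reconstruction of the cited proof, your sketch is faithful: the basic inequality from first-order optimality of the proximity operator, the Cand\`es--Fernandez-Granda dual polynomial with quadratic vanishing off $T$ and its sign-interpolation and derivative bounds, the near/far decomposition controlled by $\mathcal{R}$, the Gram-matrix conditioning on $\mathrm{span}\{\vf(\nu_k),\N^{-1}\vf'(\nu_k)\}$, and the absorption of $\dualnorm[\setA]{\vw}\,\mathcal{R}$-type terms into $-\lambda\mathcal{R}$ via the hypothesis $\dualnorm[\setA]{\vw}\le\lambda/\eta$ are exactly the ingredients of Tang et al.'s argument, and the resulting rate $\twonorm{\vh}\lesssim\lambda\sqrt{\S/\L}=2\eta\sigma\sqrt{\S\log\L}$ squares to the stated bound. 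Your closing paragraph correctly identifies where the real work lies --- turning the $\lesssim$'s into numbers compatible with a fixed $\eta$ and near/far threshold, and controlling $\twonorm{\vh_V}$ relative to $\twonorm{\vh}$ inside the self-bounding loop --- which is precisely the technical content the present paper delegates to the citation rather than reproducing.
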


Since~\eqref{eq:conditionhighprob} holds with high probability~\cite[App.~C]{bhaskar_atomic_2012} for our choice of $\lambda$, Theorem~\ref{thm:denoisinglinespect} provides an error bound that holds with high-probability. 
However, we need to characterize the expectation of the estimation error, thus slightly more work is required to establish Lemma~\ref{lem:denoisinglasso}. 

For notational convenience, we set $\estz = \estz(\optz + \vw, \lambda)$ and let $\ve \defeq \optz - \estz$ be the error vector. 
Note that 
\begin{align}
\EX{\norm[2]{\ve}^2}
= 
\EX{\norm[2]{\ve}^2 \ind{ \dualnorm[\setA]{\vw} < \lambda/\eta } }
+
\EX{\norm[2]{\ve}^2 \ind{ \dualnorm[\setA]{\vw} \geq \lambda/\eta } }.
\end{align}
The lemma follows by upper bounding the terms on the RHS. 
The first term is smaller than $c \sigma^2 \S \log \L$ by Proposition~\ref{thm:denoisinglinespect}. 
We next upper bound the second term. 
By the Cauchy-Schwarz inequality, we have
\begin{align}
\EX{\norm[2]{\ve}^2 \ind{ \dualnorm[\setA]{\vw} \geq \lambda/\eta } }
&\leq
\sqrt{ \EX{\norm[2]{\ve}^4} }
\sqrt{ \PR{\dualnorm[\setA]{\vw} \geq \lambda/\eta } } 
\leq 
c \sigma^2 \log(\L),
\end{align}
where the last inequality follows from Lemma~\ref{lem:largenoisebound}, which provides a crude upper bound on $\EX{\norm[2]{\ve}^2}$, and Lemma~\ref{lem:prwbound}, which states an upper bound on the probability for the atomic norm of $\vw$ being large. 
\begin{lemma}
\label{lem:largenoisebound}
\[
\sqrt{\EX{\norm[2]{\ve}^4}}
\leq c \sigma^2 \L \log(\L).
\]
\end{lemma}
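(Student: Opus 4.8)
The plan is to avoid any delicate analysis of the proximity operator and instead extract a clean \emph{deterministic} pointwise bound $\twonorm{\ve} \le \lambda + \twonorm{\vw}$ directly from the optimality conditions of the proximal problem, and then simply take fourth moments. Concretely, recall that $\estz = \lassoest{\optz + \vw}{\lambda}$ is the minimizer of $\bar\vz \mapsto \tfrac12\twonorm{\optz + \vw - \bar\vz}^2 + \lambda\norm[\setA]{\bar\vz}$, so its first-order optimality condition reads $\optz + \vw - \estz \in \lambda\,\partial\norm[\setA]{\estz}$. Since the subdifferential of any norm is contained in the unit ball of its dual norm, this gives $\dualnorm[\setA]{\optz + \vw - \estz} \le \lambda$, where $\dualnorm[\setA]{\vp} = \sup_{\nu\in[0,1]}\abs{\innerprod{\vp}{\vf(\nu)}}$ is exactly the dual norm used in Proposition~\ref{thm:denoisinglinespect}.

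The one structural fact I would isolate is that $\dualnorm[\setA]{\vp} \ge \twonorm{\vp}$ for every $\vp \in \complexset^\L$. This holds because $\nu \mapsto \innerprod{\vp}{\vf(\nu)}$ is a trigonometric polynomial whose squared $L^2([0,1])$-norm equals $\twonorm{\vp}^2$ by orthonormality of the exponentials $\{e^{i2\pi n\nu}\}_{n=-\N}^{\N}$, hence its supremum over $\nu$ is at least $\twonorm{\vp}$. Combining this with the previous display yields $\twonorm{\optz + \vw - \estz} \le \dualnorm[\setA]{\optz + \vw - \estz} \le \lambda$, and therefore, by the triangle inequality, $\twonorm{\ve} = \twonorm{\optz - \estz} \le \twonorm{\optz + \vw - \estz} + \twonorm{\vw} \le \lambda + \twonorm{\vw}$.

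It then remains to raise this to the fourth power: $\twonorm{\ve}^4 \le 8\lambda^4 + 8\twonorm{\vw}^4$, take expectations, and substitute $\lambda = c\sigma\sqrt{\L\log\L}$ together with the standard Gaussian moment identity $\EX{\twonorm{\vw}^4} = \sigma^4(\L^2 + 2\L) \le 3\sigma^4\L^2$ for $\vw \sim \mc N(\vect 0,\sigma^2\mI)$. This gives $\EX{\twonorm{\ve}^4} \le c'\sigma^4\L^2\log^2\L$ (using $\L \geq 512$ so that $\log^2 \L \ge 1$), and taking square roots finishes the proof. There is really no serious obstacle here; the only point one must be a little careful about is that the bound has to be independent of the (possibly very large) amplitudes $b_k$ of $\optz$, which is automatic above precisely because the optimality condition controls the residual $\optz + \vw - \estz$ rather than the size of $\optz$ itself — a naive comparison argument plugging $\bar\vz = \optz$ (or $\bar\vz = \vect 0$) into the objective would instead produce a spurious $\norm[\setA]{\optz}$ term.
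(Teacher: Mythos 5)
Your proof is correct, and it follows a genuinely different and cleaner route than the paper. The paper establishes the pointwise bound $\twonorm{\ve}^2 \leq 8\twonorm{\vw}^2 + c\lambda^2\S/\L$ by comparing the value of the proximal objective at $\estz$ and at $\optz$ --- which introduces the term $\lambda\norm[\setA]{\optz}$ --- and then cancelling that term against the on-support part of $\norm[\setA]{\estz}$ using the Cand\`es--Fernandez-Granda dual certificate $Q$; the minimum-separation condition is needed to produce $Q$, and the coefficient bound $\twonorm{\vq}\leq c\sqrt{\S/\L}$ from \cite{tang_near_2015} is invoked at the end. You instead read $\dualnorm[\setA]{\optz+\vw-\estz}\leq\lambda$ directly off the first-order optimality condition of the proximal map, and combine it with the elementary fact $\dualnorm[\setA]{\vp}\geq\twonorm{\vp}$ --- correct, since $\nu\mapsto\innerprod{\vp}{\vf(\nu)}$ has squared $L^2([0,1])$-norm equal to $\twonorm{\vp}^2$ by Parseval, and a supremum over $[0,1]$ dominates the corresponding mean --- to get $\twonorm{\optz+\vw-\estz}\leq\lambda$ and hence $\twonorm{\ve}\leq\lambda+\twonorm{\vw}$. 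This bypasses the representing-measure and dual-certificate machinery entirely and, notably, does not even use the minimum-separation hypothesis. The pointwise bound you obtain, $\twonorm{\ve}^2\leq 2\lambda^2+2\twonorm{\vw}^2$, is weaker than the paper's by roughly a factor of $\L/\S$, but that is immaterial here: Lemma~\ref{lem:largenoisebound} is only a crude auxiliary estimate used to control the error on a low-probability event, while the sharp separation-dependent bound is supplied separately by Proposition~\ref{thm:denoisinglinespect}. Your concluding fourth-moment computation, using $(a+b)^4\leq 8(a^4+b^4)$ and the $\chi^2$ moment identity for $\twonorm{\vw}^4$, is also fine.
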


\begin{lemma}
\label{lem:prwbound}
\[
\PR{\dualnorm[\setA]{\vw} \geq \lambda/\eta } 
\leq 
\frac{1}{\L^2}.
\]
\end{lemma}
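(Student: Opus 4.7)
The plan is to control the supremum of the Gaussian process
\[
X(\nu) \defeq \innerprod{\vw}{\vf(\nu)} = \sum_{k=-\N}^{\N} \bar{w}_k e^{i2\pi k \nu},
\quad \nu \in [0,1],
\]
by combining a union bound over a sufficiently fine grid with Bernstein's polynomial inequality to pass from the grid to the whole interval. Recall that $\lambda/\eta = \sigma\sqrt{4\L\log \L}$, so the target tail bound is on the event $\{\sup_{\nu\in[0,1]} |X(\nu)| \geq \sigma\sqrt{4\L\log \L}\}$.

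First, for each fixed $\nu \in [0,1]$, the random variable $X(\nu)$ is a zero-mean (complex) Gaussian whose real and imaginary parts each have variance at most $\sigma^2 \L$. A standard Gaussian tail bound then yields
\[
\PR{|X(\nu)| \geq t\sigma\sqrt{\L}} \leq 4 e^{-t^2/2}.
\]
Second, note that $X(\nu)$ is a trigonometric polynomial of degree $\N$, so by Bernstein's polynomial inequality (Proposition \ref{prop:bernstein}),
\[
\sup_{\nu\in[0,1]} |X'(\nu)| \leq 2\pi \N \sup_{\nu \in [0,1]} |X(\nu)|.
\]

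Now pick a uniform grid $\Omega \subset [0,1]$ of size $|\Omega| = c\L^{5/2}$ with spacing at most $1/(c\L^{5/2})$. By a union bound combined with the Gaussian tail bound above with $t = \sqrt{4\log \L - 2\log(4 |\Omega|)} \cdot$ (the choice is such that the bound below matches), one obtains $\sup_{\nu_g \in \Omega} |X(\nu_g)| \leq \tfrac{1}{2}\sigma\sqrt{4\L\log \L}$ with probability at least $1 - 1/(2\L^2)$. Conditional on this event, pick any $\nu \in [0,1]$ and let $\nu_g \in \Omega$ be its nearest grid point; the mean value theorem combined with Bernstein's inequality applied on the same event gives
\[
|X(\nu)| \leq |X(\nu_g)| + |\nu-\nu_g|\sup_{\nu'} |X'(\nu')|
\leq |X(\nu_g)| + \tfrac{2\pi\N}{c\L^{5/2}}\sup_{\nu'} |X(\nu')|.
\]
Taking sup over $\nu$ on the left and rearranging (using $2\pi\N/(c\L^{5/2}) \leq 1/2$ for $c$ a fixed constant and $\L$ large) yields $\sup_\nu |X(\nu)| \leq 2\sup_{\nu_g\in\Omega} |X(\nu_g)| \leq \sigma\sqrt{4\L\log \L} = \lambda/\eta$, as required.

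The main obstacle is balancing the grid cardinality against the Gaussian tail: the grid must be fine enough (polynomially in $\L$) that the Bernstein step yields only a constant-factor loss, yet coarse enough that the $\log |\Omega|$ overhead in the union bound is still absorbed by the $\sqrt{4\L\log \L}$ threshold, leaving $1/\L^2$ slack. Since $\log|\Omega| = O(\log \L)$ for any polynomially-sized grid, this is essentially automatic, and the numerical constants can be tuned to match the claimed bound exactly. (An alternative route would be a direct application of Dudley's inequality or Talagrand's tail bound for Gaussian processes indexed by $\{\vf(\nu)\colon \nu\in[0,1]\}$, as in \cite[App.~C]{bhaskar_atomic_2012}, which gives the same estimate with slightly cleaner constants.)
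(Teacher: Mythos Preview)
Your strategy---Gaussian tails at grid points plus Bernstein's inequality to pass to the continuum---is exactly the paper's. The paper simply takes what you call the ``alternative route'': it quotes from \cite[App.~C]{bhaskar_atomic_2012} the inequality $\dualnorm[\setA]{\vw}\le(1+4\pi)\max_{0\le m<\L}|Z_m|$ with $Z_m=\sum_\ell w_\ell e^{i2\pi\ell m/\L}$, so a grid of only $\L$ points already suffices, and then union-bounds using the tail $\PR{|Z_m|\ge\sigma\sqrt{\L}\beta}\le 2e^{-\beta^2}$ at $\beta=\sqrt{4\log\L}$ to get $2\L\cdot\L^{-4}\le\L^{-2}$.

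Your own execution has a quantitative gap. The threshold $\lambda/\eta=\sigma\sqrt{4\L\log\L}$ is \emph{fixed} by the lemma, so you cannot ``tune constants'' after the fact. After your bootstrap $\sup_\nu|X|\le 2\sup_\Omega|X|$ you must control the grid maximum at half the threshold, i.e.\ at $\sigma\sqrt{\L\log\L}$; your tail $4e^{-t^2/2}$ with $t=\sqrt{\log\L}$ then gives only $4\L^{-1/2}$ per point, and union over $|\Omega|=c\L^{5/2}$ points blows up to order~$\L^{2}$ (equivalently, your suggested $t=\sqrt{4\log\L-2\log(4|\Omega|)}$ is imaginary). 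The factor~$2$ is actually gratuitous---with spacing $1/(c\L^{5/2})$ the Bernstein correction is $O(\L^{-3/2})$, so the bootstrap constant is $1+o(1)$---but even at the full threshold your conservative tail $e^{-t^2/2}$ yields only $\asymp\L^{-2}$ per point, leaving no slack for a union over any growing grid. The paper's route closes because on the DFT grid $m/\L$ the real and imaginary parts of $Z_m$ each have variance $\sigma^2\L/2$, giving the sharper tail $e^{-\beta^2}$ that makes the arithmetic work.
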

It remains to prove Lemmas~\ref{lem:largenoisebound} and \ref{lem:prwbound}. 


\begin{proof}[Proof of Lemma~\ref{lem:largenoisebound}]
We show below that 
\begin{align}
\label{eq:deterrorb}
\norm[2]{\ve}^2 
\leq 
8\norm[2]{\vw}^2
+ c \lambda^2 \frac{\S}{\L}
=
8\norm[2]{\vw}^2
+ c \eta^2 \sigma^2 \log(\L) \S.
\end{align}
It follows that 
\[
\EX{\norm[2]{\ve}^4}
\leq 
64 \EX{\norm[2]{\vw}^4}
+ 16 \EX{\norm[2]{\vw}^2} \eta^2 \sigma^2 4 \L \log(\L) 
+ c^2 \eta^4 \sigma^4 \L^2 \log^2(\L) 
\leq
c' \eta^4 \sigma^4 \L^2\log^2(\L) .
\]
Here, we used that $\norm[2]{\vw}^2$ is $\chi^2$ distributed and the second moment of a $\chi^2$ random variable with $\L$ degrees of freedom is $\L(\L+2)$. 
This proves the lemma. 

It remains to prove~\eqref{eq:deterrorb}. 
First note that $\optz = \int_0^1 \vf(\f) \optmu(d\f)$, where $\optmu$ is the measure defined as $\optmu(\f) = \sum_{k=1}^\S b_k \delta(\f - \f_k)$, where $\delta$ is the Dirac measure. 
Next, let $\estmu$ be the representing measure of the estimate $\estz$ with minimum total variation norm (for a definition of the total variation norm, denoted by $\norm[TV]{\cdot}$ see~\cite[App.~A]{candes_towards_2014}), and note that $\norm[TV]{\estmu} = \norm[\setA]{\estz}$. 
By optimality of $\estz$, we have 
\[
\frac{1}{2}\norm[2]{\vy - \estz}^2 + \lambda \norm[TV]{\estmu}
\leq
\frac{1}{2}\norm[2]{\vy - \optz}^2 + \lambda \norm[TV]{\optmu}. 
\]
Denote by $\proj{\T}(\estmu)$ the projection of the measure $\estmu$ on the support set of $\optmu$, i.e., $\T = \{\f_1,\ldots,\f_\S\}$, likewise $\proj{\comp{\T}}(\estmu)$ is the projection onto the complement of $\T$. 
With this notation and $\vy = \optz + \vw$, the former inequality is equivalent to 
\begin{align}
\label{eq:boundy}
\frac{1}{2}\norm[2]{\ve}^2 
+ \lambda (\norm[TV]{\proj{\T}(\estmu)} + \norm[TV]{\proj{\comp{\T}}(\estmu)})
\leq \innerprod{\ve}{\vw} + \lambda \norm[TV]{\optmu}.
\end{align}
We next lower bound $\norm[TV]{\proj{\T}(\estmu)}$. 
To this end, note that Proposition 2.1 in \cite{candes_towards_2014} guarantees that there exists a polynomial $Q(\f) = \innerprod{\vq}{\vf(\f)}$ that interpolates the sign pattern of $\optmu$ and obeys $|Q(\f)| \leq 1$. 
Let $\errm \defeq \estmu - \optmu$ be the difference measure. 
Since $Q$ interpolates the sign of $\optmu$,
\[
\norm[TV]{\optmu} + \int_0^1 Q(\f) \proj{\T}(\errm)(d\f) 
\leq
\norm[TV]{\proj{\T}(\estmu)}.
\]
Using this inequality in~\eqref{eq:boundy} and rearranging terms yields
\begin{align*}
\frac{1}{2}\norm[2]{\ve}^2 
+ \lambda \norm[TV]{\proj{\comp{\T}}(\estmu)}
&\leq 
\innerprod{\ve}{\vw} - \lambda \int_0^1 Q(\f) \proj{\T}(\errm)(d\f) \\
&= 
\innerprod{\ve}{\vw} - \lambda \int_0^1 Q(\f) \errm(d\f) 
+
\lambda \int_0^1 Q(\f) \proj{\comp{\T}}(\errm)(d\f) \\
&= 
\innerprod{\ve}{\vw} - \lambda \innerprod{\ve}{\vq} 
+
\lambda \int_0^1 Q(\f) \proj{\comp{\T}}(\errm)(d\f) \\
&\leq 
\innerprod{\ve}{\vw} - \lambda \innerprod{\ve}{\vq} 
+
\lambda \norm[TV]{\proj{\comp{\T}}(\errm) }.
\end{align*}
where the penultimate inequality follows from $Q(\f) = \innerprod{\vq}{\vf(\f)}$ and Parseval's identity, 
and the last inequality holds by $|Q(\f)|\leq 1$ .
We have shown that 
\begin{align}
\label{eq:firstebound}
\frac{1}{2}\norm[2]{\ve}^2 
&\leq 
\innerprod{\ve}{\vw} - \lambda \innerprod{\ve}{\vq}. 
\end{align}
Next, using the inequality $ab \leq \frac{1}{2}(a^2 + b^2)$ with $a = \norm[2]{\ve}/(2\sqrt{\lambda}), b = 2 \sqrt{\lambda}\norm[2]{\vq}$ we obtain 
\[
-\lambda 
\innerprod{\ve}{\vq}
\leq 
\lambda \norm[2]{\ve} \norm[2]{\vq}
\leq 
\frac{1}{8} \norm[2]{\ve}^2  + 2 \lambda^2 \norm[2]{\vq}^2.
\]
Using this in~\eqref{eq:firstebound} yields
\begin{align*}
\frac{3}{8}\norm[2]{\ve}^2 
&\leq 
\innerprod{\ve}{\vw} 
+
2 \lambda^2 \norm[2]{\vq}^2. \\
&\leq 
\norm[2]{\ve} \norm[2]{\vw}
+
2 \lambda^2 \norm[2]{\vq}^2.
\end{align*}
Application of $ab \leq \frac{1}{2}(a^2 + b^2)$ once again yields $\norm[2]{\ve} \norm[2]{\vw} \leq \frac{1}{8}\norm[2]{\ve}^2  + 2 \norm[2]{\vw}^2$ which in turn yields 
\[
\frac{1}{4}\norm[2]{\ve}^2 
\leq 
2 \norm[2]{\vw}^2
+
2 \lambda^2 \norm[2]{\vq}^2.
\]
By Parseval's theorem, and H\"older's inequality we have 
\[
\norm[2]{\vq}^2 
= \innerprod{Q(\f)}{Q(\f)} 
\leq \norm[1]{Q} \norm[\infty]{Q}
\leq \frac{c \S}{\L}, 
\]
where the last inequality follows from $\norm[1]{Q} \leq \frac{c \S}{\L}$ \cite[Lem.~4]{tang_near_2015}. 
This concludes the proof of inequality~\eqref{eq:deterrorb}. 
\end{proof}


\begin{proof}[Proof of Lemma~\ref{lem:prwbound}]
As shown in~\cite[C.4, with $N=n$]{bhaskar_atomic_2012}, a consequence of  Bernstein's polynomial inequality is that 
\[
\dualnorm[\setA]{\vw}
\leq
\left( 1 + 4\pi \right) \max_{m = 0,\ldots,\L-1} 
\underbrace{
\left| \sum_{\ell=0}^{\L-1} w_\ell e^{i2\pi \ell (m/\L) } \right|
}_{Z_m}
.
\]
The random variable $Z_m$ is stochastically upper bounded by a Gaussian random variable with variance $\sigma^2 \L$. 
It follows that, for any $\beta > 1/\sqrt{2\pi}$,  
\[
\PR{ |Z_m| \geq \sigma \sqrt{\L} \beta }
\leq 
2e^{-\beta^2}.
\]
The proof is concluded by noting that, by the union bound
\[
\PR{\dualnorm[\setA]{\vw} \geq \sigma \sqrt{\L} \sqrt{4\log(\L)}}
\leq
2 \L e^{-4 \log(\L)} \leq \frac{1}{\L^2}.
\]
\end{proof}

\end{document}